\def\Gauss{{ \mathrm{N} }}
\def\T{\mathrm{\scriptscriptstyle{T}}}
\def\ind{\mathbbm{1}}
\newcommand{\norm}[1]{\left\lVert#1\right\rVert}
\newtheorem{theorem}{Theorem}[section]
\newtheorem{lemma}[theorem]{Lemma}
\newtheorem{proposition}[theorem]{Proposition}
\date{}
\title{Bayesian semiparametric long memory models for discretized event data}
\author[1]{Antik Chakraborty\thanks{antik.chakraborty@duke.edu}}
\author[2]{Otso Ovaskainen\thanks{otso.ovaskainen@helsinki.fi}}
\author[1]{David B. Dunson \thanks{dunson@duke.edu}}
\affil[1]{Department of Statistical Sciences, Duke University}
\affil[2]{Research Center for Ecological Change, University of Helsinki}
\begin{document}
\maketitle
\begin{abstract}
\noindent We introduce a new class of semiparametric latent variable models for long memory discretized event data.  The proposed methodology is motivated by a study of bird vocalizations in the Amazon rain forest; the timings of vocalizations exhibit self-similarity and long range dependence ruling out models based on Poisson processes.  The proposed class of FRActional Probit (FRAP) models is based on thresholding of a latent process consisting of an additive expansion of a smooth Gaussian process with a fractional Brownian motion. We develop a Bayesian approach to inference using Markov chain Monte Carlo, and show good performance in simulation studies.  Applying the methods to the Amazon bird vocalization data, we find substantial evidence for self-similarity and non-Markovian/Poisson dynamics. To accommodate the bird vocalization data, in which there are many different species of birds exhibiting their own vocalization dynamics, a hierarchical expansion of FRAP is provided in Supplementary Materials.
\end{abstract}
{\small \textsc{Keywords:} {\em fractional Brownian motion; fractal; latent Gaussian process models; long range dependence; nonparametric Bayes; probit; time series.}}

\section{Introduction}

\label{sec:intro}
Event data are often obtained in a discretized form in environmental and ecological applications. Instead of recording exact times of event occurrence, one records whether or not at least one event occurred within each interval.  Such data can potentially be treated as a discrete time series \citep{tiao1976some, stern1984model} ignoring the underlying continuous time process that generated the events. While this simplification may be more amenable to standard time series analysis, it is often desirable to provide a self-explanatory stochastic model that is capable of capturing the temporal dynamics of the underlying event generating process \citep{davison1996some}. 

In \cite{davison1996some, ramesh2013multi} the authors use a Markov modulated Poisson process (MMPP) \citep{fischer1993markov} for the discretized events. Event intensities of an MMPP are directed by the states of an independently evolving continuous time Markov process whose different states correspond to different rates of events. \cite{davison1996some} derived expressions for the likelihood of the observed binary series for an MMPP using Chapman-Kolmogorov equations of a continuous time Markov chain. They proposed a maximum likelihood approach for inference on the model parameters, which include the instantaneous transition rate matrix of the continuous time Markov chain and the Poisson rates corresponding to each state of the chain. They also show that the autocorrelation function of the binary time series generated by an MMPP exhibits a geometric decay. \cite{fearnhead2006exact} proposed a Gibbs sampling algorithm for Bayesian inference.

The geometric rate of decay in autocorrelations of an MMPP makes it inapplicable to model time series with slower decay in autocorrelations. This is true for time series where the dependence structure is non-Markovian; a special class of time series that has non-Markovian dependence and is a focus in this article is known as long range dependent series.
Roughly speaking, a time series is long range dependent if its autocovariance function decays like a power function. Long range dependence has been encountered in time series data from a large variety of fields including hydrology \citep{hurst1951long}, finance \citep{lo1989long}, network traffic \citep{willinger2003long}, and climatology \citep{franzke2020structure} among others.
A natural extension of the MMPP to accommodate long range dependence is the fractional Poisson process \citep{laskin2003fractional}. However, likelihood computation of discretized data obtained from a fractional Poisson process is not straightforward.

In seminal work, \cite{mandelbrot1968fractional} introduced fractional Brownian motion, a generalization of standard Brownian motion, and showed that the increments of this process are stationary and exhibit long range dependence. The general definition of fractional Brownian motion is a stochastic integral with respect to a standard Brownian motion where the order of integration is defined by a parameter $H \in (0,1)$.  \cite{mandelbrot1968fractional} referred to $H$ as the Hurst parameter after the hydrologist Harold Hurst who discovered long range dependence in time series while studying storage capacities of dams on the Nile river.
\cite{mandelbrot1968fractional} also established that the fractional Brownian motion is a self-similar stochastic process with no characteristic time-scale \citep{graves2014brief}. Intuitively, self-similar processes retain statistical properties over different time scales and when their increments are stationary, they exhibit long range dependence. 

For discretized events, the intensity of the latent counting process determines the correlation structure of the binary time series. If the binary series is long range dependent, then an inhomogeneous Poisson process with fixed intensity $\lambda(t)$ is insufficient to explain the observed data, as it implies that increments in disjoint time intervals are independent. Furthermore, \cite[Chapter 2]{beran2016long} showed that a doubly stochastic Poisson process with random intensity $\lambda(t)$ is long range dependent if and only if $\lambda(t)$ is long range dependent. Refer to \cite{samorodnitsky2007long, pipiras2017long} for reviews on long range dependence and self-similarity.

In this article, we propose a latent semiparametric framework to model long range dependent discretized event data via a 
FRActional Probit (FRAP) model. The FRAP model assumes a latent stochastic process responsible for generating the events of interest. Positive values of the process within a time interval imply one or more event occurrences within that interval. By setting the latent process as the fractional Brownian motion parameterized by the Hurst coefficient, we show the FRAP model is able to capture long range dependence of the discretized events. By varying the Hurst coefficient within $(0,1)$, the spectrum of the model encompasses anti-persistence when $H \in (0, 1/2)$, independence for $H = 1/2$ and long range dependence when $H \in (1/2, 1)$. Moreover, we also include a nonparametric trend component in our model to account for  non-stationarity of event occurrences. The proposed framework accommodates testing of long range dependence in the data by comparing
$H_0: H = 0.5$ versus $H_1: H > 0.5$. 
We define a Bayesian approach to inference using a Gaussian process prior for the nonparametric trend.  A Markov chain Monte Carlo (MCMC) sampling algorithm is proposed relying on sampling the latent process.  

The rest of the article is organized as follows.
In Section \ref{sec:primary_analysis} we introduce the motivating Amazon bird vocalization data, including exploratory analyses revealing possible long range dependence. Section \ref{sec:FRAP} is dedicated to the development and analysis of the FRAP model. 
Section \ref{sec:frac_prob_simulation} contains simulation experiments evaluating the proposed approach, and Section \ref{sec:amazon_frap} analyzes the Amazon data.
In 
the Supplementary Materials, 
we extend the FRAP model to allow multiple types of events through a grade-of-membership model and provide details on prior specification and posterior computation.

\section{Amazon bird vocalization data}\label{sec:primary_analysis}
Bird songs play a major role in mate selection and thus have a pronounced impact on their population dynamics \citep{slabbekoorn2002bird}. Identifying birds based on their vocalizations is a widely used method for estimating bird population sizes and following population trends over time, and automated acoustic monitoring is increasingly used in both ecological studies and in conservation \citep{laiolo2010emerging}. Bird songs are well known to follow a circadian pattern in that they sing most intensely early in the morning and late in the day \citep{krebs1983dawn}. 

We are motivated by an Amazon bird vocalization data set containing observations from the years 2010 to 2014. Audio monitoring devices were placed at different locations throughout the Amazon rain forest. Using the methods of \cite{ovaskainen2018animal}, these recordings were converted to discretized binary time series \citep{de2019spatio} containing 0-1 indicators of which species vocalized at least once in one minute time intervals for a 180 minute period staring at sunrise. A visual depiction of the binary sequence of vocalizations for the bird species Automolus ochrolaemus is provided in Figure \ref{fig:audio_recording_data}.
Based on the audio recordings, it is not possible to reliably distinguish different individual birds of the same species or to infer the number of birds vocalizing.  We focus on three locations which are similar in habitat and close in latitude and longitude. Our data consist of recordings for 15 relatively common bird species. For each species we have about 5 to 10 days of recordings during the months of June to September with recordings starting typically around 5:15 AM. On average, a given species vocalized in 25-30 out of the 180 intervals. 

\begin{figure}
   \centering
    \includegraphics[height = 6cm, width = 8cm]{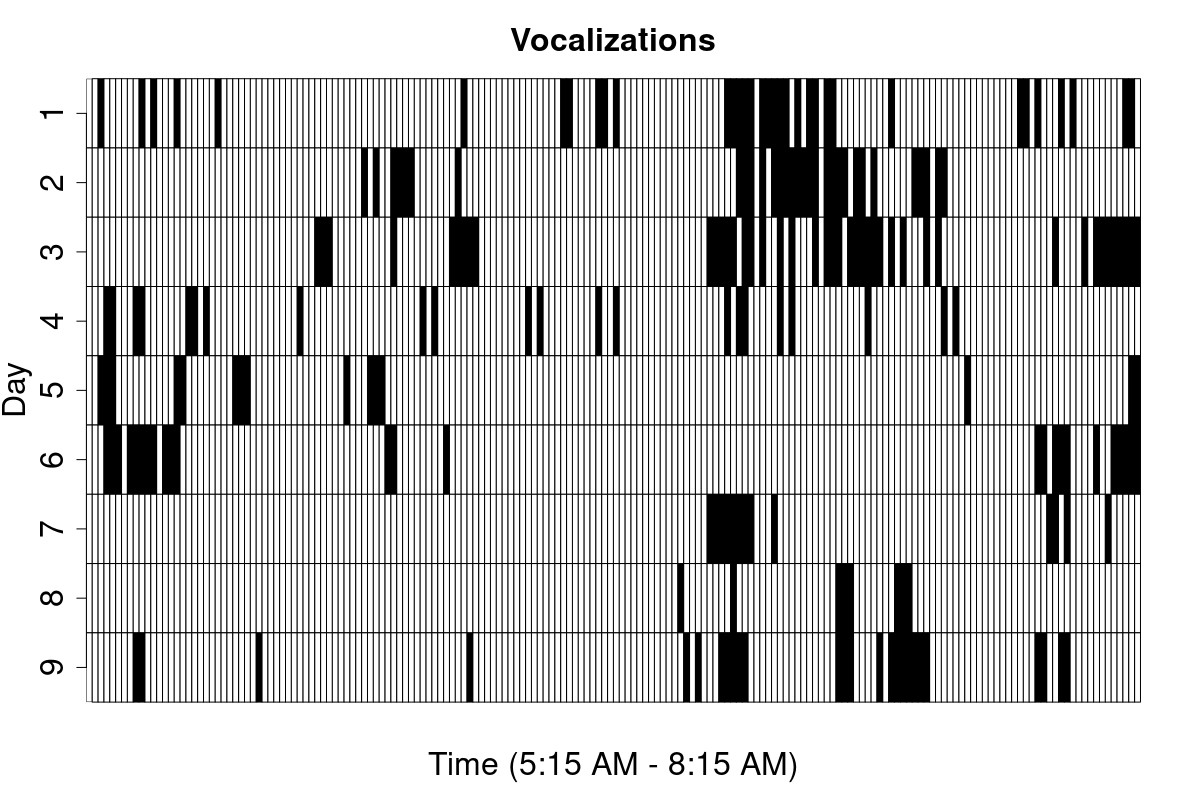}
    \caption{Binary sequence of all vocalizations of birds from the Automolus ochrolaemus species, during 9 days of recording. White and black grids represent absence or presence of vocalizations, respectively.}
    \label{fig:audio_recording_data}
\end{figure}


Our analysis focuses on two characteristics of the bird vocalization dynamics.
First, we are interested in the distribution of duration of bird song activity and inactivity - in particular, our results indicate that the duration cannot be adequately modeled by the exponential distribution. In the context of event data, exponential inter-event times are routinely assumed for mathematical and computational simplicity. However, many naturally occurring events, such as earthquakes \citep{ogata1991some}, landscape evolution \citep{weymer2018statistical}, and human brain activity \citep{tagliazucchi2013breakdown}, have been shown not to follow such patterns. We are also interested in identifying time periods when birds are more likely to sing and recovering groups of bird species that have similar singing patterns.

\begin{figure}
    \centering
    \includegraphics[height = 6cm, width = 0.9\textwidth]{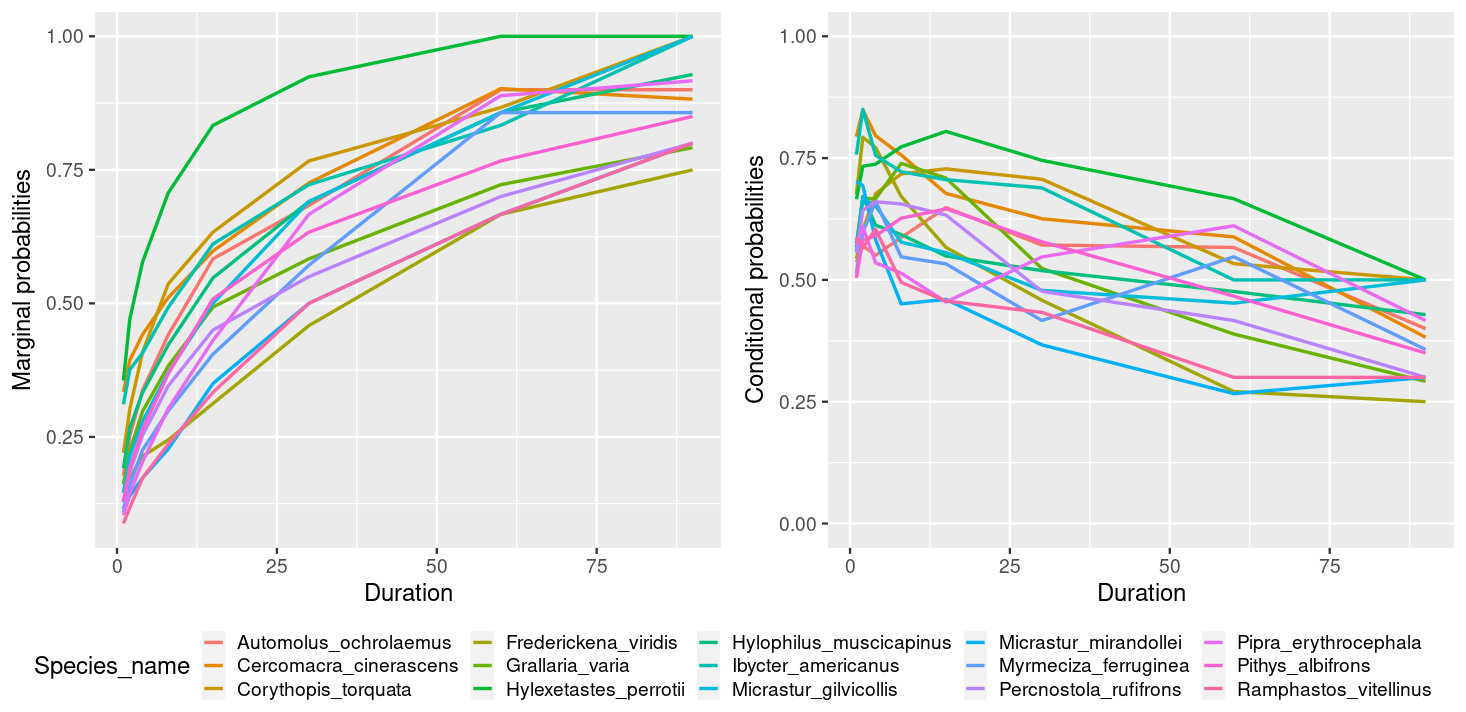}
    \caption{Marginal (left panel) and conditional (right panel) probabilities of bird vocalizations for 15 different species at different time scales $\Delta t = \{1,2,4, 9, 15, 30, 60, 90\}$.}
    \label{fig:marginal_and_conditional_probabilites}
\end{figure}

Define the marginal probability of vocalization for a given time interval of length $\Delta t$ to be the probability of observing at least one vocalization when a time interval of this length is selected at random. In the left panel of Figure \ref{fig:marginal_and_conditional_probabilites} we show the marginal probabilities of a vocalization during minute intervals of length $\Delta t = \{1,2,4, 9, 15, 30, 60, 90\}$ for 15 different bird species. On the right panel of Figure \ref{fig:marginal_and_conditional_probabilites}, we show the probabilities of vocalizations conditioned on the event that the bird vocalized in the previous interval of the same length. Quite naturally the marginal probabilities show an increasing pattern with the length of intervals. However, there is very little variation in the conditional probabilities with changes in $\Delta t$. Such scaling of summary statistics is commonly encountered in self-similar stochastic processes \citep{pipiras2017long}. Additionally, the distance autocorrelations \citep{zhou2012measuring} and the periodogram of the binary series for one day of recording for the species Corythopis torquata is displayed in Figure \ref{fig:dacf_periodogram}. The distance autocorrelation is a popular alternative to the standard autocorrelation function for investigating non-linear dependence structures and thus is more suitable for the binary time series data presented here. The slow decay in the distance autocorrelation and the spikes in the spectrum for small frequencies indicate potential long range dependence in the data.

\begin{figure}
    \centering
    \includegraphics[height = 6cm, width = 1\textwidth]{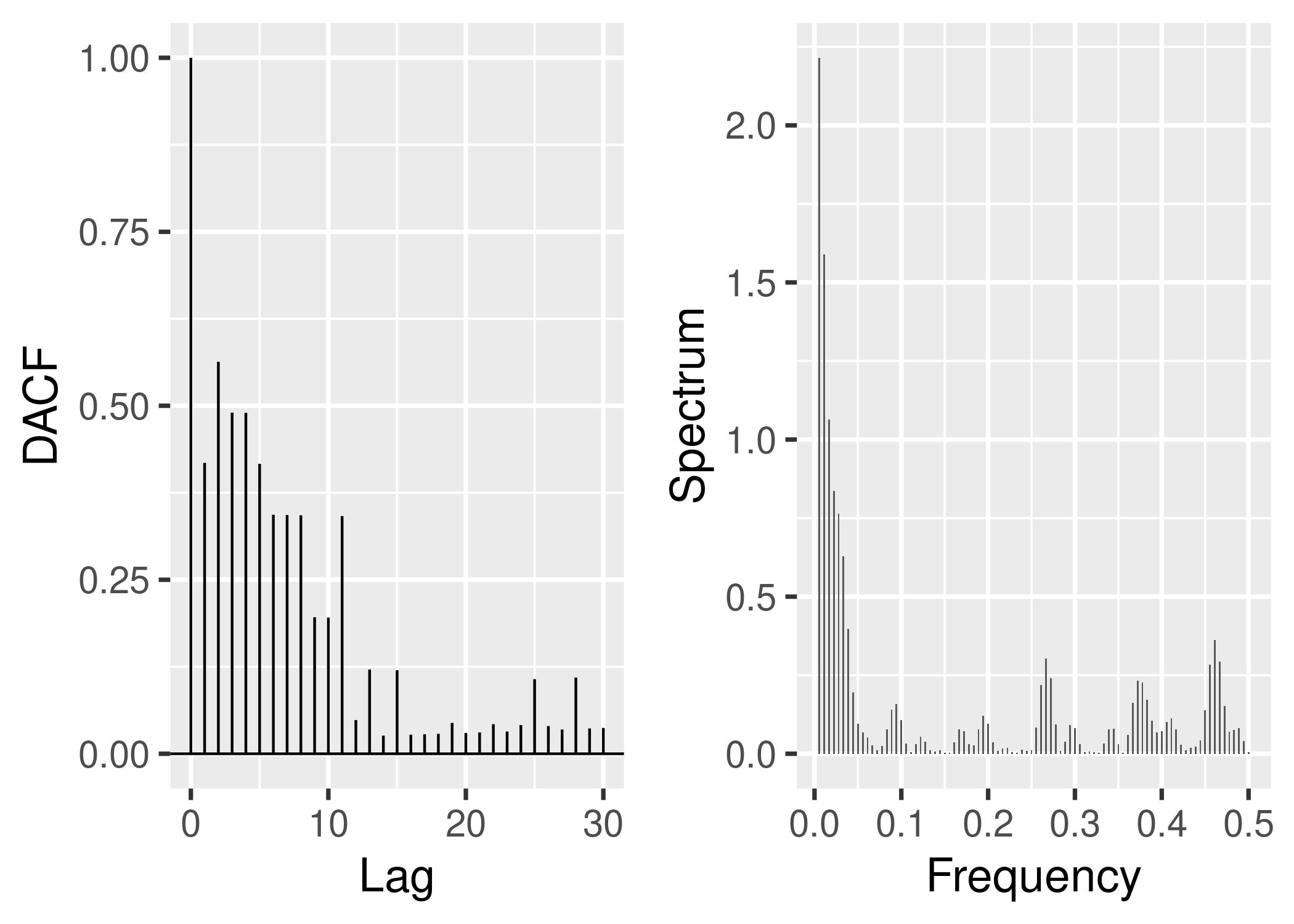}
    \caption{Distance autocorrelation (left panel) at different lags for the binary indicators of vocalizations for the species Corythopis torquata. On the right panel, the periodogram for the same time series is shown.}
    \label{fig:dacf_periodogram}
\end{figure}

We will use the notation $X(t)$ for the stochastic process $\{X_t\}_{t \in \mathbb{R}}$. A stochastic process $X(t)$ is said to be self-similar if for any $c>0$ we have $X(ct) \overset{d}{=}c^H X(t)$, so that the random variables $X(t)$ and $X(ct)$ are equivalent in distribution up to scaling factors governed by the parameter $H$. This parameter $H \in (0,1)$ is commonly known as the Hurst exponent. A self-similar process with stationary increments has non-summable autocovariances \citep{pipiras2017long} and is known as a long range dependent (LRD) time series.  In such series, the degree of long range dependence is controlled by $H$. For continuous time series data, many methods have been proposed to estimate $H$: the ReScaled range (RS) analysis \citep{hurst1951long, mandelbrot1969some}, detrended fluctuation analysis \citep{peng1994mosaic}, log periodogram regression \citep{geweke1983estimation}, local Whittle approximation \citep{robinson1995gaussian} etc. Although these methods typically apply to continuous data, we use these estimators in our exploratory analyses.

The rescaled range statistic of a time series $\{X_t, t\in \mathcal{T}\}$ is the ratio of the range of cumulative deviations from the mean to the standard deviation.  Estimates of the RS statistic are obtained by dividing the time series into sub-series of different lengths and computing the RS statistic for each scale \citep{bassingthwaighte1994evaluating}. For a self-similar stationary time series with Hurst coefficient $H$, \cite{mandelbrot1969some} showed that the RS statistic varies roughly as $\delta^H$ where $\delta$ refers to the time scale. 
Non-stationarity in the time series may lead to false detection of long-range dependence \citep{kantelhardt2001detecting}, motivating DFA, which removes the trend in a first stage before estimating $H$. \cite{geweke1983estimation} proposed an alternative semiparametric approach to estimate the Hurst exponent $H$ based on the log periodogram of the data. \cite{robinson1995gaussian} instead estimate $H$ using the local Whittle approximation to the likelihood.

We use \texttt{R} packages \texttt{pracma} and \texttt{fractal} to estimate the Hurst exponent using RS analysis and DFA, respectively; \texttt{fractal} allows for polynomial trends. To estimate $H$ according to \cite{geweke1983estimation} and \cite{robinson1995gaussian} we use the $\texttt{LongMemoryTS}$ package in $\texttt{R}$.
Table \ref{tab:hurst_data} shows the estimates of the Hurst exponent from the RS ($\hat{H}_{\mathrm{RS}}$) analysis and DFA ($\hat{H}_{\mathrm{DFA}}$) with a linear trend for the 15 bird species from Figure \ref{fig:marginal_and_conditional_probabilites} along with the estimates of $H$ according to \cite{geweke1983estimation} ($\hat{H}_{\mathrm{GPH}}$) and \cite{robinson1995gaussian} ($\hat{H}_{\mathrm{W}}$). The posterior mean estimate $\hat{H}_{\mathrm{FRAP}}$ of $H$ obtained from the proposed model here is also included in Table \ref{tab:hurst_data} together with the lower ($\hat{H}_{\mathrm{LR}}$) and upper ($\hat{H}_{\mathrm{UR}}$) 95\% credible intervals.
For DFA, estimates of the Hurst exponent were not sensitive to the choice of the degree of the polynomial. There is a considerable discrepancy in the estimates of $H$ from the RS and DFA analyses, with the DFA analysis surprisingly suggesting anti-persistence in the data.  However, it is not clear how to obtain confidence intervals for these estimates, and whether they are entirely appropriate given that RS and DFA methods were developed for continuous and not binary time series.
The estimates of $H$ as seen from $\hat{H}_{\mathrm{GPH}}$ and $\hat{H}_{\mathrm{W}}$ in Table \ref{tab:hurst_data} are closer to the estimates obtained from the proposed model although they often do not satisfy the constraint $0<H<1$.
\begin{table}[!ht]
\centering
\scalebox{0.8}{
\begin{tabular}{cccccccccc}
  \hline
 & Species name & $\hat{H}_{\mathrm{RS}}$ & $\hat{H}_{\mathrm{DFA}}$ & $\hat{H}_{\mathrm{GPH}}$ & $\hat{H}_{\mathrm{W}}$& $\hat{H}_{\mathrm{QMLE}}$ & $\hat{H}_{\mathrm{LR}}$ & $\hat{H}_{\mathrm{FRAP}}$ & $\hat{H}_{\mathrm{UR}}$ \\ 
  \hline
1 & Automolus ochrolaemus & 0.67 & 0.19 & 0.83 & 0.78 & 0.67 &0.85 & 0.89 & 0.95 \\ 
  2 & Cercomacra cinerascens & 0.77 & 0.17 & 1.14 & 1.07 & 0.83 & 0.90 & 0.92 & 0.94 \\ 
  3 & Corythopis torquata & 0.70 & 0.18 & 0.98 & 0.90 & 0.72 & 0.80 & 0.84 & 0.88 \\ 
  4 & Frederickena viridis & 0.75 & 0.13 &1.12 & 1.05 & 0.63 & 0.79 & 0.86 & 0.93 \\ 
  5 & Grallaria varia & 0.74 & 0.14 & 1.01 & 0.94 & 0.66 & 0.84 & 0.89 & 0.93 \\ 
  6 & Hylexetastes perrotii & 0.70 & 0.21 &1.01 & 0.93 & 0.85 & 0.89 & 0.93 & 0.95 \\ 
  7 & Hylophilus muscicapinus & 0.71 & 0.15 & 1.08 & 0.94 & 0.69 & 0.80 & 0.87 & 0.93 \\ 
  8 & Ibycter americanus & 0.74 & 0.25 & 1.19 & 1.11 & 0.81 & 0.90 & 0.94 & 0.96 \\ 
  9 & Micrastur gilvicollis & 0.70 & 0.17 &0.98 & 0.91 & 0.64 & 0.80 & 0.84 & 0.89 \\ 
  10 & Micrastur mirandollei & 0.72 & 0.34 &1.00 & 0.94 & 0.61 & 0.83 & 0.88 & 0.93 \\ 
  11 & Myrmeciza ferruginea & 0.70 & 0.14 & 1.00 & 0.88 & 0.61 & 0.81 & 0.85 & 0.88 \\ 
  12 & Percnostola rufifrons & 0.70 & 0.12 & 1.06 & 0.95 & 0.63 & 0.87 & 0.92 & 0.96 \\ 
  13 & Pipra erythrocephala & 0.69 & 0.20 & 0.90 & 0.85 & 0.60 & 0.79 & 0.84 & 0.88 \\ 
  14 & Pithys albifrons & 0.70 & 0.17 & 0.96 & 0.84 & 0.63 & 0.83 & 0.87 & 0.90 \\ 
  15 & Ramphastos vitellinus & 0.70 & 0.16 & 1.08 & 0.97 & 0.59 & 0.77 & 0.83 & 0.89 \\ 
   \hline
\end{tabular}
}
\caption{Estimated Hurst exponents for the 15 bird species using the RS analysis ($\hat{H}_{\mbox{RS}}$), DFA with a linear trend ($\hat{H}_{\mathrm{DFA}}$), \cite{geweke1983estimation} ($\hat{H}_{\mathrm{GPH}}$), \cite{robinson1995gaussian} ($\hat{H}_{\mathrm{W}}$), \cite{livsey2018multivariate} ($\hat{H}_{\mathrm{QMLE}}$) and the FRAP model. For the FRAP model we include the posterior mean ($\hat{H}_{\mathrm{FRAP}}$) along with the 95\% credible intervals ($\hat{H}_{\mathrm{LR}}$, $\hat{H}_{\mathrm{UR}}$).}
\label{tab:hurst_data}
\end{table}

Time series models for discrete valued data with LRD structure are relatively sparse. Classical approaches for count/discrete valued times series, such as the integer autoregressive moving-average \citep{ mckenzie1985some, mckenzie1986autoregressive, mckenzie1988some} and discrete autoregressive moving-average \citep{jacobs1978discrete, jacobs1978asyptotic}, cannot account for LRD \citep[Chapter 21]{davis2016handbook}. \cite{cui2009new} developed a model for stationary Bernoulli sequences with LRD based on renewal sequences. \cite{livsey2018multivariate} provide a recipe for multivariate count time series with Poisson marginals and a flexible autocovariance structure that can adequately handle LRD; see also \cite{jia2018latent}. Estimates of the Hurst exponent obtained from the quasi-maximum likelihood method from \cite{livsey2018multivariate} is also included in Table \ref{tab:hurst_data} under the column $\hat{H}_{\mathrm{QMLE}}$. A major difference of the proposed method from the aforementioned works is in its ability to include non-stationarity in the data. 

Our goal is not simply to estimate the Hurst coefficient; we would like to define a realistic generative probability model for these data that takes into account the data collection process and can be used as a useful baseline for future ecological analyses that include spatial dependence, environmental covariates and other complications.  The estimated Hurst coefficients for our proposed fractional probit model, see Section \ref{sec:frac_prob} below, are provided in Table \ref{tab:hurst_data}.  Interestingly, the Hurst coefficients are significantly above 0.5 for all fifteen bird species.  This suggests long range dependence, a new finding of ecological interest, which should be considered in future analyses of animal occurrence time series and conflicts with usual Poisson process-based models.

\section{Discretized event data}\label{sec:FRAP}
We begin this section by defining some notation. Suppose event recordings are discretized at time points $\{t_0, t_1, \ldots, t_n\}$ where the time points belong to some index set $\mathcal{T}$. In this article we assume that $t_{i+1} - t_i = \Delta$ for all $i=0,1,\ldots,n-1$. Corresponding to each time interval, we have the following binary event indicators
\begin{equation}
\label{eq:interval_definition}
   Z(t_{i-1}, t_{i}) = \begin{cases} 1 \mbox{\quad if at least one event occurred in } (t_{i-1}, t_{i}]\\
   0 \mbox{\quad otherwise}
   \end{cases}
\end{equation}
We consider $R$ replications of this binary time series $\mathbf{Z}=\{Z^{(1)}, Z^{(2)}, \ldots, Z^{(R)}\}$. In our particular setting, the replications correspond to different days of recording at a fixed location and for a fixed bird species.

\subsection{Fractional probit model}\label{sec:frac_prob}
Consider for now a single replication of the binary series $Z$. We assume a latent continuous time process $y(t), t \in \mathcal{T}$, is responsible for instigating events of interest. Let $\rho_0(y(s), y(t))$ denote the covariance function of $y(\cdot)$ for $s, t \in \mathcal{T}$. We want to derive 
 a discrete time series from $y(t)$ so that it reflects the autocovariance structure of the observed binary data. Of particular interest are time series that exhibit long range dependence motivated by the bird vocalization data. A time series $\{X_t, t \in \mathbb{Z}\}$ is said to have long range dependence if its autocovariance function $\rho_X(k)$ at lag $k \in \mathbb{Z}$ decays polynomially as $k \to \infty$, 
\begin{equation}\label{eq:LRD_def}
    \rho_X(k) = L(k)k^{2d-1}, \quad \mbox{for } d\in (0,1/2),
\end{equation}
where $L(\cdot)$ is a slowly varying function at infinity, meaning it is positive on $[c,\infty)$ with $c\geq 0$ and for any $a>0$, $\lim_{u\to \infty}L(au)/L(u) = 1$. The parameter $d$ is called the long-range dependence parameter and the series is said to have {\it{long memory}}. A popular alternative characterization of long range dependent series relies on properties in the frequency domain. If $s_X(\lambda)$ is the spectral density of the times series $\{X_t, t \in \mathbb{Z}\}$, then the series is long range dependent if 
\begin{equation}\label{eq:LRD_def_freq}
    s_X(\lambda) = L^*(\lambda) \lambda^{-2d}, \quad \mbox{for } d \in (0,1/2) \,\, \mbox{and } 0<\lambda\leq \pi,
\end{equation}
for some slowly varying function $L^*(\cdot)$.
This definition implies that spectral densities of long range dependent series have an infinite spike in a neighborhood around 0.   

The concept of long memory is intricately related to self-similarity of processes. Broadly speaking, self-similar processes are obtained as normalized limits of partial sum processes of a long memory series \citep{pipiras2017long}. While there are several well studied self-similar processes, one of the most fundamental and perhaps the most popular is the fractional Brownian motion (fBM). A standard Brownian motion $B(t)$ is a stationary Gaussian process with covariance function $K_B(s,t) = \min(s,t)$. The fBM generalizes this covariance structure to the form
\begin{equation}\label{eq:fBM_covariance}
    K_H(s,t) = \frac{\sigma^2}{2}(|t|^{2H} + |s|^{2H} - |t-s|^{2H}), \quad H \in (0,1).
\end{equation}
The parameter $H$ is known as the Hurst exponent of the fBM.  Henceforth, we shall write $B_H(t)$ to denote an fBM with Hurst exponent $H$. In \eqref{eq:fBM_covariance}, $\sigma^2 = \mathbb{E}\{B_H(1)\}^2$ which we set to 1. For $H = 0.5$ the standard Brownian motion is recovered. The self-similarity of the process stems from the fact that $B_H(ct)\overset{d}{=}c^H B_H(t)$. Setting $\epsilon_i^H = B_H(i) - B_H(i-1), i \in \mathbbm{Z}$, we obtain a stationary discrete time series known as fractional Gaussian noise (fGN), elements of which marginally follow a standard Gaussian distribution. The autocovariance function $\rho_\epsilon(k), k = 0,1,2, \ldots$ of $\{\epsilon_n^H\}$ is
\begin{equation}\label{eq:fGN_covariance}
    \rho_\epsilon(k) = \frac{1}{2}(|k+1|^{2H} - 2|k|^{2H} + |k-1|^{2H}) \sim H(2H-1)k^{2H -2} \mbox{ as } k \to \infty,
\end{equation}
where for two sequences $a_n$ and $b_n$, $a_n \sim b_n$ implies that $a_n/b_n = 1$ as $n \to \infty$.
Hence, for $H \in (1/2, 1)$ the series is LRD in the sense of equation \eqref{eq:LRD_def} with LRD parameter $d = H - 1/2$. Our proposed model relies heavily on the simple observation that if we define a series $Z_i^* = \mathbbm{I}(\epsilon_i^H>0)$, where $\epsilon_i^H$ is a fGN with Hurst exponent $H$, then the autocovariance function of this binary series $Z_i^*$ is
\begin{equation}
    \rho_{Z^*}(k) = \frac{1}{2\pi}\arcsin{\rho_\epsilon(k)}.
    \label{eq:binary_acf}
\end{equation}
When the series $\{\epsilon_i^H\}$ is long range dependent, that is $H \in (1/2,1)$, it follows that for large lags $k$, $\rho_{Z^*}(k) \approx \rho_\epsilon(k)$ since $\sin{x} \approx x$ for small $x$, i.e. the series $Z_i^*$ is also long range dependent with Hurst coefficient $H$. More generally, the inheritance of the LRD property in the binary series is true for any underlying LRD Gaussian series since \eqref{eq:binary_acf} holds for any binary series obtained by discretizing a stationary Gaussian series. We refer the reader to \cite[Lemma 4.1]{livsey2018multivariate} for the general proof.  
In the context of discretized event data as described in \eqref{eq:interval_definition}, we then have the following latent formulation,
\begin{equation}\label{eq:frac_prob_indicators}
    Z(t_{i-1}, t_{i}) = \begin{cases} 1 \mbox{ if }  \epsilon_{i}^H = B_H(t_{i}) - B_H(t_{i-1}) >0\\
   0 \mbox{\quad otherwise,}
   \end{cases}
\end{equation}
for $i = 0, 1, \ldots$; see also \cite[Equation (4.4)]{livsey2018multivariate} for an equivalent formulation for any latent Gaussian series. The above formulation accounts for long memory in the observed binary series, with the autocorrelation decay mimicking that of an fGN. Moreover, as a consequence of the scaling property of an fBM, a scale free property of conditional probabilities consistent with Figure \ref{fig:marginal_and_conditional_probabilites} is 
established in the following Lemma.
\begin{lemma}\label{lemma:conditional_scaling}
Let $B_H(t)$ be an fBM with Hurst coefficient $H$ with $\sigma^2 = 1$. Suppose we observe $B_H(t)$ at $i \in \mathbb{N}$ and let $X_{i} \equiv B_H(i), \, i \geq 1$, $X_0 \equiv B_H(0)$. Define the binary series of indicators at time scale $m$, $Z^{(m)}_i = \mathbbm{1} \left\lbrace X_{{i2^m}} - X_{{(i-1)2^m}} > 0\right\rbrace, \, i\geq 1$ so that for $m=0$, the series $Z^{(0)}_1, Z^{(0)}_2, \ldots$ is as in \eqref{eq:frac_prob_indicators}. Then for any $m =0,1,\ldots$, the conditional probability $\mathbb{P}(Z_{i+1}^{(m)} = 1 | Z_{i}^{(m)} = 1)$ is independent of the time scale $m$. In particular,
\begin{equation}\label{eq:scale_free_cond_probs}
P(Z_{i+1}^{(m)} = 1 | Z_i^{(m)} = 1) = \frac{1}{2} + \frac{1}{\pi} \arcsin{(2^{2H - 1})}
\end{equation} 

\end{lemma}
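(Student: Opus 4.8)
The plan is to prove the statement in two stages: first reduce every time scale $m$ to the base scale $m=0$ using the self-similarity of the fBM, and then evaluate the resulting bivariate Gaussian orthant probability in closed form. The point of the first stage is that the conditional probability depends only on the \emph{joint law of the sign pattern} $(Z_i^{(m)}, Z_{i+1}^{(m)})$, so it suffices to show this joint law is invariant in $m$ and to compute it once at $m=0$.

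First I would show scale invariance directly. Write the block increments $\Delta_j^{(m)} = X_{t_{j2^m}} - X_{t_{(j-1)2^m}} = B_H(j2^m) - B_H((j-1)2^m)$. Since $B_H$ is a centered Gaussian process, its finite-dimensional distributions are determined entirely by $K_H$, and the covariance in \eqref{eq:fBM_covariance} is homogeneous of degree $2H$, i.e. $K_H(cs, ct) = c^{2H} K_H(s,t)$. Taking $c = 2^m$ gives the equality of finite-dimensional distributions $(B_H(j 2^m))_{j} \overset{d}{=} 2^{mH}(B_H(j))_j$, and hence $(\Delta_i^{(m)}, \Delta_{i+1}^{(m)}) \overset{d}{=} 2^{mH}(\epsilon_i^H, \epsilon_{i+1}^H)$. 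Because multiplication by the positive constant $2^{mH}$ leaves every sign unchanged, the pair $(Z_i^{(m)}, Z_{i+1}^{(m)})$ has the same joint distribution as $(Z_i^{(0)}, Z_{i+1}^{(0)})$ for every $m$, which already establishes that the conditional probability is free of $m$. It then remains only to evaluate it at $m=0$.

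At $m=0$ the pair $(\epsilon_i^H, \epsilon_{i+1}^H)$ is jointly Gaussian with mean zero; from \eqref{eq:fGN_covariance} at $k=0$ each has unit variance, and at $k=1$ their correlation is $\rho_\epsilon(1) = \tfrac12(2^{2H} - 2) = 2^{2H-1} - 1$. Since $\mathbb{P}(Z_i^{(0)} = 1) = \mathbb{P}(\epsilon_i^H > 0) = 1/2$, I would invoke the classical bivariate-normal orthant (Sheppard) identity $\mathbb{P}(U > 0, V > 0) = \tfrac14 + \tfrac{1}{2\pi}\arcsin \rho$ for a standard bivariate normal $(U,V)$ of correlation $\rho$. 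Dividing the joint probability by $1/2$ yields the scale-free conditional probability $\mathbb{P}(Z_{i+1}^{(0)} = 1 \mid Z_i^{(0)} = 1) = \tfrac12 + \tfrac1\pi \arcsin \rho_\epsilon(1)$ with $\rho_\epsilon(1) = 2^{2H-1}-1$, which is the closed form claimed in \eqref{eq:scale_free_cond_probs}. As a sanity check, $H = 1/2$ gives $\rho_\epsilon(1) = 0$ and the conditional probability $1/2$, consistent with independent increments.

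The only genuine subtlety is the first stage: the scaling must be applied as an equality of \emph{finite-dimensional} distributions of the two consecutive block increments, not merely as the one-dimensional marginal identity $B_H(ct)\overset{d}{=}c^H B_H(t)$ quoted in the text. Reducing this to the degree-$2H$ homogeneity of $K_H$ is what makes the argument rigorous, and is the step I would be most careful to write out; everything afterward is the standard evaluation of a Gaussian quadrant probability, with the value $\rho_\epsilon(1)$ entering the arcsine read off directly from \eqref{eq:fGN_covariance}.
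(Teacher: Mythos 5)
Your proof is correct and follows essentially the same route as the paper's: both reduce the two consecutive block increments to a standard bivariate Gaussian pair whose correlation $2^{2H-1}-1$ is free of $m$ (the paper by normalizing the scale-$m$ variances and covariance by $\lambda^2 = 2^{2Hm}$, you by invoking the degree-$2H$ homogeneity of $K_H$ as a finite-dimensional self-similarity), and both finish with Sheppard's bivariate orthant formula. One point worth flagging: your derived value $\tfrac{1}{2} + \tfrac{1}{\pi}\arcsin\left(2^{2H-1}-1\right)$ agrees with the paper's own proof rather than with equation \eqref{eq:scale_free_cond_probs} as printed, which drops the ``$-1$''; that is a typo in the statement (as written it would give conditional probability $1$ at $H=1/2$ and an undefined arcsine for $H>1/2$), so your computation, which recovers $1/2$ at $H=1/2$, is the right one.
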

\begin{proof}
See Appendix \ref{sec:proof_of_lemma}.
\end{proof}
Two remarks are in order. First, for the special case $H = 0.5$, the conditional probability in equation \eqref{eq:scale_free_cond_probs} becomes $1/2$, so that when the series of indicators are generated from an underlying white noise series, the conditional probability of $Z_{i+1} = 1| Z_i = 1$ and the marginal probability of $Z_i = 1$ are equal. Second, since the function $\arcsin{(\cdot)}$ is increasing, the conditional probability of $Z_{i+1} = 1 | Z_i = 1$ increases with $H$,  covering the cases of anti-persistence $H<0.5$, independence $H = 0.5$ and LRD for $H> 0.5$. Figure \ref{fig:cond_probs_versus_H} depicts the relationship between the Hurst coefficient $H$ and the conditional probabilities.

\begin{figure}[!ht]
    \centering
    \includegraphics[width = 0.7\textwidth]{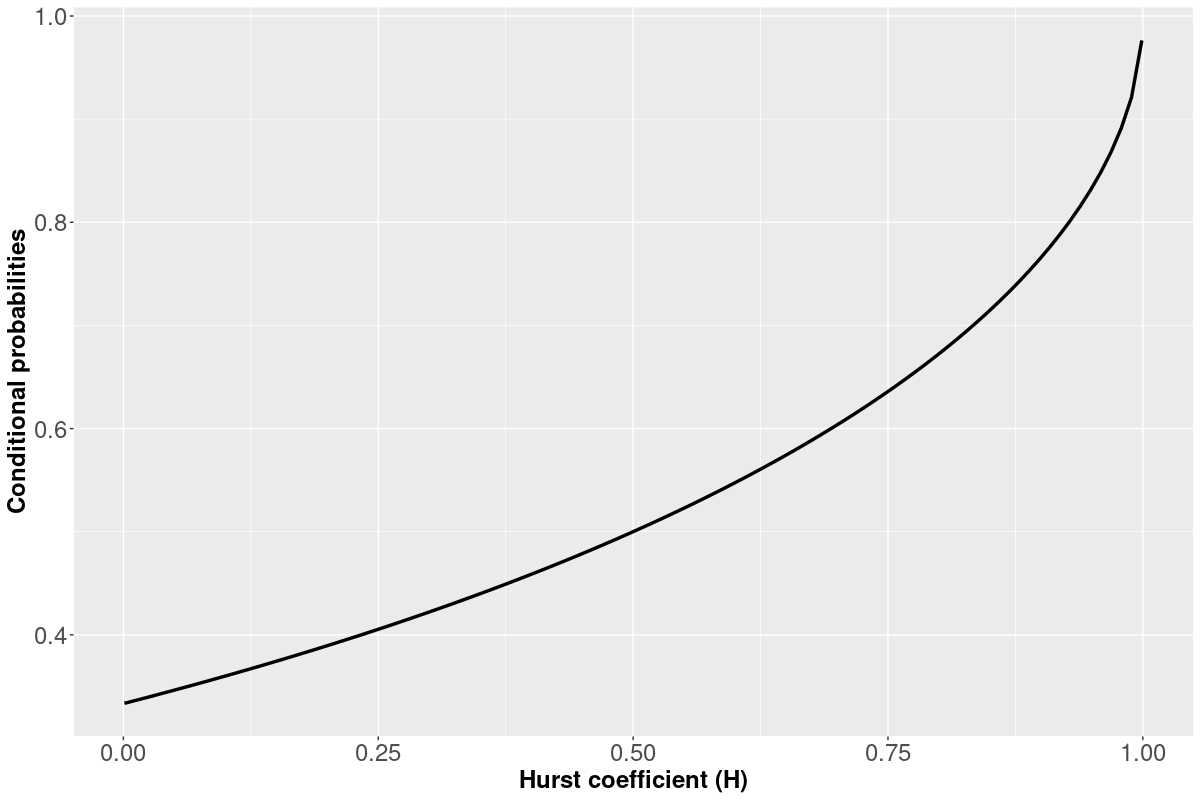}
    \caption{Relation between the Hurst coefficient $H$ and the conditional probabilities obtained from equation \eqref{eq:scale_free_cond_probs}.}
    \label{fig:cond_probs_versus_H}
\end{figure}
Additionally, the spectral density of the series $Z_n$ can be shown to have a pole at zero frequency when $H>1/2$, a distinctive feature of LRD series. Let $s_{Z}(\lambda)$ and $s_\epsilon(\lambda)$ denote the spectral density of the series $Z_n$ and $\epsilon_n$, respectively, for $-\pi\leq \lambda \leq \pi$. Then we have for $H>1/2$,
\begin{align*}\label{eq:frap_spectral_density}
    s_Z(\lambda) = \sum_{k=-\infty}^\infty \rho_z(k) \exp(ik\lambda) &= \sum_{k=-\infty}^\infty \frac{1}{2\pi} \arcsin \rho_\epsilon (k)\exp(ik\lambda) \\
    & \geq \sum_{k=-\infty}^\infty \frac{1}{2\pi} \rho_\epsilon (k)\exp(ik\lambda) = \frac{1}{2\pi}s_\epsilon(\lambda),
\end{align*}
where we have used the Jordan inequality $\arcsin x - x \geq 0$ for $0<x<1$ \citep{mitrinovic1970analytic}. Combining this with the fact that $s_\epsilon(\lambda) \sim \lambda^{1-2H}$ in a neighborhood of $0$, we see $s_Z(\lambda)$ also has a pole at $\lambda = 0$ for $H > 1/2$ and hence is LRD according to definition \eqref{eq:LRD_def_freq}. 

When considering the Amazon bird vocalization data and other real data applications, a clear limitation of model (\ref{eq:frac_prob_indicators}) is the restriction of the marginal probabilities being fixed at 0.5. To be realistic, we need to allow the marginal probabilities to be arbitrary and varying smoothly according to the time of the day. Moreover, \cite{mikosch2004nonstationarities, chen2010localized} among many others noted that purely from a modeling perspective, long memory behavior in the sample autocovariances can be sufficiently explained by non-stationarity.

With this motivation, we introduce a non-stationary component in the FRAP model by assuming that the latent process driving the events, say $y(t)$, admits an additive decomposition of the form $y(t) = f(t) + B_H(t)$ while letting 
\begin{equation}
\label{eq:trend_and_noise}
    Z(t_{i-1}, t_i) = \begin{cases} 1 \mbox{ if } y(t_{i}) - y(t_{i-1}) = f(t_i) - f(t_{i-1})+ \epsilon_i^H>0\\
   0 \mbox{\quad otherwise,}
   \end{cases}
\end{equation}
where we assume $f(\cdot)$ is continuously differentiable.
The marginal probability of observing an event in interval $(t_{i-1}, t_i]$ is then $P[Z(t_{i-1}, t_i) = 1] = P[f(t_i) - f(t_{i-1}) + \epsilon_i^H > 0] = \Phi\{f(t_i) - f(t_{i-1})\}$, where $\Phi(\cdot)$ is the cumulative density function of a standard Gaussian random variable. Hence, the variation in $f(\cdot)$ during $(t_{i-1}, t_i]$ determines the probability of observing an event during this time; a positive change increases the marginal probability, whereas a negative change decreases it. If $f(t_i) - f(t_{i-1}) = 0$, then the marginal probability is $P(\epsilon_i^H > 0) = 1/2$. 
To simplify notation, we write $Z_i = Z(t_{i-1}, t_{i})$. The vector $\mathbf{\epsilon}^H = (\epsilon_1^H,\ldots, \epsilon_{n}^H)$ follows an $n$-dimensional Gaussian distribution with mean $0$ and covariance matrix $\Sigma_H$ whose $(i,j)$-th element is $\Sigma_H(i,j) = \rho_\epsilon(|i-j|)$ defined in equation \eqref{eq:fGN_covariance}. We will also include a precision parameter $\tau^2$ so that $\mathbf{\epsilon}^H \sim \mathrm{N}(0, \tau^2\Sigma_H)$. The marginal probability of an event occurrence in the interval $(t_{i-1}, t_i]$ then becomes $P[Z(t_{i-1}, t_i) = 1] = \Phi[\{f(t_i) - f(t_{i-1})\}/\tau] $. In Figure \ref{fig:example_marginal_probability} we show the variations in marginal probabilities when the non-stationary component $f(t)$ in model \eqref{eq:trend_and_noise} is set to $f(t) = \sin (4\pi t)/90$ with $\tau = 1$.
\begin{figure}
    \centering
    \includegraphics[width = 0.8\textwidth, height = 6cm]{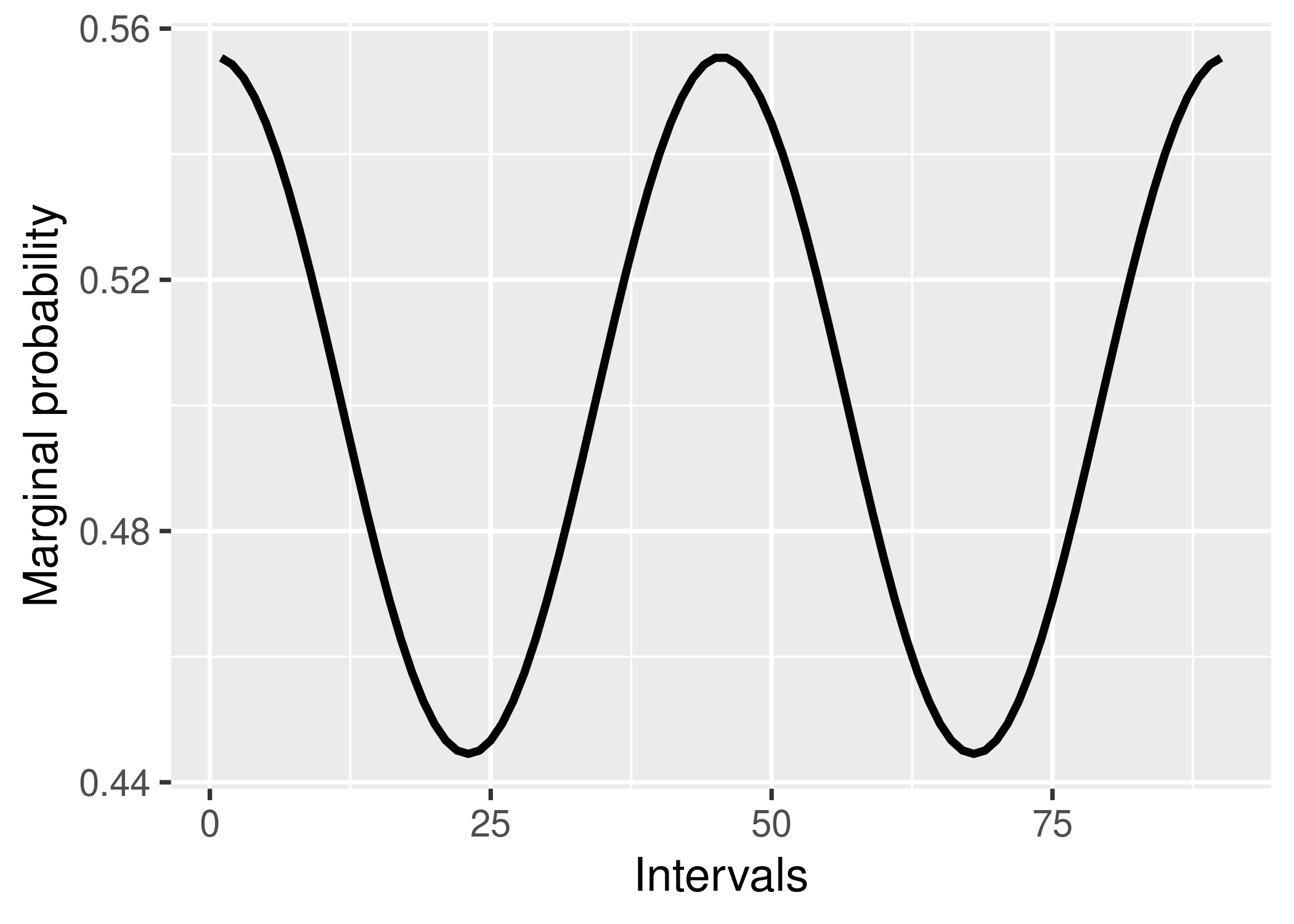}
    \caption{Variation in marginal probabilities of observing a vocalization or an event when $f(t) = \sin (4\pi t)/90$ for time intervals $(0,1], (1,2], \ldots, (89, 90]$. Here $\tau = 1$ and the marginal probabilities are calculated as $\Phi\{f(i+1) - f(i)\}$ for $i = 0, \ldots, 89$.}
    \label{fig:example_marginal_probability}
\end{figure}

Akin to probit models for longitudinal binary data with covariate information \citep{chib1998analysis} we are interested in modeling the likelihood of the observed events $Z=(Z_1, \ldots, Z_{n})\in \{0,1\}^n$.  However, in our context we have time series data with smooth trend $f(t)$ and temporal dependence captured through  $\mathbf{\epsilon}^H$. Letting $\mathbf{f} = \{f(t_0), \ldots, f(t_n)\}$ and putting the pieces together, we get the following probit-type model, \begin{equation}
\label{eq:fractional_probit}
 P(Z \in E\mid \mathbf{f}, H) = P(W \in E_W \mid \mathbf{f}, H), \quad W \sim \mathrm{N}(A\mathbf{f}, \tau^2\Sigma_H), \quad E \subset \{0,1\}^n,
\end{equation}
where $E_W$ is the intersection of half-planes $E_W = \cap_{i:Z_i = 1}(W_i>0) \cap_{i:Z_i=0}(W_i\leq 0)$ and the matrix $A \in \Re^{n\times n}$ is such that $A_{ii} = 1, A_{i,i-1}= -1$ and $A_{ij} = 0$ for $j \neq i, i-1$. For identifiability, we impose the restriction that $f(0) = 0$. Then under model \eqref{eq:fractional_probit} $f(\cdot)/\tau$ is identifiable. To accommodate this restriction, we let $A_{11} =1, A_{1,j} = 0, j = 2,\ldots, n$; the other rows of $A$ remain unchanged. 

Model \eqref{eq:fractional_probit} is quite flexible in incorporating a smooth trend $f(t)$ and auto-correlated errors.  In the special case in which $H=0.5$, the error term becomes uncorrelated so that $f(t)$ is assumed to characterize the pattern over time in the data.  When $H>0.5$ in contrast, we obtain long range dependence.  The model provides a useful basis for testing of long range dependence via comparing $H_0: H=0.5$ to $H_1: H>0.5$, in the presence of potential non-stationarity.   


\subsection{Priors and posterior computation}\label{sec:priors}
Without loss of generality, we assume that the time points $\{t_0, \ldots, t_n\}\in \mathcal{T} = [0,T]$. Let $\Theta = \{(f,\beta, \tau): f \in \mathcal{F}, \, \beta \in \Re, \,\tau \in \Re^+\}$ be the parameter space in model \eqref{eq:fractional_probit}, where we let $\mathcal{F}$ be the space of continuously differentiable functions on $\mathcal{T}$ and $\beta = \log\{H/1-H\}$. Let $\Pi_\beta$ denote the prior on $\beta$ and $\Pi_\tau$ denote the prior on $\tau$. We choose $\Pi_\beta \equiv \mathrm{N}(0,1)$ and $\Pi_\tau \equiv$ Inverse-Gamma$(a_\tau, b_\tau)$ for positive constants $a_\tau, b_\tau$. For the nonparametric component, we let $f \sim \Pi_f$, where $\Pi_f$ is an appropriate prior for an unknown smooth function.  In particular, we choose a zero mean 
Gaussian process (GP) with a squared exponential covariance kernel \citep{Rasmussen:2005:GPM:1162254} scaled by the precision parameter $\tau^2$ of the latent process defined as 
\begin{equation}\label{eq:squared_exponential_kernel}
    C(s, t) = \tau^2\sigma^2\exp\left\lbrace -\dfrac{(s-t)^2}{2\phi^2}\right\rbrace, \quad \sigma, \phi > 0,
\end{equation}
for $s, t \in \mathcal{T}$. For numerical stability we follow the standard practice of adding a small positive quantity $\nu$ to the diagonal elements of the GP covariance matrix so that $C(s, t) = \tau^2\sigma^2\exp -\{(s-t)^2/ 2\phi^2\} + \nu \mathbbm{1}(s=t)$. 
Consequently, the induced prior on $\mathbf{g} = A\mathbf{f}$ is again a multivariate Gaussian distribution with covariance matrix $C_\mathbf{g} = \tau^2 ACA^\prime$, where $C$ is an $n \times n$ matrix with $C_{ij} = C(t_i, t_j)$.
To learn the hyperparameters $(\sigma, \phi)$ from the data, we transform them to the logarithmic scale and augment the parameter space $\Theta$ to $\Theta_* = \Theta \times \eta$ where $\eta = \{(\log \sigma, \log \phi): \sigma, \phi>0\}$. We place independent standard Gaussian priors on each component of $\eta$. Thus $\Pi_\eta \equiv \mathrm{N}(0,1)\times \mathrm{N}(0,1)$. The prior specification is completed by setting $\Pi =  \Pi_f \times \Pi_\beta \times \Pi_\tau \times \Pi_\eta$.

For the Amazon bird vocalization data, we have replications $\{Z^{(1)}, \ldots, Z^{(R)}\}$ of $Z$ over different days which have minimal empirical correlations. We assume these replicates are conditionally independent involving the same $f(t)$ but with different realizations of the latent residual term leading to different realizations $W^{(r)}$, for $r=1,\ldots,R$, of $W$ in equation \eqref{eq:fractional_probit}.  Including also the priors, this leads to the following hierarchy: 
\begin{align}\label{eq:frac_prob_with_priors}
    P(Z^{(r)} \in E_r\mid & \, \mathbf{f}, \beta, \tau)  = P(W^{(r)} \in E_{W^{(r)}} \mid \mathbf{f}, \rho), \, \beta = \log\{H/(1-H)\},\\ \nonumber
    & W^{(r)}\mid \rho, \mathbf{f}, \eta \sim \mathrm{N}(A\mathbf{f}, \tau^2\Sigma_H),\\\nonumber
    & \mathbf{f}\mid \tau^2, \eta \, \sim \Pi_f,\\\nonumber
    & \beta \sim \Pi_\beta, \, \, \tau \sim \Pi_\tau, \, \, \eta \sim \Pi_\eta.
\end{align}
for any $E_r \subset \{0,1\}^n$ and $E_{W^{(r)}}$ as defined after equation \eqref{eq:fractional_probit}.

Posterior computation under the hierarchical FRAP model \eqref{eq:frac_prob_with_priors} is potentially challenging.  We initially considered an integrated nested Laplace approximation (INLA), which was developed for approximate Bayesian inference in latent Gaussian models by \cite{rue2009approximate}. However, the non-Markovian structure of the FRAP model renders the INLA paradigm non-applicable \citep{rue2005gaussian}. In a recent article, \cite{sorbye2018fractional} applied the INLA framework to a fGN model where the authors approximate the fGN by a mixture of first-order autoregressive processes. This approximation technique works quite well when the observed time series is quite long $n \sim 500$ and the number of replications available is also very high $R \sim 1000$. For the Amazon bird vocalization data both the length and the replications are quite small compare to these numbers. 

\begin{algorithm}[!ht]

 \KwData{$\mathbf{Z}=\{Z_1, \ldots, Z_R\}$, $L$ = number of MCMC samples}
 
 \KwResult{$L$ posterior samples from $\Pi(\Theta_* \mid \mathbf{Z})$ : $ \left\lbrace\Hat{\Theta}_*^{(l)}\right\rbrace_{l=1}^L$}
 Initialize $\beta = 0$, $\mathbf{g}= A\mathbf{f} = 0$, $\log \sigma = 0$ and $\log \phi = 0$;
 
 \For{$l = 1:L$}{
  \begin{itemize}
      \item \textbf{Update} $W_r\mid - \overset{ind}{\sim}\Gauss(\mathbf{g}, \tau^2\Sigma_H)\ind_{B_{W_r}},\,\, r= 1,\ldots R$
      \item \textbf{Update} $\mathbf{g} \mid - \sim \Gauss\left\lbrace\frac{R}{\tau^2}\Phi^{-1}\Sigma_H^{-1}\overline{W}, \Phi^{-1}\right\rbrace$, where $\overline{W} = \frac{1}{R}\sum_{r=1}^R W_r$ and $\Phi = \frac{R}{\tau^2}\Sigma_H^{-1} + \frac{1}{\tau^2} C_\mathbf{g}^{-1}$.
      \item \textbf{Update} $\beta \mid -$ using a Metropolis random walk with proposal density \linebreak $\Gauss(\beta_{l-1}, s_1^2)$.
      \item \textbf{Update} $\tau \mid -  \sim \text{Inverse-Gamma}(\frac{nR}{2} + a_\tau, S + b_\tau)$, where $S = \frac{1}{2}\left[\mathrm{trace}\{(W - G)^\prime \Sigma_H^{-1} (W - G)\} + \mathbf{g}^\prime C_g^{-1} \mathbf{g}\right]$ and $G$ is a $n \times R$ matrix  \newline
      with all columns equal to $\mathbf{g}$.
      \item \textbf{Update} $\eta \mid -$ jointly via a Metropolis random walk 
      with proposal density $\Gauss(\eta^{(1)}_{l-1}, s_2^2)\times \Gauss( \eta^{(2)}_{l-1}, s_2^2)$, where $\eta^{(1)} = \log \sigma$ and $\eta^{(2)} = \log \phi$.
  \end{itemize}
 }
 \caption{MCMC algorithm to draw samples from the posterior under model \eqref{eq:frac_prob_with_priors}.}
 \label{algo:MCMC_one_species}
\end{algorithm}


We instead focus on Markov chain Monte Carlo (MCMC), developing a practical algorithm that exploits the structure of the model, as detailed in
Algorithm \ref{algo:MCMC_one_species}. We use $\theta \mid -$ to denote the full conditional distribution of a parameter $\theta$ given other parameters and the data in Algorithm \ref{algo:MCMC_one_species}. The Metropolis random walk steps to update the Hurst exponent and the Gaussian process kernel hyperparameters are implemented following the adaptive Metropolis algorithm \citep{roberts2001optimal}.  Adaptive Metropolis modifies the classical version of the algorithm by varying the covariance of the noise in the random walk targeting the optimal acceptance rate \citep{roberts2001optimal}. Suppose $s_1$ and $s_2$ are the noise variance of the random walk updates of $\beta$ and $\eta$, respectively.  We start with $s_1 = 0.1$ and $s_2 = 0.2$ and update them at MCMC iteration $l$ by increasing or decreasing by a factor of $\exp(l^{-0.5})$ whenever $l$ is divisible by 50. Adaptation targets an acceptance probability of $\sim 0.3$. Values of $f(\cdot)$ at a set of test points can also be evaluated by accommodating a further step in Algorithm \ref{algo:MCMC_one_species} following \cite[equations 2.22-2.24]{Rasmussen:2005:GPM:1162254}.

The main computational bottleneck of Algorithm \ref{algo:MCMC_one_species} involves simulating the truncated Gaussian random variables for updating the latent variables $W_r$. This is done using \texttt{R} package \texttt{tmvtnorm}. Unfortunately, we found the popular circulant embedding algorithm  \citep[Chapter 2.11]{pipiras2017long} to simulate Gaussian long range dependent sequences to be quite slow when these constraints are imposed. 
To accelerate computation, the $R$ copies of the latent variables are generated in parallel.
The \texttt{R} code to implement the FRAP model given $R$ copies of discretized events is available \href{https://github.com/antik015/Fractional-Probit-Model}{here}.

\subsection{Asymptotics}
Here we consider infill asymptotics so we assume we can make measurements at finer time points $\{t_0, \ldots, t_n\}$ as $n \to \infty$ within the interval $[0,T]$. We assume the noise variance $\tau = 1$. Also, we set the number of replications $R =1$ since the proof does not depend on a specific value of $R$. Let the true trend function be $f_0 \in \mathcal{F}$ and the true Hurst coefficient be $H_0$  satisfying $0<a<H_0<b<1$ for some $a, b \in (0,1)$. Define $\theta_0 = (f_0, H_0)$ and $P_0$ to be the true data generating probability measure and consider any weak neighborhood $U$ of $\theta_0$. By showing that the joint prior $\Pi \equiv \Pi_\beta \times \Pi_f$ has positive Kullback-Leibler support we have the following consistency result.
\begin{theorem}\label{theorem:weak_consistency}
Suppose $f_0 \in \mathcal{F}$ and $0<a<H_0<b<1$ for some $a, b \in (0,1)$. Write $\theta_0 = (f_0, H_0)$ and consider any weak neighborhood $U$ of $\theta_0$. Then the posterior probability of the set $U^c$ given the series of indicators $\Pi(U^c \mid Z_1, \ldots, Z_n) \to 0$ in $P_0-$probability as $n \to \infty$.
\end{theorem}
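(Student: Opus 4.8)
The plan is to apply a Schwartz-type posterior consistency theorem adapted to dependent (non-i.i.d.) observations, reducing the whole problem to verifying that the product prior $\Pi = \Pi_f \times \Pi_\beta$ charges every Kullback--Leibler neighbourhood of the truth; for weak neighbourhoods this positive KL support, together with exponentially consistent tests, suffices. First I would reparametrize via $\beta = \log\{H/(1-H)\}$ so the parameter is $\eta = (f,\beta) \in \mathcal{F} \times \Re$, and record that $\eta_0 = (f_0, \beta_0)$ lies in the interior because $H_0 \in (a,b) \subset (0,1)$. Writing $p_\eta^{(n)}$ for the probability mass function that model \eqref{eq:fractional_probit} assigns to $Z^{(n)} = (Z_1,\dots,Z_n) \in \{0,1\}^n$ (with $\tau=1$), the posterior of $U^c$ is the usual ratio $\int_{U^c} e^{\Lambda_n}\, d\Pi / \int e^{\Lambda_n}\, d\Pi$ with $\Lambda_n(\eta) = \log\{p_\eta^{(n)}(Z^{(n)})/p_{\eta_0}^{(n)}(Z^{(n)})\}$, and the argument is the standard denominator-lower-bound / numerator-testing split.

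The core step is the KL support computation, and here the key observation is that each $Z^{(n)}$ is a deterministic thresholding $Z^{(n)} = T(W)$ of the latent Gaussian vector $W \sim \Gauss(A\mathbf{f}, \Sigma_H)$. Hence the binary laws are pushforwards of Gaussians, and by the data-processing inequality
\begin{equation*}
\mathrm{KL}\!\left(P_{\eta_0}^{(n)} \,\|\, P_\eta^{(n)}\right) \;\le\; \mathrm{KL}\!\left(\Gauss(A\mathbf{f}_0,\Sigma_{H_0}) \,\|\, \Gauss(A\mathbf{f},\Sigma_H)\right),
\end{equation*}
which has the closed form $\tfrac12[\mathrm{trace}(\Sigma_H^{-1}\Sigma_{H_0}) - n + (A\mathbf{f}-A\mathbf{f}_0)^{\T}\Sigma_H^{-1}(A\mathbf{f}-A\mathbf{f}_0) + \log\{\det\Sigma_H/\det\Sigma_{H_0}\}]$. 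I would then show this upper bound is made arbitrarily small by taking $\norm{f-f_0}_\infty$ and $|H-H_0|$ small: the quadratic mean term is controlled since $A$ is bounded and $f \mapsto A\mathbf{f}$ is continuous, while the trace and log-determinant terms are continuous in $H$ through the entries $\rho_\epsilon(\cdot)$ of the Toeplitz matrix $\Sigma_H$. Positive prior mass on these parameter balls follows because the squared-exponential GP prior $\Pi_f$ has full support in $C(\mathcal{T})$ and hence charges every sup-norm ball around the continuous function $f_0$, and because $\Pi_\beta = \Gauss(0,1)$ charges every neighbourhood of $\beta_0$; by independence the product prior charges the intersection.

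For the testing condition I would exploit that the thresholded fractional Gaussian noise is a stationary, ergodic sequence (the Gaussian increments have $\rho_\epsilon(k)\to 0$), so that empirical frequencies of finite binary patterns---precisely the weakly continuous functionals that generate a weak neighbourhood of $\theta_0$, in the spirit of the conditional probabilities of Lemma~\ref{lemma:conditional_scaling}---converge under $P_0$ and separate $\theta_0$ from $U^c$, yielding exponentially consistent tests via a Hoeffding/Azuma-type bound for weakly dependent bounded data, with the slowly vanishing trend increments handled through the continuity and identifiability of $f$. Combining the KL lower bound on the denominator with the test-based upper bound on the numerator gives $\Pi(U^c \mid Z_1,\dots,Z_n) \to 0$ in $P_0$-probability.

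The hardest part will be the analytic control of the fractional Gaussian noise covariance matrices as $n \to \infty$. The closed-form Gaussian KL and, more importantly, the accompanying second-moment (variance) summability condition needed for the denominator bound both require uniform control of the spectrum, condition number, and Toeplitz determinant of $\Sigma_H$, which is delicate precisely in the long-range-dependent regime $H>1/2$ where $s_\epsilon(\lambda)\sim\lambda^{1-2H}$ has a pole at the origin; here I expect to lean on Szeg\H{o}-type asymptotics for Toeplitz determinants and on the fact that $H_0$ is bounded away from the endpoints $0$ and $1$ to keep the relevant eigenvalue bounds uniform over the parameter balls.
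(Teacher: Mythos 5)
Your Kullback--Leibler support step is essentially right, and in fact runs parallel to what the paper does: the data-processing reduction to the Gaussian KL $\mathrm{KL}\{\Gauss(A\mathbf{f}_0,\Sigma_{H_0}) \,\|\, \Gauss(A\mathbf{f},\Sigma_H)\}$ is a clean substitute for the paper's explicit control of the trace, log-determinant and quadratic terms (its Lemma \ref{lm:kl_setup} and Proposition \ref{lm:H_setup}), and prior positivity via the full support of the squared-exponential GP together with the Gaussian prior on $\beta$ is exactly the paper's argument. The genuine gap is in your testing step. For $H_0>1/2$ -- the very regime this model is built for -- the thresholded series has autocovariance $\rho_Z(k)\asymp k^{2H-2}$ by \eqref{eq:binary_acf}, which is \emph{not} summable. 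Consequently empirical frequencies of finite binary patterns have variance of order $n^{2H-2}$ rather than $n^{-1}$, no Hoeffding/Azuma-type exponential inequality is available (those require independence, martingale differences, or at least summable dependence), and the best concentration one can hope for from Gaussian subordination is of the stretched-exponential form $\exp(-cn^{2-2H})$. Since $2-2H<1$, tests with this power cannot beat the Schwartz denominator bound, which the KL-support argument only controls at the scale $e^{-n\epsilon}$: the numerator-to-denominator ratio diverges and the standard non-i.i.d.\ Schwartz machinery collapses exactly where it is needed. The variance-summability condition you flag at the end is the same obstruction in a different guise, and ``leaning on Szeg\H{o} asymptotics'' does not resolve it; it is not a technical refinement but the place where the direct approach breaks.

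The paper avoids all of this by never touching the binary likelihood. It first proves weak consistency for the posterior \emph{given the latent Gaussian increments} $W$ (a Gaussian observation model, where it can invoke the general theory after the KL computations above), and then lifts the result to the observed indicators through the elementary bound
\begin{equation*}
E_{P_0}\,\Pi(U^c \mid Z_1,\ldots,Z_n) \;\leq\; \delta \;+\; P_0\bigl\{\Pi(U^c \mid W, Z_1,\ldots,Z_n) > \delta\bigr\},
\end{equation*}
valid for every $\delta>0$ because $\Pi(U^c \mid W, Z_1,\ldots,Z_n)\leq 1$, followed by letting $n\to\infty$ and then $\delta\downarrow 0$. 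This marginalization device -- integrating the conditional posterior against $\Pi(W \mid Z_1,\ldots,Z_n)$ -- is the key idea your proposal is missing; with it, no tests for the long-range dependent binary sequence are ever needed. If you want to salvage a direct proof on $\{0,1\}^n$, you would have to either construct tests whose error decays faster than the prior-mass lower bound (which requires strengthening the denominator bound to match the $n^{2-2H}$ scale), or find a replacement for the testing approach altogether; as written, that step would fail.
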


A proof of Theorem \ref{theorem:weak_consistency} is provided in the Appendix. 

\section{Simulation experiments} \label{sec:frac_prob_simulation}
We report the results of a detailed simulation study for different choices of the latent trend function $f(\cdot)$ in equation \eqref{eq:trend_and_noise}\textcolor{red}{,} while varying the number of replications $R$. We assume discretized observations are available for a period of $n = 90$ time units and the number of replications $R$ considered is $\{10, 25, 50\}$. The following choices of the trend function are considered:
\begin{enumerate}
    \item $f_1(t) = \sin \frac{4\pi t}{90}$
    \item $f_2(t) = 5[1 + \exp\{-2.5(t - 45)/15\}]^{-1}$
    \item $f_3(t) = -2\{(t-45)/45\}^2 + 2$
    \item $f_4(t) = -1.2 \{(t-45)/45\} + 0.5 \cos(\frac{3\pi t}{90}) - 1.7$
    \item $f_5(t) = 0.1 f_1(t)\log \{f_2(t)\}$.
\end{enumerate}
We note here that $f_2(\cdot)$ slightly violates the assumption that the non-stationary component in model \eqref{eq:trend_and_noise} at $t = 0$ is 0. We define the squared empirical $\ell_2$ norm of a function $g(\cdot)$ evaluated on the points $\{t_1, \ldots, t_n\}$ as $\norm{g}_{2,n} = n^{-1} \sum_{i=1}^n \{g(t_i)\}^2$. Given an estimator $\tilde{f}(\cdot)$ of $\underline{f}(\cdot) := f(\cdot)/\tau$ in model \eqref{eq:frac_prob_with_priors}, we evaluate the performance of Algorithm \ref{algo:MCMC_one_species} by computing the {\it{Re}}lative Mean Square Error (ReMSE) defined as $\mbox{ReMSE} = \norm{e}_{2,n}/\norm{\underline{f}}_{2,n}$, where $e(\cdot) = \underline{f}(\cdot) - \tilde{f}(\cdot)$.  The latent trends $f(\cdot)$ are chosen from the aforementioned list and $\tilde{f}(\cdot)$ is set to be the pointwise posterior mean of $f(\cdot)/\tau$ at $\{t_1, \ldots, t_{90}\}$ obtained under the hierarchy \eqref{eq:frac_prob_with_priors}. We considered three choices for the Hurst exponent, namely, $\{0.5, 0.75, 0.9\}$ ranging from independent increments for $H = 0.5$ to highly correlated increments for $H = 0.9$. We generated the binary data by first evaluating $y(t) = f(t) + B_H(t)$ at $\{t_0, t_1, \ldots, t_n\}$; to simulate the noise vector we sampled $\epsilon^H  \sim \Gauss(0, \tau^2 \Sigma_H)$ with $\tau^2 = 0.05^2, 0.1^2, 0.15^2$. Representing each positive increment of $y(\cdot)$ by 1 the discretized series $Z$ is obtained and the sampling is repeated $R$ times to complete the data generation process. For each combination of $f(\cdot)$, $H$, $\tau$ and $R$ we performed 30 independent evaluations of the proposed framework and in Table \ref{tab:mse} we report the average ReMSE and the average estimated Hurst exponent for $\tau = 0.1$ with the value of $\nu$ fixed at 0.001; results for $\tau = 0.05, 0.15$ are provided in the supplementary document.

\begin{center}
\begin{table}[!ht]
\caption{Relative mean square error (ReMSE) for different choices of the latent trend function $f(t)$ for the model \eqref{eq:fractional_probit} under hierarchy \eqref{eq:frac_prob_with_priors}. For each $f(t)$ three values of the Hurst exponent are considered: $\{0.5, 0.75, 0.9\}$ together with $\{10, 25, 50\}$ replications. The results reported are averages of 30 independent simulation experiments for each combination.}
\huge
\centering
\scalebox{0.43}{
\begin{tabular}{cccccccccccc}\toprule
 && \multicolumn{2}{c}{$f_1(t)$} & \multicolumn{2}{c}{$f_2(t)$} & \multicolumn{2}{c}{$f_3(t)$} & \multicolumn{2}{c}{$f_4(t)$} & \multicolumn{2}{c}{$f_5(t)$} \\
\cmidrule{1-12}
Hurst exponent ($H$) & Replications ($R$)& MSE & $\hat{H}$ & MSE & $\hat{H}$ & MSE & $\hat{H}$ & MSE & $\hat{H}$ & MSE & $\hat{H}$\\
\cmidrule{1-12}
\multirow{3}{*}{0.5} & 10 & 1.26 & 0.55 & 1.02 & 0.49 & 0.12 & 0.52 & 0.09 & 0.52 & 1.38 & 0.50  \\
 & 25 & 0.58 & 0.48 & 0.40 & 0.48 & 0.01 & 0.50 & 0.01 & 0.51 & 0.96 & 0.51\\
 & 50 & 0.40 & 0.54 & 0.17 & 0.50 & 0.007 & 0.48 & 0.005 & 0.50 & 0.08 & 0.50\\
 \cmidrule{1-12}
 \multirow{3}{*}{0.75} & 10 & 2.13 & 0.76 & 1.88 & 0.76& 0.14 & 0.74 & 0.28 & 0.76 & 4.81 & 0.74\\
  & 25 & 1.37 & 0.75 & 1.20 & 0.77 & 0.06 & 0.76 & 0.03 & 0.75 & 1.46 & 0.75\\
 & 50 & 0.84 & 0.74 & 0.24 & 0.74  & 0.04 & 0.75 & 0.02 & 0.75 & 0.55 & 0.74\\
 \cmidrule{1-12}
 \multirow{3}{*}{0.9} & 10 & 4.18 & 0.88 & 6.52 & 0.87 & 0.70 & 0.88 & 0.20 & 0.90 & 14.96 & 0.87\\
  & 25 & 3.08 & 0.89 & 2.61 & 0.89 & 0.29 & 0.89 & 0.18 & 0.89 & 5.87 & 0.93\\
 & 50 & 1.11 & 0.89 & 0.99 & 0.88 & 0.08 & 0.87 & 0.07 & 0.89 & 3.34 & 0.88\\
 \hline
 \end{tabular}
}
\label{tab:mse}
\end{table}
\end{center}

\begin{figure}
\begin{subfigure}{0.43\textwidth}
\centering
\includegraphics[height = 4.5cm, width = \textwidth]{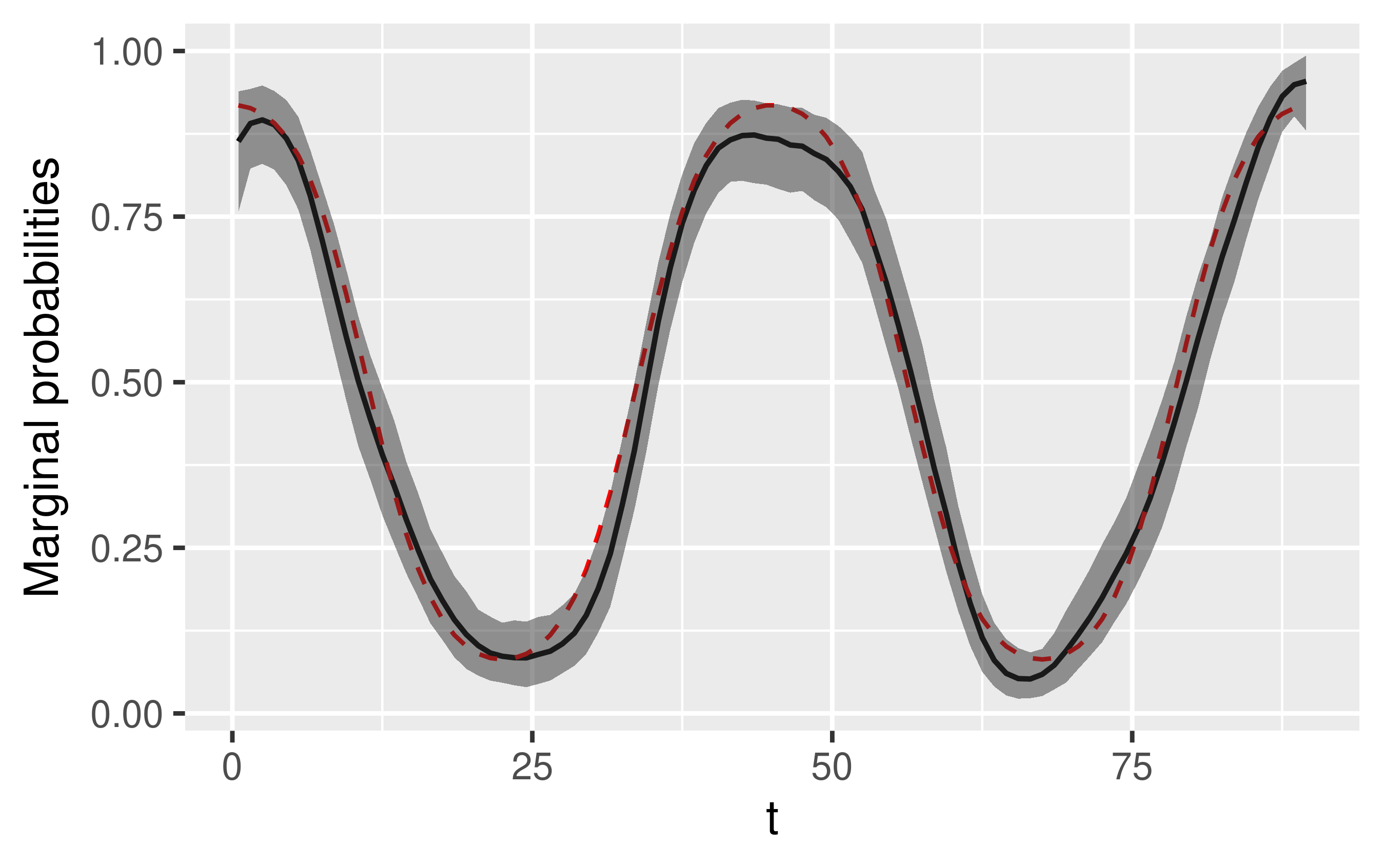} 
\caption{$y(t) = \sin \frac{4\pi t}{90} + B_H(t)$}
\end{subfigure}
\begin{subfigure}{0.50\textwidth}
\centering
\includegraphics[height = 4.4cm, width = \textwidth]{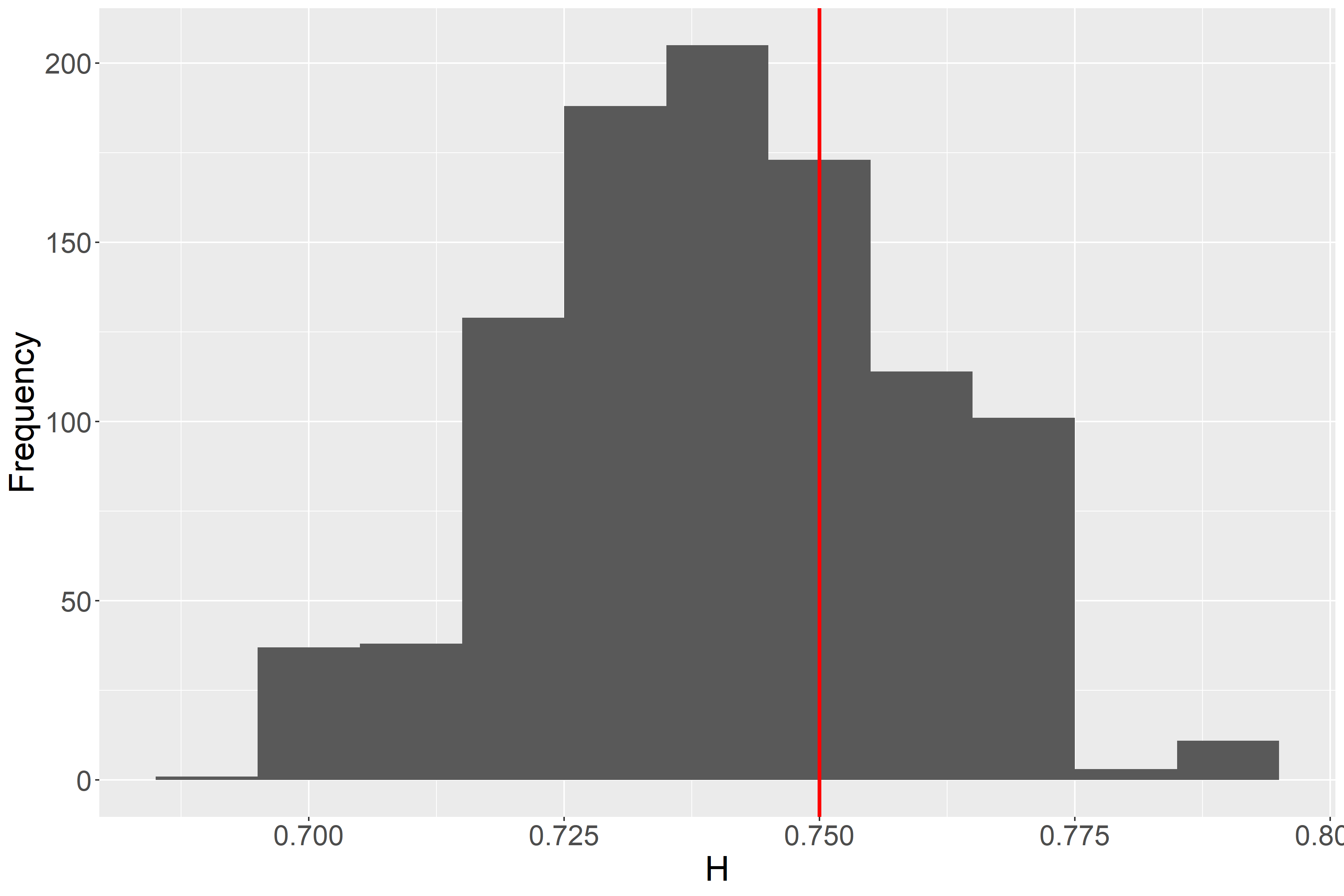}
\caption{Corresponding posterior samples of $H$}
\end{subfigure}
\caption{Figure (a) shows the posterior mean and 95\% credible bands for marginal probabilities in one minute intervals when $f(t) =\sin \frac{4\pi t}{90} $. The values of the Hurst coefficient and the number of replications were $H = 0.75$ and $R = 50$, respectively. Red dashed and black solid lines correspond to the true values and the posterior mean respectively. Gray shaded regions are credible bands. Corresponding posterior samples of $H$ are shown in (b). A red line is added at the true value $H = 0.75$.} 
\label{fig:simulation_figures}
\end{figure}
Naturally, the ReMSE in Table \ref{tab:mse} is inversely proportional to the number of replications $R$; decreasing by a factor of two when the number of replications is doubled. Interestingly, the degree of LRD also controls the ReMSE. For all the choices of $f(\cdot)$, the average ReMSE increases with $H$. Large $H$ implies strong dependence in the data which makes the problem of recovering $f(\cdot)$ harder. This was investigated formally in \cite{hall1990nonparametric} who observed that the rates of recovering $f(\cdot)$ decrease with $H$.
Estimates of the Hurst exponent are quite accurate across all the combinations of $R$, $H$ and $f(\cdot)$. This is important in the context of the Amazon bird vocalization data for which we have on average 10 days of data. In Figure \ref{fig:simulation_figures} we show the posterior mean and credible bands together with the true value for the marginal probabilities of events during intervals of size one time unit when the true Hurst exponent is set to $0.75$ and the number of replications available is $R = 50$. Set $p(t_1, t_2) = \Phi[\{f(t_2) - f(t_1)\}/\tau]$ as the true marginal probability under model \eqref{eq:trend_and_noise} with trend function $f(\cdot)$ and let $\hat{p}(t_1, t_2) = \Phi[\{\hat{f}(t_2) - \hat{f}(t_1)\}/\hat{\tau}]$ denote samples from the posterior distribution of $f$ and $\tau$ obtained fitting Algorithm \ref{algo:MCMC_one_species}. The black line in Figure \ref{fig:simulation_figures} is the posterior mean of the marginal probabilities $\hat{p}(t_1, t_2)$ and the red line plots $p(t_1, t_2)$. We also show the pointwise 95\% credible bands of $\hat{p}(t_1, t_2)$.  The best result is obtained for $f_1(t)$. The credible bands mostly provide accurate uncertainty quantification for all the cases. However, when the number of replications $R$ is smaller the problem of accurately estimating the marginal probabilities becomes much harder, especially for high values of $H$. Posterior samples of the Hurst exponent for one case are also included in the figure.  

To further investigate the behavior of the posterior distribution of the Hurst exponent, we carried out an independent simulation experiment focusing on the coverage probability of the credible intervals. We fix the number of replicates at $R = 5$ and vary the Hurst exponent together with the latent trends as above. For each such combination, we generated 100 data sets and applied model \eqref{eq:frac_prob_with_priors}. Our findings for 95\% credible intervals are summarized in Table \ref{tab:Hurst_coverage}. The coverage probabilities (CP) for all the cases considered are close to the nominal level. The average lengths (l) of the intervals vary substantially for different choices of $H$ along with the standard deviation. For example, the average length of the intervals are maximum for the case $H = 0.5$ with very little variation but when $H = 0.9$ the intervals become shorter on average although their variability increases by almost a factor of 3.

\begin{table}[!ht]
    \scalebox{0.82}{
    \begin{tabular}{ccccccccccc}\toprule
   \multirow{2}{*}{ $H$ }      & \multicolumn{2}{c}{$f_1(\cdot)$} & \multicolumn{2}{c}{$f_2(\cdot)$} & \multicolumn{2}{c}{$f_3(\cdot)$} & \multicolumn{2}{c}{$f_4(\cdot)$} & \multicolumn{2}{c}{$f_5(\cdot)$} \\
   \cmidrule{2-11}
      & CP & l &  CP & l  & CP & l  & CP & l  & CP & l   \\
      \cmidrule{1-11}
      0.5 & 0.97 & 0.16(0.05) & 0.94 & 0.19(0.04) & 0.92 & 0.20(0.02) & 0.92 & 0.21(0.03) & 0.98 & 0.20(0.02)\\
      0.75 & 0.91 & 0.15(0.03) & 0.92 & 0.14(0.14) & 0.92 & 0.14(0.02) & 0.90 & 0.14(0.02) & 0.93 & 0.13(0.01)\\
      0.9 & 0.90 & 0.13(0.10) & 0.89 & 0.14(0.11) & 0.91 & 0.12(0.10) & 0.90 & 0.12(0.09) & 0.94 & 0.14(0.10)\\
      \cmidrule{1-11}
    \end{tabular}
    }
    \caption{Coverage probability (CP) of 95\% credible intervals for the Hurst exponent under the hierarchy \eqref{eq:frac_prob_with_priors}. Also included are the average length (l) of the credible intervals with corresponding standard deviation inside parenthesis. The number of replicates in each case is $R = 5$.}
    \label{tab:Hurst_coverage}
\end{table}

\section{Application to Amazon bird vocalization data} \label{sec:amazon_frap}
\subsection{Analysis and results}
We applied the FRAP model to the 15 bird species mentioned in Section \ref{sec:primary_analysis}. For each of these species we have 180 minutes of recordings available for multiple days. The estimated Hurst exponents for these 15 species are reported in Table $\ref{tab:hurst_data}$. All the species show high long range dependence in their temporal vocalization patterns indicating strong evidence of non-Poissonian dynamics. The posterior mean estimate of the Hurst exponent for the birds range from a minimum of 0.83 up to 0.94. The variation in the Hurst exponent across species is very small with an overall mean of 0.88 and standard deviation 0.04. The high value of the Hurst exponents is consistent with the data in the sense that birds either vocalize or remain silent over long periods of time. We note that this is a combination of two factors, which are occurrence and vocalization activity. First, due to their movement activity, a bird individual may be in the vicinity of the recorder for some time and then move to another location. Second, conditional on the bird being present, it may sustain its vocalization activity over some time and remain silent over another time.

Figure \ref{fig:individual_trends} shows posterior means and 95\% pointwise intervals for the species-specific marginal probabilities of vocalizations occurring in each of the 180 
time intervals between $5.15$ - $8.15$ AM for all 15 species listed in Table \ref{tab:hurst_data}.  Due to data sparsity and the high Hurst exponent, the raw posterior samples exhibited spiky patterns over time, and hence we (mildly) smoothed the samples prior to calculating the posterior summaries in Figure \ref{fig:individual_trends}.  While these trends should not be over-interpreted, we do see some general patterns appearing. For example, for Cercomarca cinerascens, Frederickena viridis, Grallaria varia, Micrastur mirandollei, Myrmeciza ferruginea, Percnostola ruffifrons, Pipra erythrocephala, Pithys albifrons and Ramphastos vitellinus we see an increase in vocalization activity after 7 AM, whereas Automolus ochrolaemus, Corythopis torquata, Hylexetastes perrotii, and Ibycter americanus more or less maintain a uniform activity level during this time.  Micrastur gilvicollis and Hylophilus muscicapinus show more activity during the early hours of the day. Since groups of birds show similar vocalization patterns, in the Supplementary Materials Section S.1, we extend the FRAP framework to a hierarchical setting that shares information across different species.

\begin{figure}[!ht]
\centering
\scalebox{0.88}{
\begin{subfigure}{0.3\textwidth}
\includegraphics[width=0.95\linewidth, height=3.5cm]{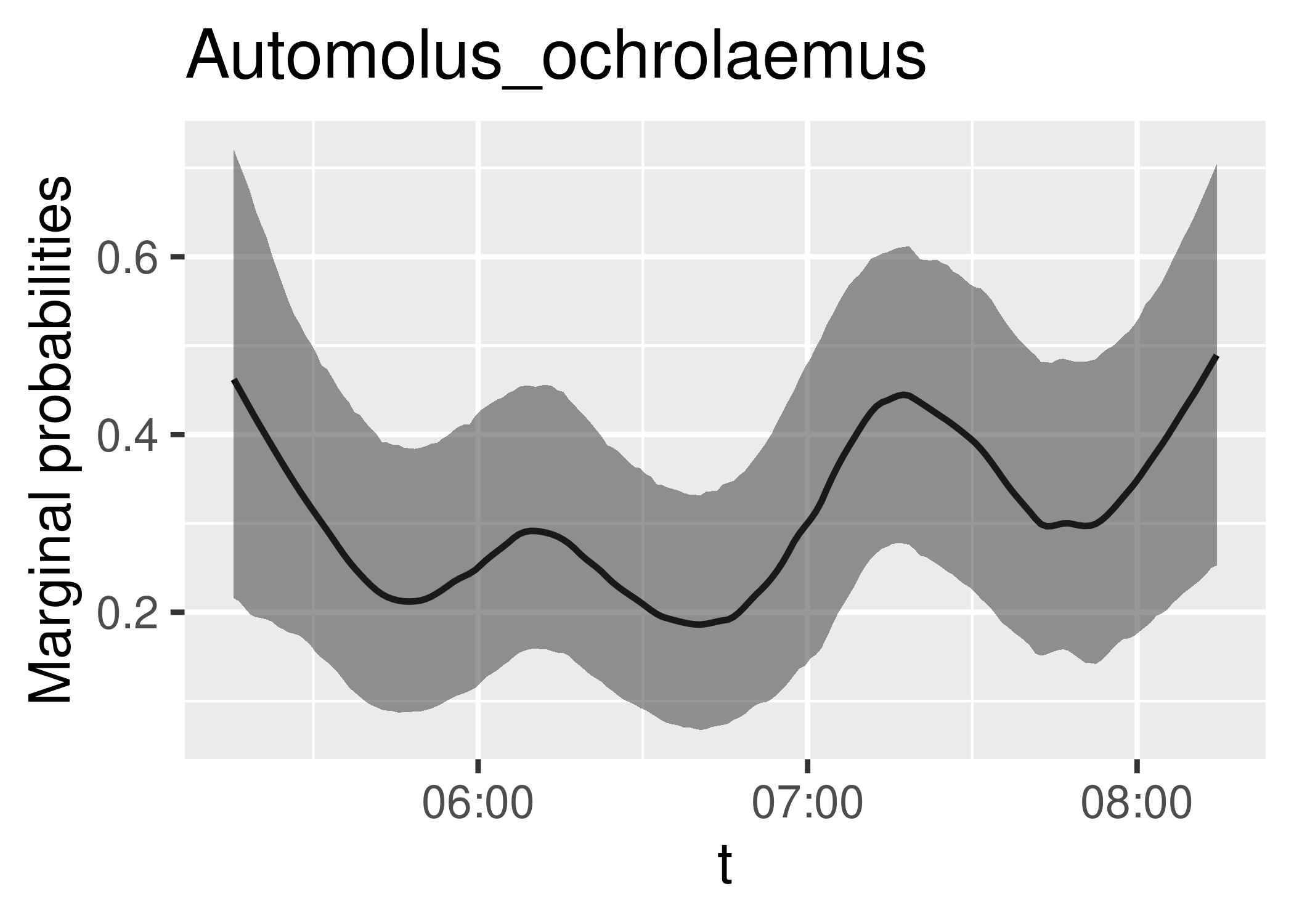} 
\end{subfigure}
\begin{subfigure}{0.3\textwidth}
\includegraphics[width=0.95\linewidth, height=3.5cm]{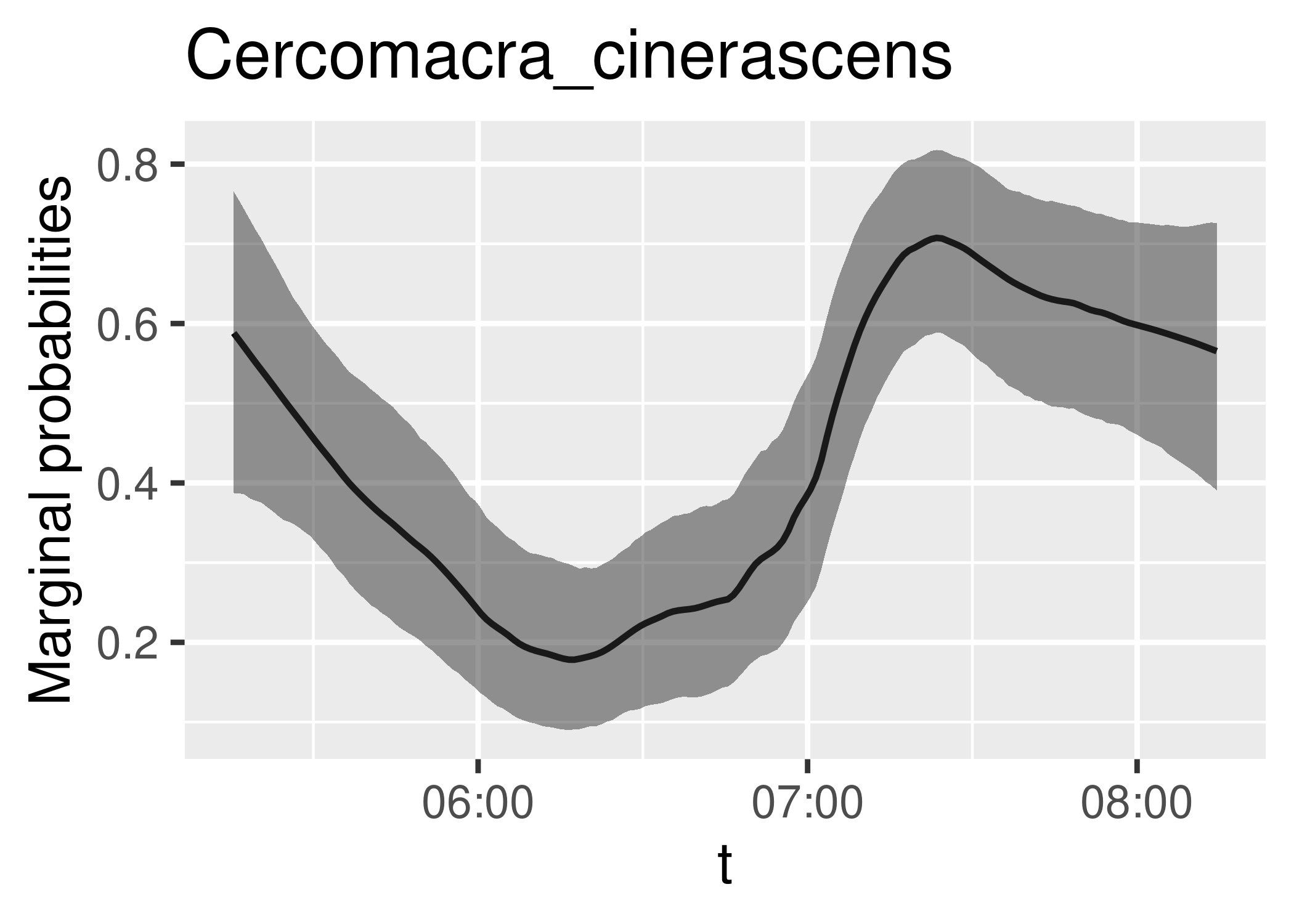}
\end{subfigure}
\begin{subfigure}{0.3\textwidth}
\includegraphics[width=0.95\linewidth, height=3.5cm]{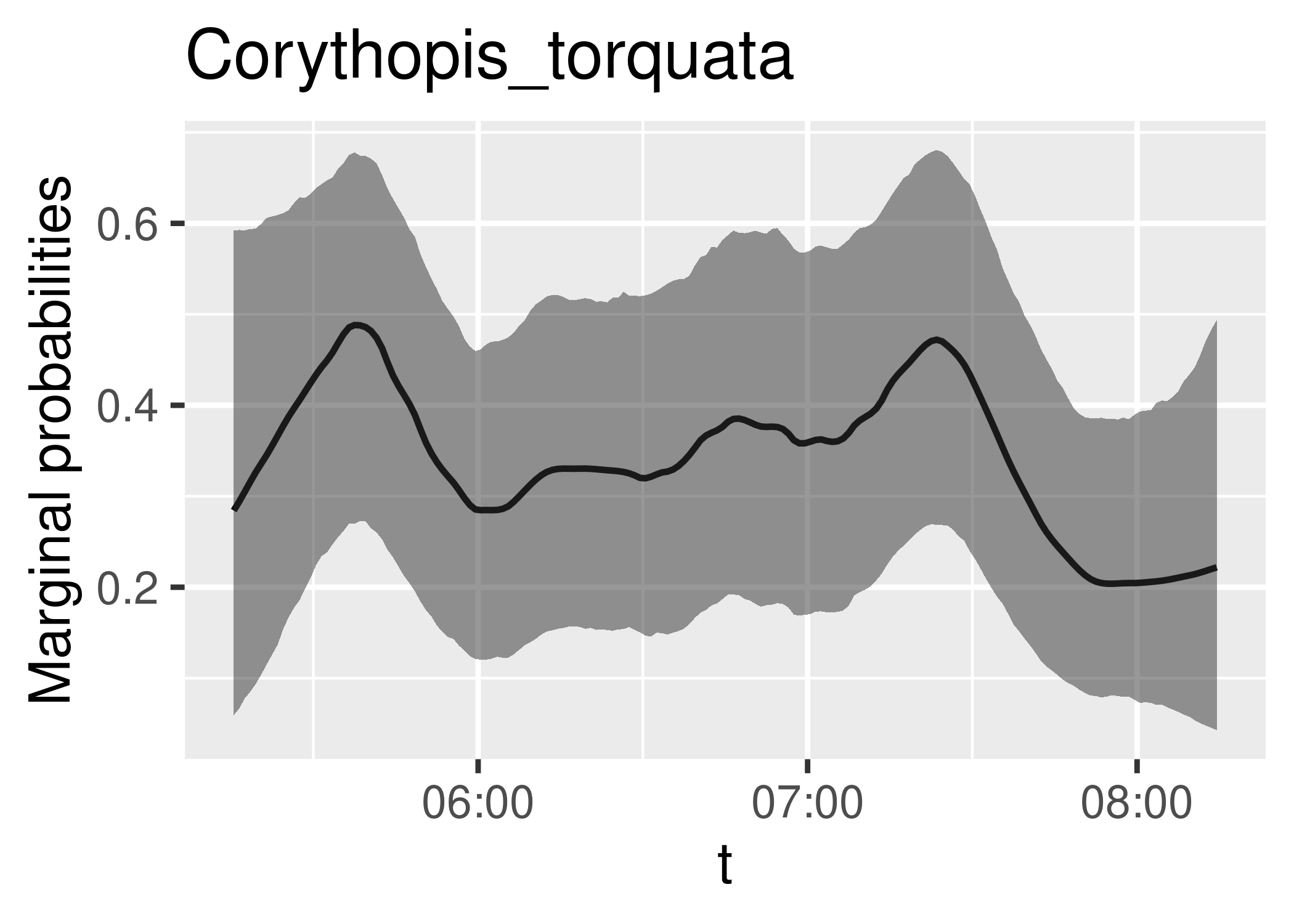}
\end{subfigure}
}
\scalebox{0.88}{ 
\begin{subfigure}{0.3\textwidth}
\includegraphics[width=0.95\linewidth, height=3.5cm]{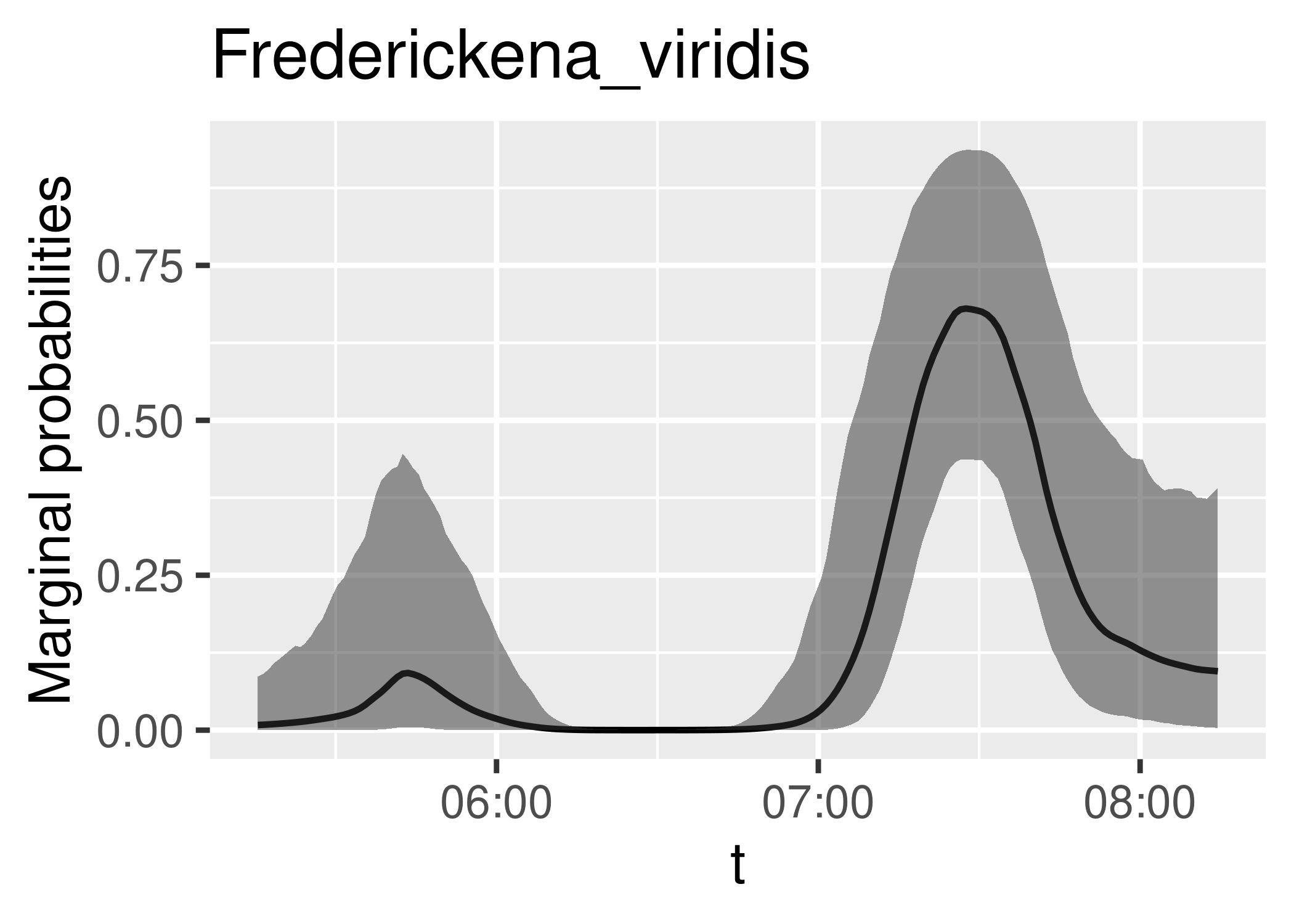}
\end{subfigure}
\begin{subfigure}{0.3\textwidth}
\includegraphics[width=0.95\linewidth, height=3.5cm]{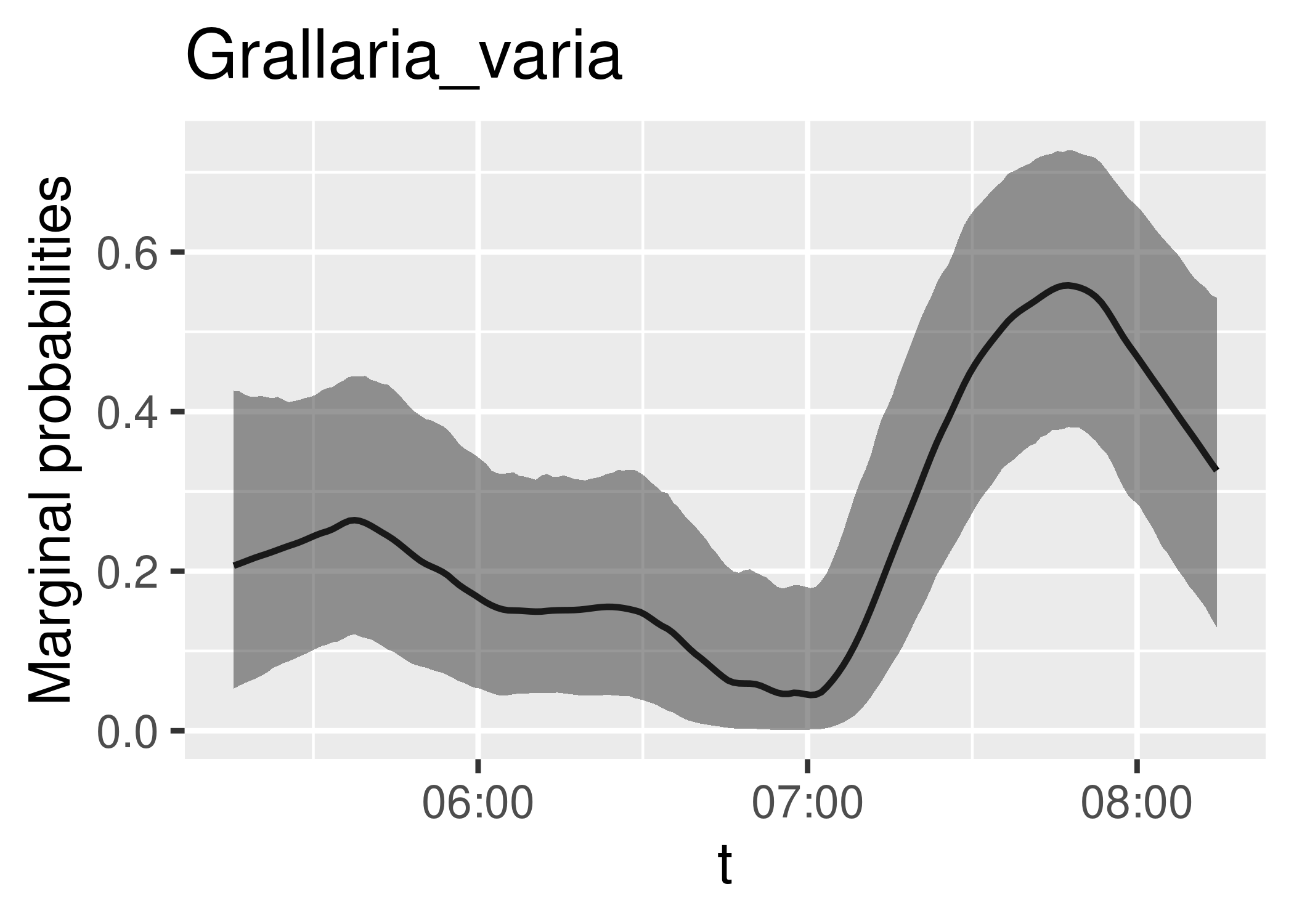}
\end{subfigure}
\begin{subfigure}{0.3\textwidth}
\includegraphics[width=0.95\linewidth, height=3.5cm]{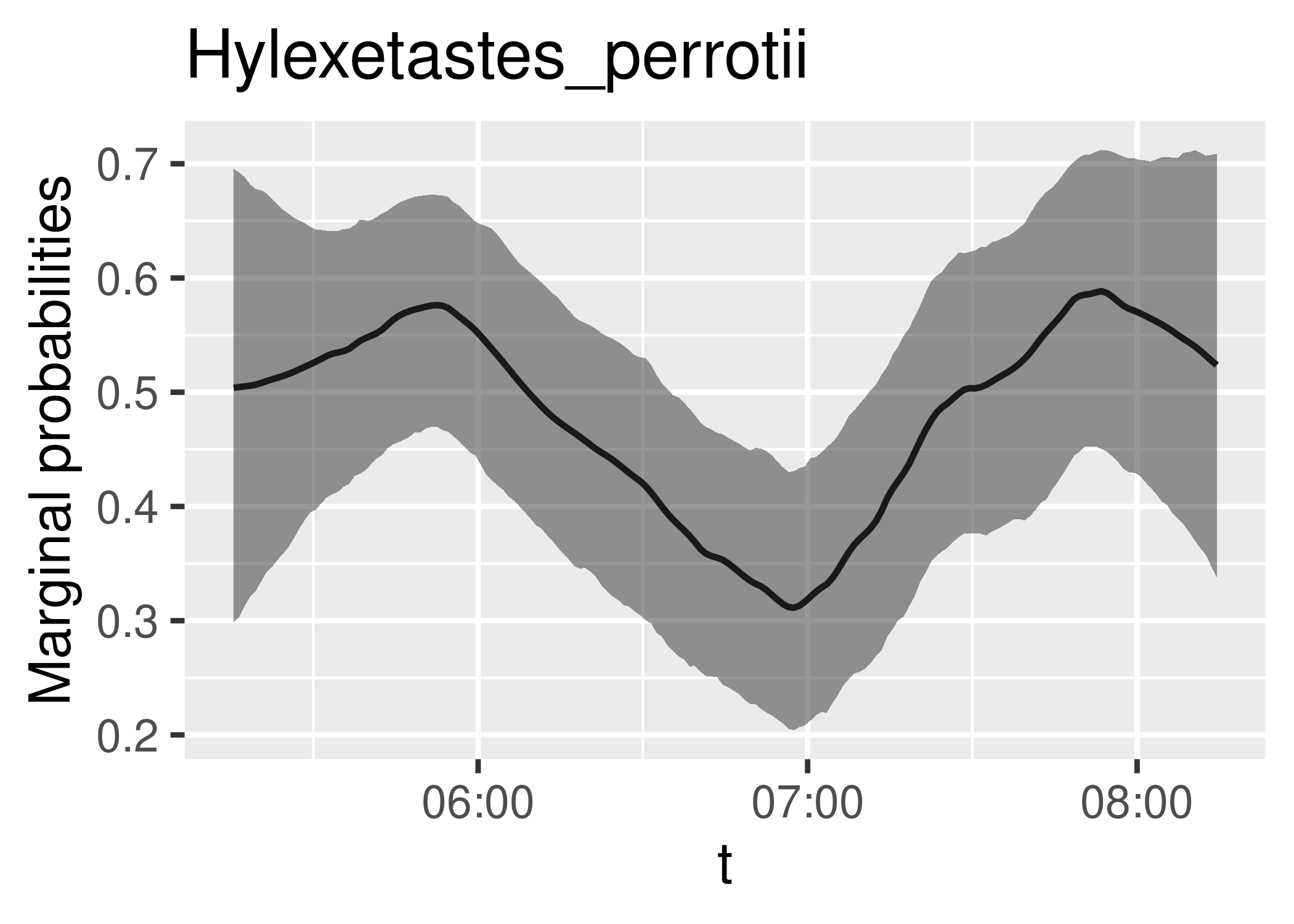}
\end{subfigure}
}
\scalebox{0.88}{ 
\begin{subfigure}{0.3\textwidth}
\includegraphics[width=0.95\linewidth, height=3.5cm]{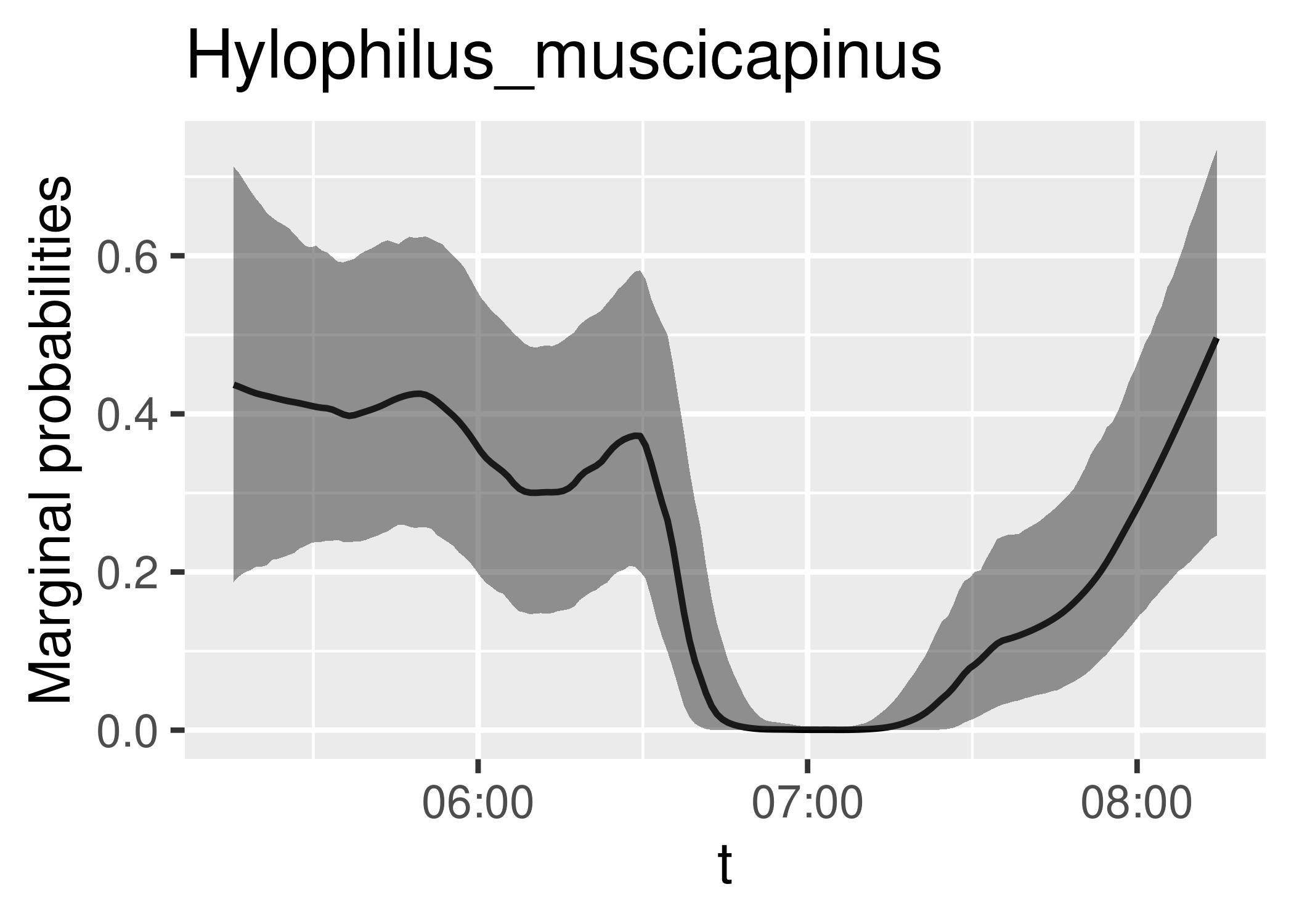}
\end{subfigure}
\begin{subfigure}{0.3\textwidth}
\includegraphics[width=0.95\linewidth, height=3.5cm]{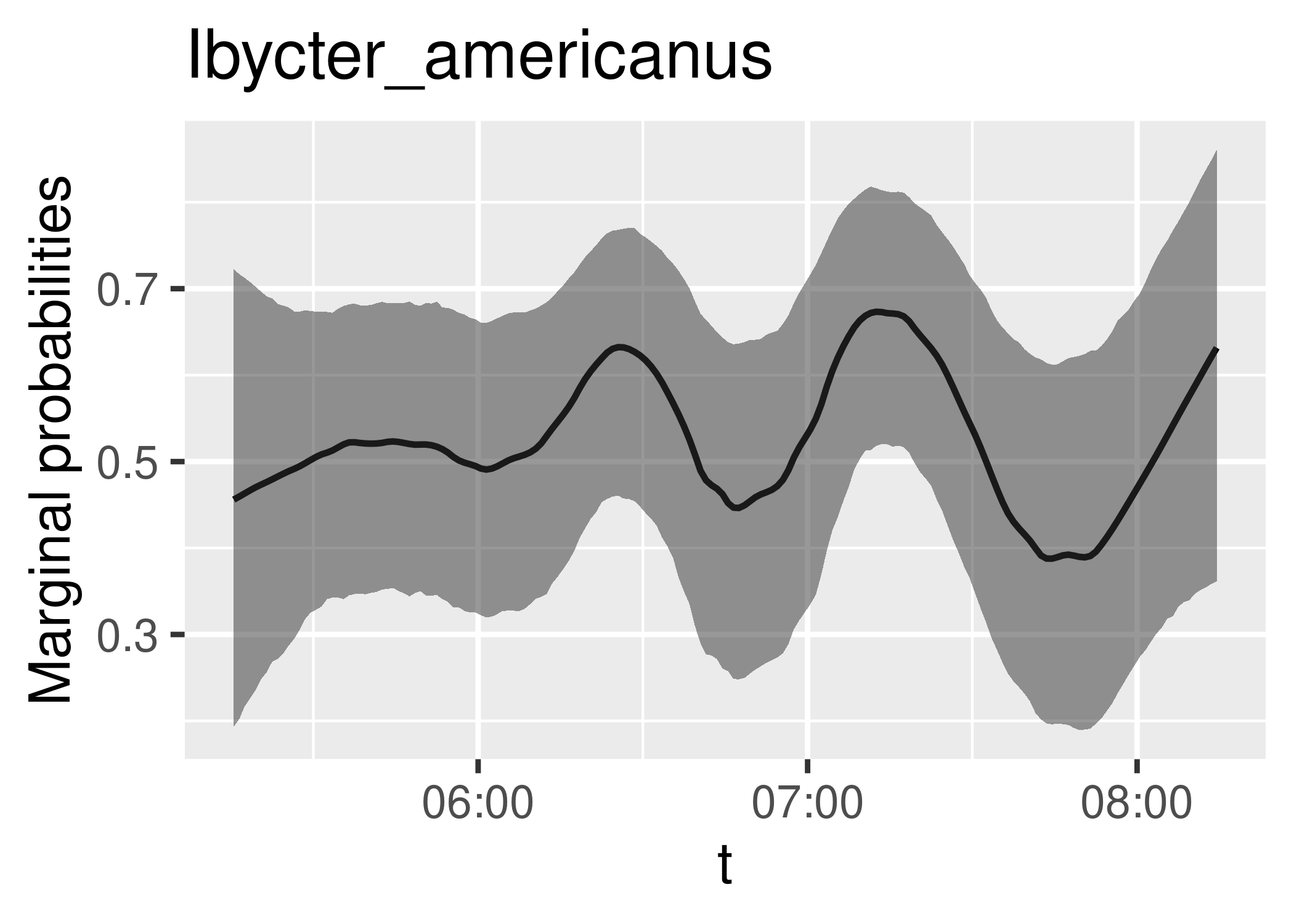}
\end{subfigure}
\begin{subfigure}{0.3\textwidth}
\includegraphics[width=0.95\linewidth, height=3.5cm]{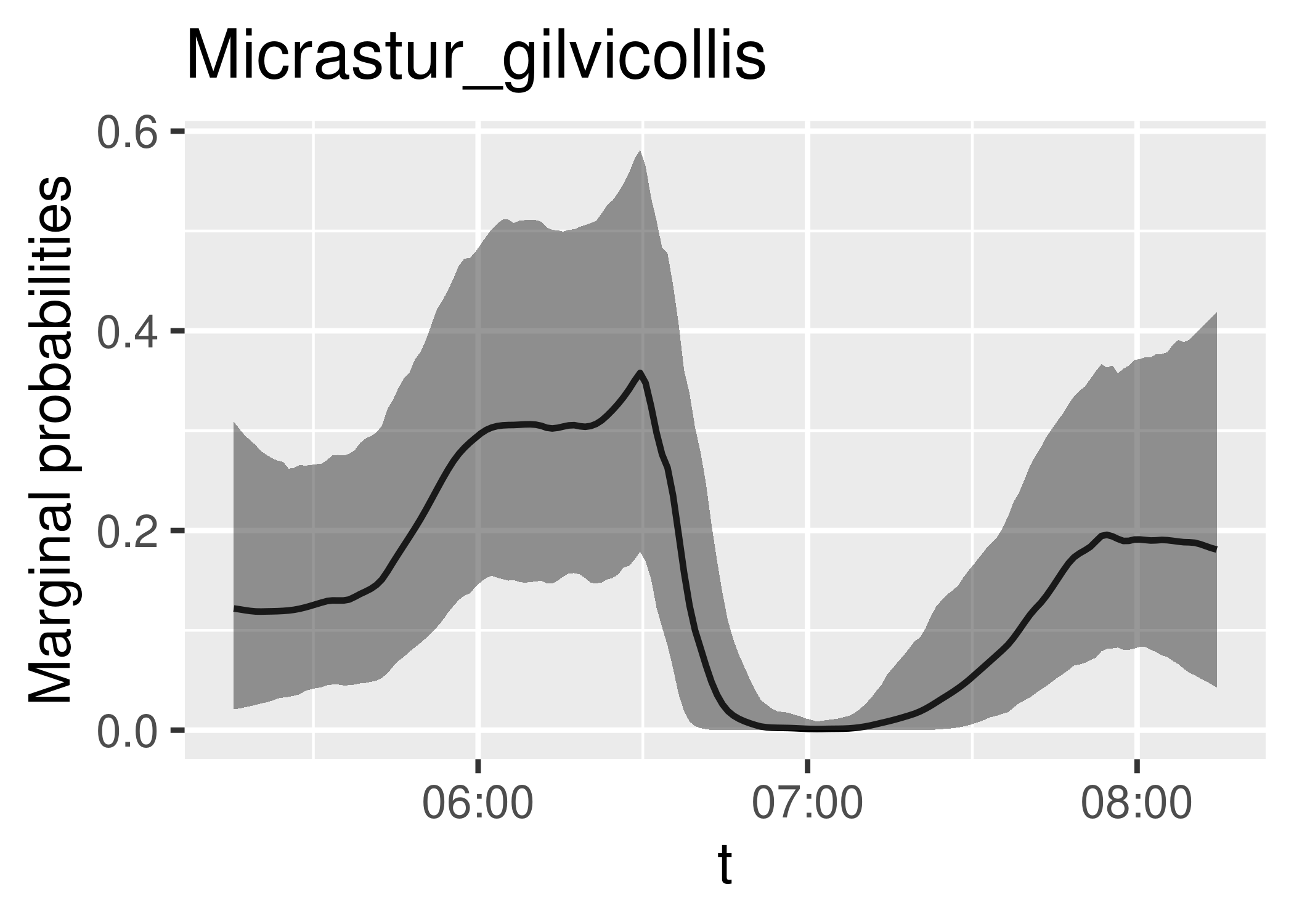}
\end{subfigure}
}
\scalebox{0.88}{ 
\begin{subfigure}{0.3\textwidth}
\includegraphics[width=0.95\linewidth, height=3.5cm]{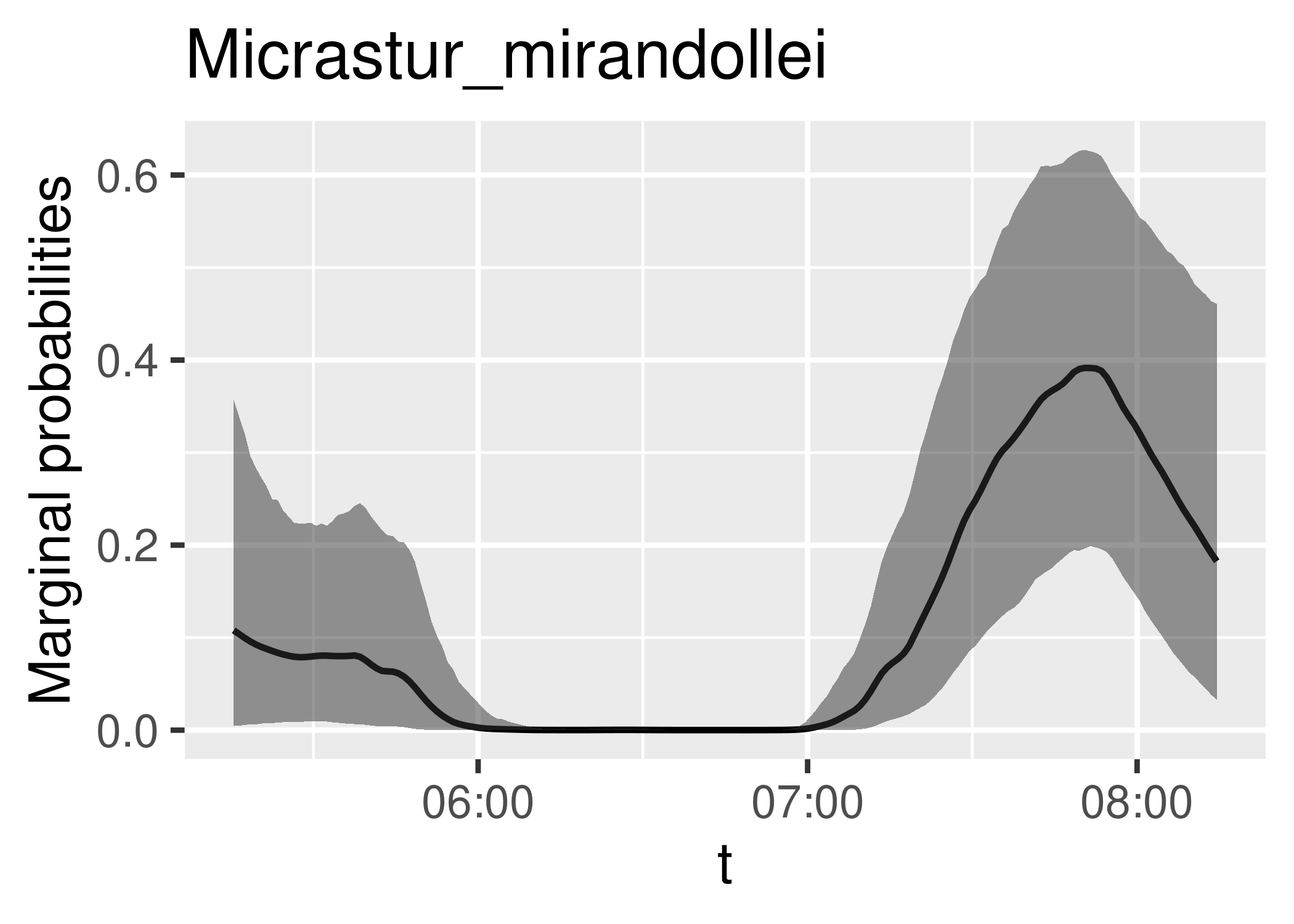}
\end{subfigure}
\begin{subfigure}{0.3\textwidth}
\includegraphics[width=0.95\linewidth, height=3.5cm]{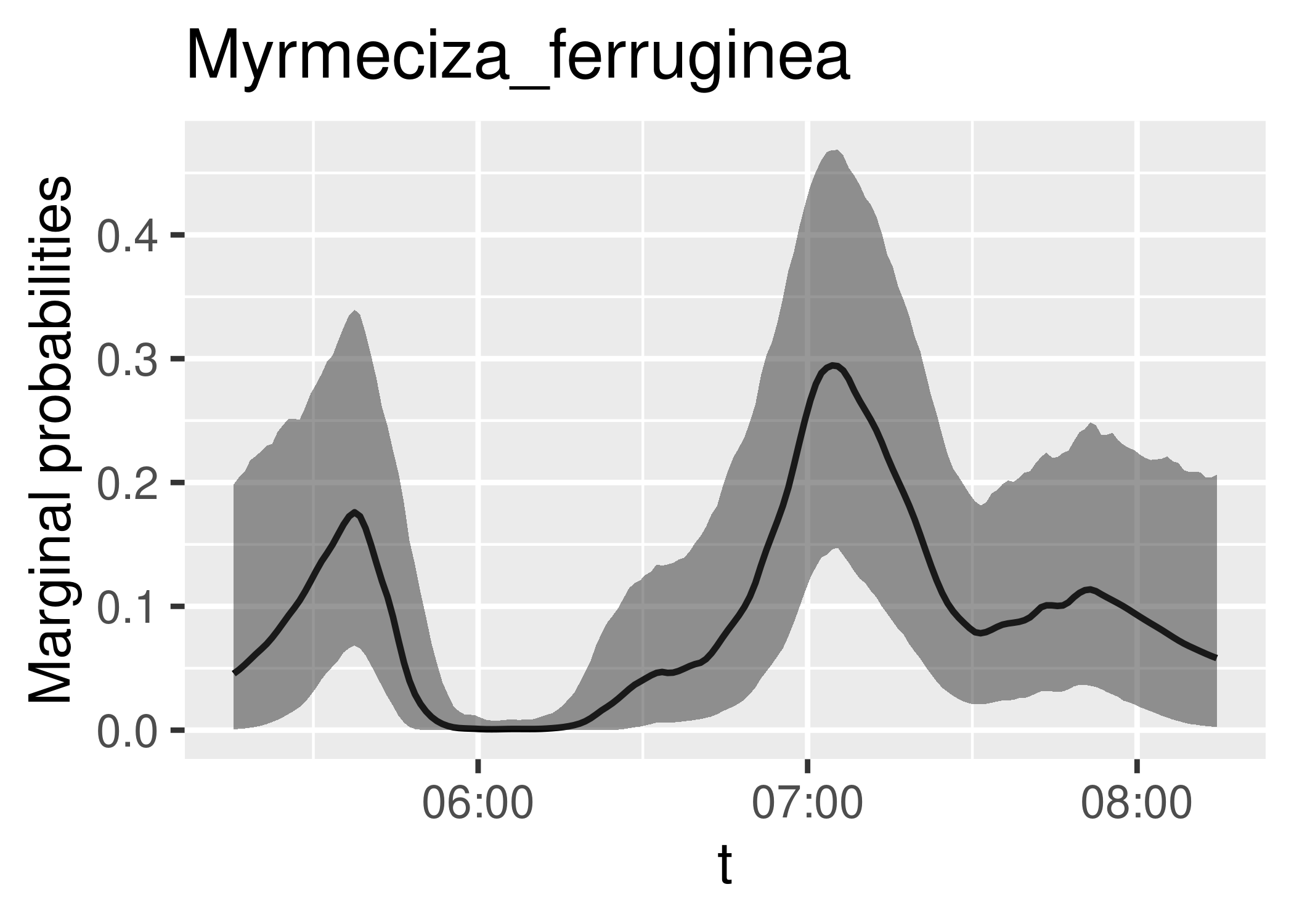}
\end{subfigure}
\begin{subfigure}{0.3\textwidth}
\includegraphics[width=0.95\linewidth, height=3.5cm]{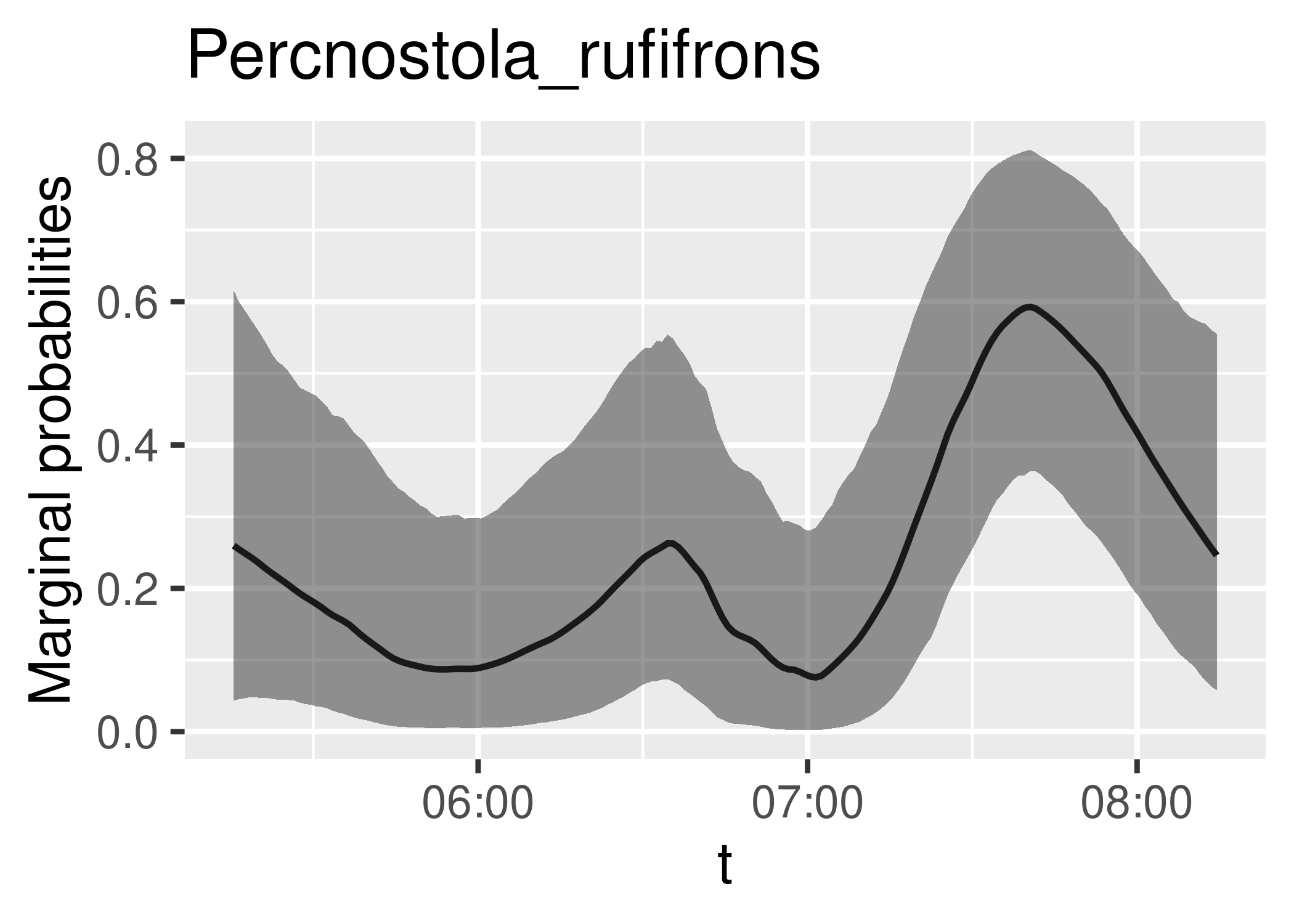}
\end{subfigure}
}
\scalebox{0.88}{ 
\begin{subfigure}{0.3\textwidth}
\includegraphics[width=0.95\linewidth, height=3.5cm]{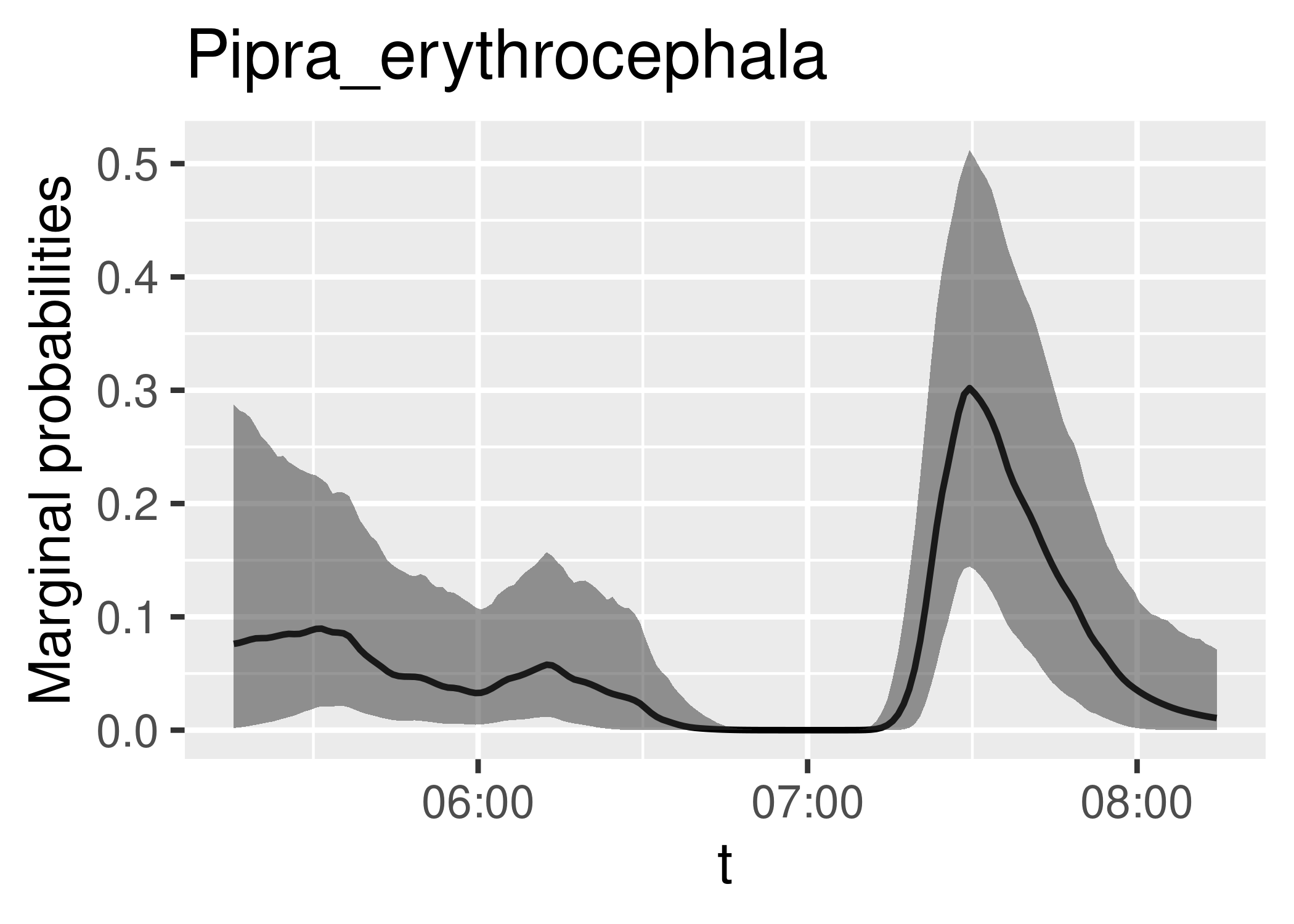}
\end{subfigure}
\begin{subfigure}{0.3\textwidth}
\includegraphics[width=0.95\linewidth, height=3.5cm]{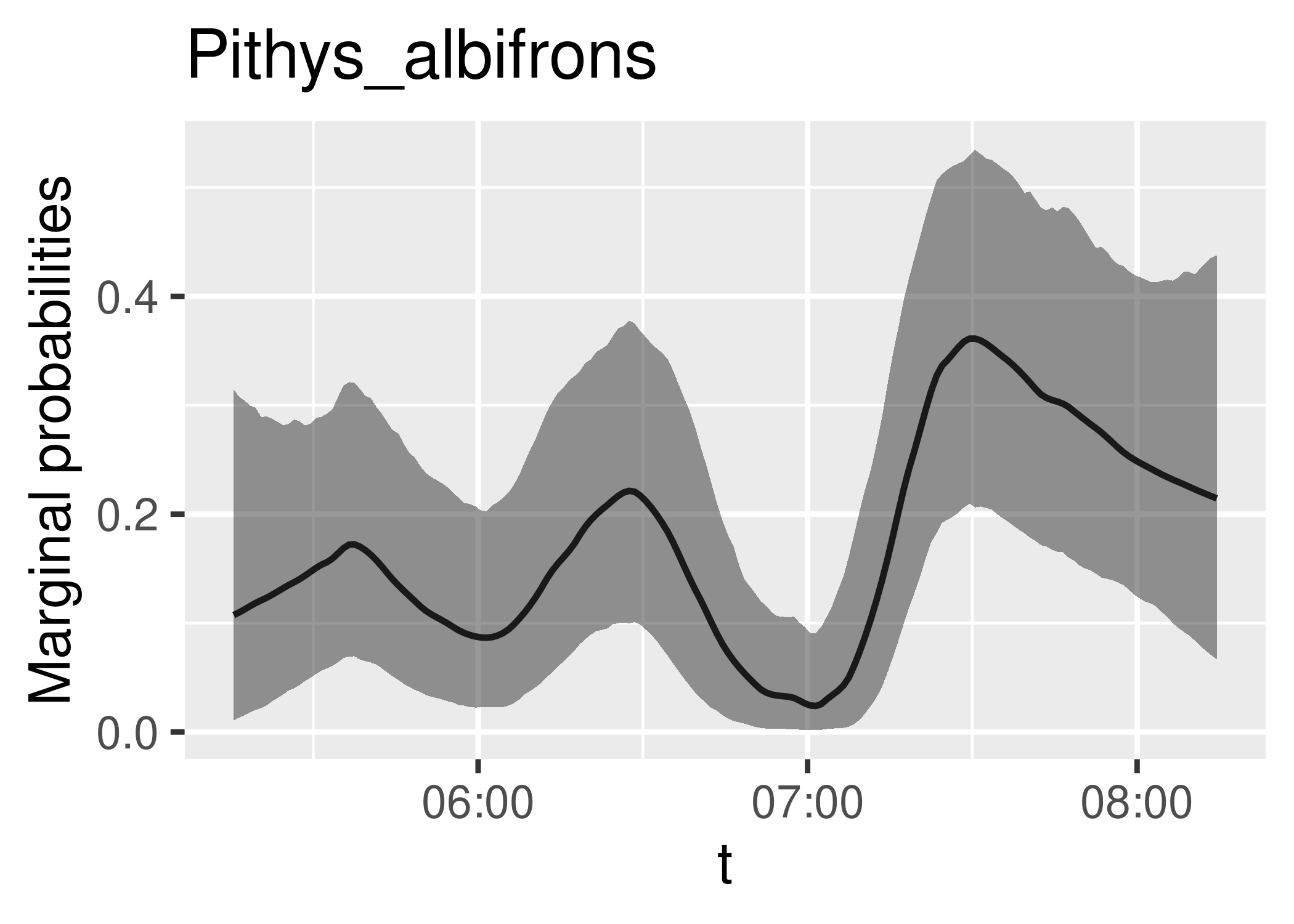}
\end{subfigure}
\begin{subfigure}{0.3\textwidth}
\includegraphics[width=0.95\linewidth, height=3.5cm]{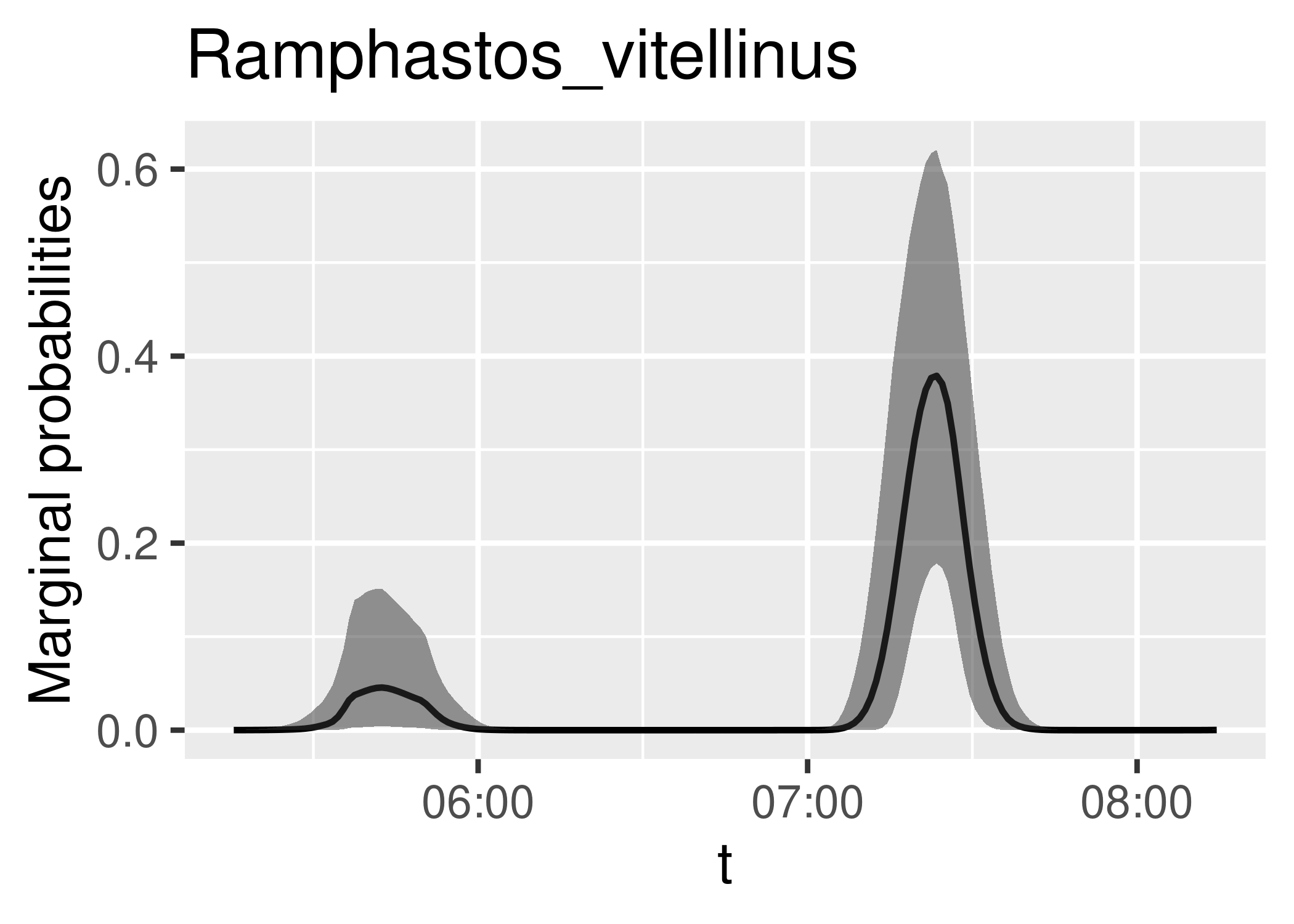}
\end{subfigure}
}
\caption{Smoothed marginal probabilities of vocalization obtained by fitting model \eqref{eq:trend_and_noise} for the 15 species listed in Table \ref{tab:hurst_data} for 180 test intervals of duration one minute from $5.15$ - $8.15$ AM. Shaded regions are 95\% credible intervals and black lines are posterior means.
}
\label{fig:individual_trends}
\end{figure}

\subsection{FRAP vs MMPP}\label{sec:goodness_of_fit}
We compare the fit of the proposed FRAP model with the MMPP model \citep{davison1996some} for discretized event data via summary statistics derived from the posterior distribution and maximum likelihood estimates, respectively. The particular summary statistics that we are interested are the conditional probabilities in Figure \ref{fig:marginal_and_conditional_probabilites}. In the context of the FRAP model,
the distribution of the binary indicators $Z$ is completely characterized by the latent variables $W$. The posterior predictive distribution of $W_{R+1}$ given the observed binary indicators $Z_1, \ldots, Z_R$ is $p(W_{R+1} \mid Z_1, \ldots, Z_R) = \int p(W_{R+1} \mid \theta_*) p(\theta_* \mid Z_1, \ldots, Z_R)$, where $\theta_* = (\mathbf{f}, \beta, \tau, \sigma, \phi)^\T$ and $p(W_{R+1} \mid \theta_*) \sim \mathrm{N}(A\mathbf{f}, \tau^2\Sigma_H)$, $H = \log\{\beta/(1-\beta)\}$. To sample the latent variable $W_{R+1}$, we use the MCMC samples of $\theta_*$ obtained from Algorithm \ref{algo:MCMC_one_species}, i.e. given $\theta_*^{(l)}$, the $l$-th MCMC sample from $p(\theta_* \mid Z_1, \ldots, Z_R)$, we draw $W_{R+1}^{(l)} \sim \mathrm{N}(A\mathbf{f^{(l)}}, \tau^{2{(l)}}\Sigma_{H}^{(l)})$. Then equation \eqref{eq:trend_and_noise} is used to obtain the corresponding binary series $Z_{R+1}^{(l)}$.  

The MMPP assumes event occurrence is governed by specific states of an unobserved continuous time Markov chain, hereafter referred to as CTMC, $X(t)$ with finite state space $\{1, 2, \ldots, K\}$ and instantaneous transition probability matrix $G \in \Re^{K \times K}$. Given the chain is in state $k \in \{ 1,\ldots, K\}$ at time $t$, events occur following a Poisson process with rate $\lambda_k$. The event generating process is then parameterized by the $G$ and $\mathbf{\lambda} = \{\lambda_1, \ldots, \lambda_k\}$. The likelihood of a discretized series of events under the MMPP model has been derived in \cite{davison1996some}. Let $\hat{G}$ and $\hat{L}$ denote the maximum likelihood estimates of $G$ and $L$, respectively using $R$ replicates of binary event indicators $Z_1, \ldots, Z_R$. For the Amazon bird vocalization data, we generate a series of binary event indicators $Z_{R+1}$ using the plug-in estimates $\hat{G}$ and $\hat{L}$ with $k = 2$.

Having generated event indicators $Z_{R+1}$ from the two models for each of the 15 species in Table \ref{tab:hurst_data}, we compute the conditional probability of occurrence of a vocalization given a vocalization in the previous interval for time scales $\Delta t = \{1,2,4,9,15,30,60,90\}$; for the FRAP model we compute the conditional probabilities for each MCMC sample $Z_{R+1}^{(l)}$ and consider the average. In the left panel of Figure \ref{fig:mmpp_vs_frap} we plot these probabilities using the estimates obtained from the MMPP model and in the right panel we plot the average conditional probability for different time scales across MCMC samples. The proposed FRAP model captures the scaling of the conditional probabilities seen in the observed data (Figure \ref{fig:marginal_and_conditional_probabilites}) while the MMPP does not. We also fitted the MMPP with $K = 3$ states but the results were very similar.

\begin{figure}
\centering
 \scalebox{0.95}{
 \begin{subfigure}{0.65\textwidth}
\includegraphics[width=0.95\linewidth, height=6cm]{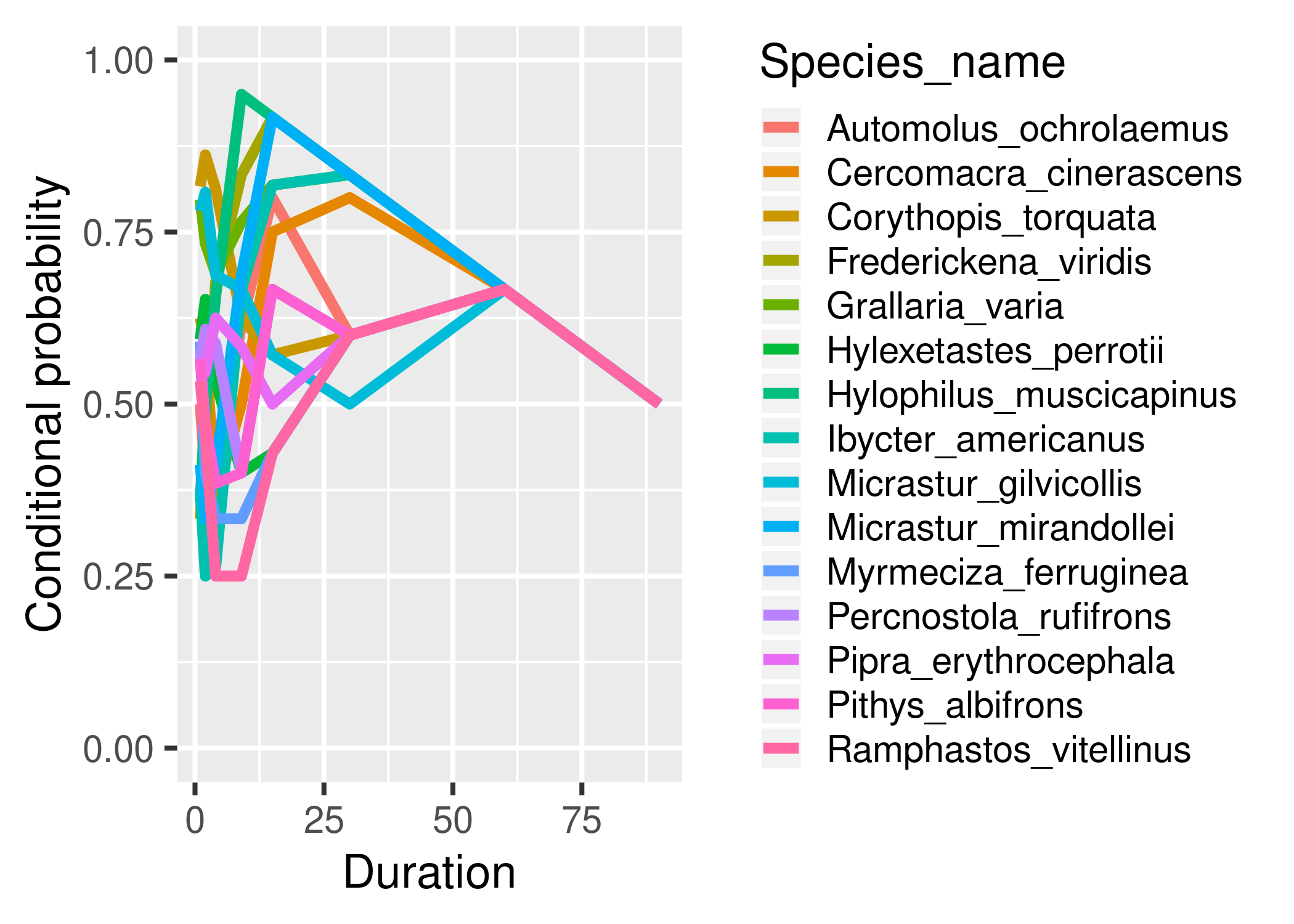} 
\caption{MMPP}
\label{fig:subim1}
\end{subfigure}
\begin{subfigure}{0.35\textwidth}
\includegraphics[width=0.85\linewidth, height=6cm]{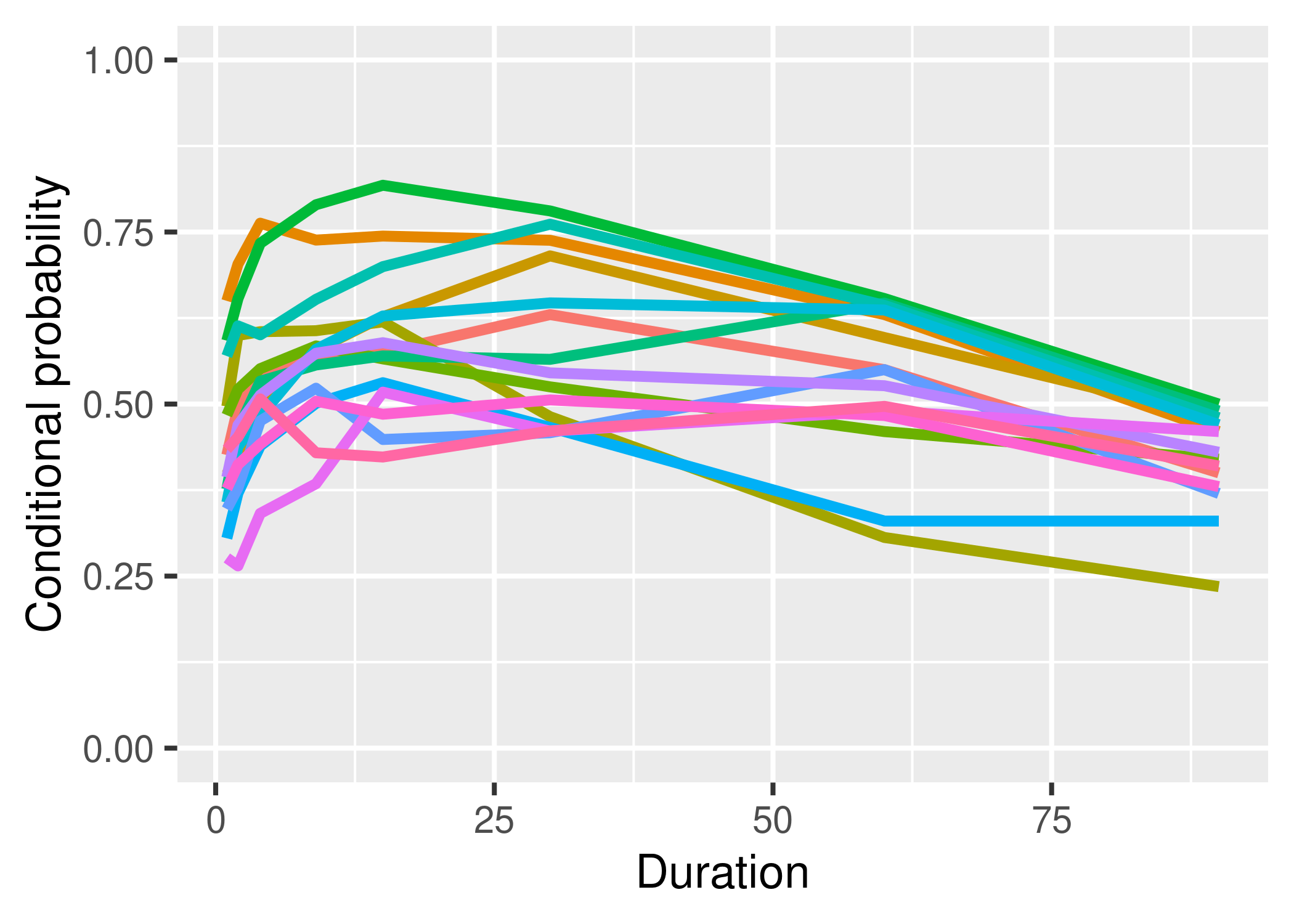}
\caption{FRAP}
\end{subfigure}
 }
 \caption{Conditional probabilities of vocalizations for the 15 different species at different time scales $\Delta t = \{1,2,4, 9, 15, 30, 60, 90\}$ obtained from fitted model for the MMPP (left) and samples from posterior predictive for the FRAP model (right).}
\label{fig:mmpp_vs_frap}
\end{figure}
\subsection{Model diagnostics}
We also carried out typical model diagnostics for count time series data discussed in \cite{czado2009predictive, kolassa2016evaluating}. Specifically, we use marginal calibration plots to assess model fit. We first draw samples from the predictive distribution of $Z_{180}|Z_{1}, \ldots, Z_{179}$ for a particular species of bird. We then compute $P(Z_{180} = 1\mid Z_1, \ldots, Z_{179})$ using the Monte Carlo average. This predictive probability is then matched with the observed probability $P(Z_{180} = 1)$ which is computed as $R^{-1}\sum_{r=1}^R Z^{(r)}_{180}$. In Figure \ref{fig:marginal_calibration_plot} we plot the differences in the predicted and observed probabilities for the 15 different species. For some birds the difference is very small whereas for other birds this difference goes up to 0.25, especially when the number of replicates available is small. Overall, the model performs adequately; prediction of vocalizations can potentially be improved by including covariates, such as weather and habitat conditions at the sampling site.  
\begin{figure}
    \centering
    \includegraphics[height=6cm, width = 8.5cm]{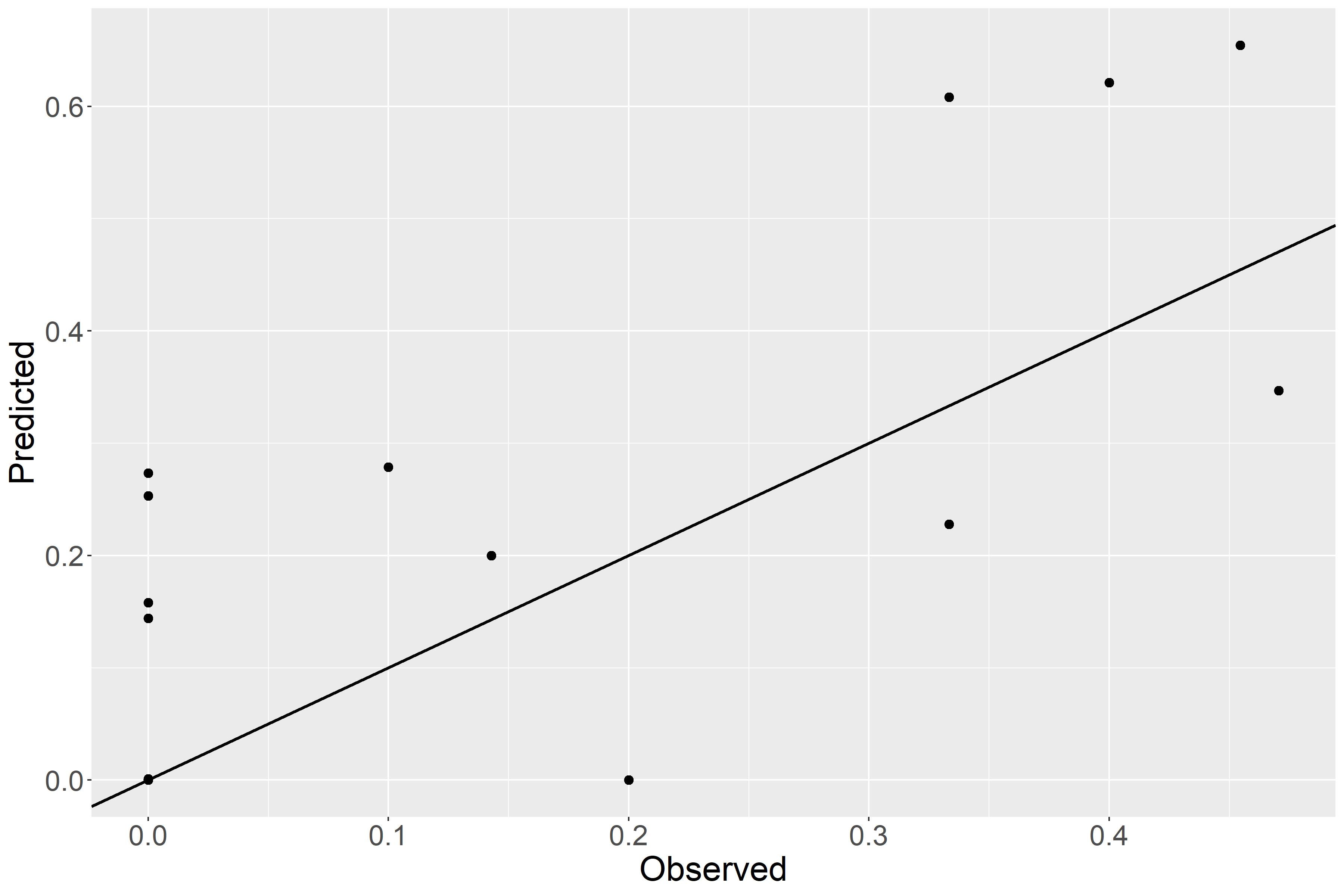}
    \caption{Difference between one-step ahead prediction probabilities for $Z_{180}=1$ and observed probabilities for the 15 species of bird analyzed here.}
    \label{fig:marginal_calibration_plot}
\end{figure}
\section{Discussion}
In this article, we proposed a novel class of models for characterizing long range dependence in discretized event data, along with a Bayesian approach to inference under these models.  We are particularly motivated by bird vocalization studies, and indeed are involved in ongoing collaborations collecting many such datasets across the globe in order to obtain new insights into biodiversity, interactions among species, and the role of biotic and abiotic factors.  The proposed class of FRAP models provide an important starting point for building realistic models for these emerging datasets as well as related datasets from precipitation and storm event modeling.  Immediate next directions are to add complexity to the models in order to more realistically characterize structure in the data, ranging from spatial dependence to covariate effects.  Such extensions are conceptually quite straightforward. 

There are several other important directions that are potentially less trivial.  The first is to broaden the class of models from a latent fractional Brownian motion to a broader class of stochastic processes with long range dependence.  This may include long range modifications to usual Gaussian process covariance kernels (eg, Matern), as well as non-Gaussian cases; e.g, Levy processes, alpha-stable processes, etc.  The second critical direction is developing much faster computational algorithms.  There is an immense literature on algorithms for accelerating computation in Gaussian process models, but to our knowledge very little consideration of the case in which there is long range dependence.  In our motivating applications, we are faced with immense datasets containing automated recordings over time at many different locations around the world.  To scale up to such datasets, we plan to consider divide-and-conquer algorithms and variational approximations, among other directions.

\appendix

\section{Appendix section}\label{app}
\subsection{Proof of Lemma \ref{lemma:conditional_scaling}}\label{sec:proof_of_lemma}
Since the fBM is a Gaussian process, from Corollary 2.6.3 of \cite{pipiras2017long} we get $\mathbb{E}\{B_H(i)\} = 0$ and $\mathbb{E}\{B_H(i)\}^2 = i^{2H}$ for any $i \in \mathbb{N}$. Hence, $B_H(i) \sim \Gauss(0, i^{2H})$. By stationarity of the incremental process of fBM it is enough to show \eqref{eq:scale_free_cond_probs} holds for $i = 1$. Define $Y_1^{(m)} = X_{{2^m}}$ and $Y_2^{(m)} = X_{{2^{m+1}}} - X_{{2^m}}$. Then $Y_1^{(m)} \sim \Gauss(0, 2^{2Hm})$. Also, $\mathbb{E}(Y_2^{(m)})^2 = \mathbb{E}(X_{2^{m+1}})^2 + \mathbb{E}(X_{2^{m}})^2 - 2 \mbox{Cov}(X_{2^{m+1}}, X_{2^{m}})$. From \eqref{eq:fBM_covariance} we get $\mbox{Cov}(X_{2^{m+1}}, X_{2^{m}}) = 2^{2H(m+1)}$. Thus, we have $Y_2^{(m)} \sim \Gauss(0, 2^{2Hm})$. Finally, $\mbox{Cov}(Y_1, Y_2) = \mbox{Cov}(X_{2^m}, X_{2^{m+1}}) - \mathbb{E}(X_{2^m})^2$ which after applying \eqref{eq:fBM_covariance} again we obtain $\mathrm{Cov}(Y_1, Y_2) = 2^{2Hm}(2^{2H - 1} - 1)$. Setting $\lambda^2 = 2^{2Hm}$, 
\begin{align*}
    P(Z_2^{(m)} & = 1 | Z_1^{(m)} = 1) = \dfrac{P(Y_1>0, Y_2> 0)}{P(Y_1>0)}\\
    & = 2P(Y_2/\lambda>0, Y_2/\lambda> 0)\\
    & = 2\left[\frac{1}{4} + \frac{1}{2\pi} \arcsin\left\lbrace\frac{1}{\lambda^2}\mathrm{Cov}(Y_1, Y_2)\right\rbrace\right]\\
    & = \frac{1}{2} + \frac{1}{\pi} \arcsin(2^{2H - 1} - 1).
\end{align*}
\subsection{Mixing of MCMC chain in Algorithm \ref{algo:MCMC_one_species}}
We briefly comment on the mixing of the MCMC chain obtained via Algorithm \ref{algo:MCMC_one_species}. With $L$ MCMC samples we calculate the effective sample sizes (ESS) for the parameters $f(\cdot)/\tau$ and $H$ as,
\begin{equation}
    \mbox{ESS} = \dfrac{L}{1+ 2\sum_{j=1}^J \rho(k)},
\end{equation}
where $\rho(j)$ is the autocorrelation at lag $j$. We set $J = 30$ as the maximum lag and $L = 10000$. For the 180 parameters $f(t)/\tau$, where $t = 1, \ldots, 180$, the average effective sample size for the 15 species were 2012.21 and that for the Hurst coefficient $H$ averaged over all the species is 1941.44.

\subsection{Proof of Theorem \ref{theorem:weak_consistency}}
For any set $V \in \Theta$ the posterior probability $\Pi(V \mid Z_1, \ldots, Z_n) = \int \Pi(V \mid W, Z_1, \ldots, Z_n) \Pi(W \mid Z_1, \ldots, Z_n) dW$. Now fix any weak neighborhood $U$ of $\theta_0$. Weak consistency conditional on the latent variables is proved in Section S6 of the supplementary document.  Thus the random variable $\Pi(U^c \mid W, Z_1, \ldots, Z_n)$ converges to 0 in $P_0-$probability. We now extend the proof for the marginal probability $\Pi(U^c \mid Z_1, \ldots, Z_n)$. Fix any $\delta > 0$. Then we have, 
\begin{align*}
   &E_{P_0}\Pi(U^C \mid Z_1, \ldots, Z_n)  = E_{P_0}\int \Pi(U^c \mid W, Z_1, \ldots, Z_n) \Pi(W \mid Z_1, \ldots, Z_n) dW \\
   & = E_{P_0}\underset{\Pi(U^c \mid W, Z_1, \ldots, Z_n) \leq \delta}{\int} \Pi(U^c \mid W, Z_1, \ldots, Z_n) \Pi(W \mid Z_1, \ldots, Z_n) dW \\
   & + E_{P_0}\underset{\Pi(U^c \mid W, Z_1, \ldots, Z_n) > \delta}{\int} \Pi(U^c \mid W, Z_1, \ldots, Z_n) \Pi(W \mid Z_1, \ldots, Z_n) dW\\
   & \leq \delta + P_0 \{\Pi(U^C \mid W, Z_1, \ldots, Z_n) > \delta \},
\end{align*}
where we use the fact that $\Pi(U^c \mid W, Z_1, \ldots, Z_n) \leq 1$.
\section*{Acknowledgements}
The authors acknowledge support from the United States Office of Naval Research (ONR) and the European Research Council (ERC).

\section*{Supplementary material}
The supplementary document contains an extension of the FRAP framework to a grade-of-membership model, related priors and computational details for joint inference on multiple species, technical results for proving Theorem \ref{theorem:weak_consistency}, additional simulation results from Section \ref{sec:frac_prob_simulation}.

\bibliography{sparse_reduced_rank_refs}
\bibliographystyle{apalike}

\clearpage\pagebreak\newpage

\baselineskip 20pt
\vspace{-0.5cm}
\begin{center}
{\LARGE{Supplementary materials \\ \bf{Bayesian semiparametric long memory models for discretized event data}}}
\end{center}

\setcounter{equation}{0}
\setcounter{page}{1}
\setcounter{table}{1}
\setcounter{figure}{0}
\setcounter{section}{0}
\numberwithin{table}{section}
\renewcommand{\theequation}{S.\arabic{equation}}
\renewcommand{\thesubsection}{S.\arabic{section}.\arabic{subsection}}
\renewcommand{\thesection}{S.\arabic{section}}
\renewcommand{\thepage}{S.\arabic{page}}
\renewcommand{\thetable}{S.\arabic{table}}
\renewcommand{\thefigure}{S.\arabic{figure}}
\baselineskip=18pt

\section{Hierarchical FRAP model}\label{sec:shared_trait}

The FRAP model \eqref{eq:trend_and_noise} in Section \ref{sec:FRAP} is designed to handle one bird species at a time. In this section we will develop an integrated model for dealing with multiple bird species having different series of event indicators $\mathbf{Z}^{(j)} = \{Z^{(j)}_1, Z^{(j)}_2, \ldots, Z^{(j)}_{R_j}\}, \, j  = 1,\ldots, m$, with $j$ indexing the species. For ease of exposition, we let $R_j = R, $ for $j = 1, 2, \ldots, m$. However, the general case of unequal number of replicates can be handled similarly. The main motivation for extending the model proposed in the main body of the paper is to share information across similar species of birds or locations; such sharing is particularly important given the sparsity of the data, with certain bird species vocalizing only a few times on average, see Section \ref{sec:primary_analysis}.

Let $f_j(t)$ denote the latent non-stationarity  of the $j$-th bird species.   We assume that there are $K$ types of extremal behavioral profiles representing different patterns of vocalization behavior with time of the day.  Potentially, we can attempt to assign each species to one of these profiles, leading to a type of functional clustering; for related methods, refer to  \cite{chiou2007functional, jacques2014model, rodriguez2009bayesian} among others.  However, we view clustering as overly restrictive, and instead propose a functional mixed membership model \citep{manrique2010longitudinal}.  Let the $k$-th extremal profile by represented as 
 $h_k(t)$.   The behavioral trajectory $f_j(t)$ for species $j$ is represented by a combination of these extremal profiles having weights $\omega^{(j)}$,
\begin{equation}\label{eq:shared_trait_model}
    f_j(t) = \sum_{k=1}^K \omega^{(j)}_k h_k(t), \quad \omega^{(j)}\in \mathbf{\Delta}^{K-1}, 
\end{equation}
where $\mathbf{\Delta}^{K-1}$ is the $K-1$ dimensional probability simplex.

The vector $\omega^{(j)}$  describes the proportional membership of the $j$-th bird in each of the $K$ different groups.
For two different species $j$ and $j'$ having similar vocalization profiles, we expect the respective weight vectors $\omega^{(j)}$  and $\omega^{(j')}$ to be close. The representation of $f_j(t)$ in \eqref{eq:shared_trait_model} is also similar to a semiparametric latent factor model \citep{seeger04semiparametriclatent} where multiple functional data are represented by a linear combination of basis functions to model dependencies across different subjects; however, our model differs in constraining the factor loadings to be constrained to the probability simplex.
\begin{figure}
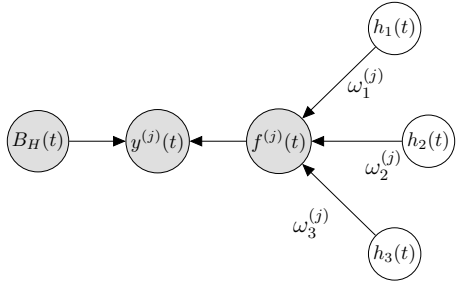

 \begin{center}
 \scalebox{0.75}{ 
\tikz{
  \node[obs](y) {$y^{(j)}(t)$};
  \node[obs, left = of y](fBM) {$B_{H}(t)$};
  \node[obs, right = of y](f_j) {$f^{(j)}(t)$};
  \node[latent, right = of f_j, yshift = 2cm](h_1) {$h_1(t)$};
  \node[latent, right = of f_j, xshift = 0.6cm](h_2) {$h_2(t)$};
  \node[latent, right = of f_j, yshift = -2cm](h_3) {$h_3(t)$};
  \path[->] (h_1) edge  node[right=0.01cm ]{$\omega^{(j)}_1$} (f_j);
  \path[->] (h_2) edge  node[below right=0.01cm ]{$\omega^{(j)}_2$} (f_j);
  \path[->] (h_3) edge  node[below left=0.01cm ]{$\omega^{(j)}_3$} (f_j);
  \edge {f_j,fBM} {y} ; %
    
}
}
 \end{center}
  \caption{Dependence structure of the latent process $y^{(j)}(t)$ under the decomposition \eqref{eq:shared_trait_model} for $K = 3$.}
  \label{fig:hier_frap_3}
\end{figure}

We assume the same additive structure of the latent process $y^{(j)}_r(t)$ driving the bird vocalizations as in equation \eqref{eq:trend_and_noise} for the $j$-th species on the $r$th day, that is $y^{(j)}_r(t) = f^{(j)}(t) + B_{H,r}(t)$. The incurred dependency structure between the latent process and extremal class profiles is shown in Figure \ref{fig:hier_frap_3} when there are 3 extremal classes. The corresponding binary series $Z_r^{(j)}$ for the time interval $(t_{i-1}, t_i]$ has the following representation,
\begin{equation}\label{eq: shared_trait_and_noise}
    Z^{(j)}_r(t_{i-1}, t_i) = \begin{cases} 1 \mbox{ if } y^{(j)}_r(t_{i}) - y^{(j)}_r(t_{i-1}) = f^{(j)}(t_{i}) - f^{(j)}(t_{i-1})+ \epsilon^H_r>0\\
   0 \mbox{\quad otherwise,} \quad r = 1,\ldots, R,\, j = 1, \ldots, m,
   \end{cases}
\end{equation}
where $\epsilon^H_r \sim \Gauss(0, \tau^2\Sigma_H)$ is the realization of the fGN with Hurst exponent $H$ on the $r$-th day,  and the function $f^{(j)}(\cdot)$ satisfies the decomposition \eqref{eq:shared_trait_model}. The likelihood of the observed series $Z_r^{(j)} \in E_r^{(j)} \subset \{0,1\}^n$ can then be written as
\begin{align}\label{eq:shared_fractional_probit}
\nonumber
    P(Z_r^{(j)} \in E_r^{(j)}\mid W_r^{(j)}, \mathbf{f}^{(j)}, H)& = P(W_r^{(j)} \in E_{W_r^{(j)}} \mid \mathbf{f}^{(j)}, H),\\
     W_r^{(j)} & \sim \mathrm{N}(A\mathbf{f}^{(j)}, \tau^2\Sigma_H),
\end{align}
where $E_{W_r^{(j)}}$ is defined similarly as in equation \eqref{eq:frac_prob_with_priors} and $\mathbf{f}^{(j)} = \{f^{(j)}(t_0), \ldots, \\ f^{(j)}(t_n)\}$ for $r = 1, \ldots, R$ and $j = 1, \ldots, m$. The precision parameter $\tau^2$ is assumed to be equal for all realizations of the latent process across the species and days. We impose the restriction that the species vocalization profiles satisfy $f^{(j)}(0) = 0$, or equivalently $h_k(0) = 0$ for all $k = 1, \ldots, K$ to ensure identifiability. The matrix $A$ is defined as in Section \ref{sec:frac_prob}.  We assume that the Hurst coefficient is the same across the extremal classes based on the analysis in Section \ref{sec:amazon_frap}; see also Table \ref{tab:hurst_data}.

\section{Priors and posterior computation}
We assume $K$ is fixed and use model assessment diagnostics to choose a good value of $K$. Recall $\beta = \log\{H/(1-H)\}$. The unknown parameters in 
\eqref{eq:shared_fractional_probit} include $\Theta = \{(h_1, \ldots, h_k,\Omega, \beta, \tau): h_k \in \mathcal{F}, \, k =1, \ldots, K, \Omega \in \mathbf{\Delta}^{K-1 \times m}, \,\beta \in \Re, \tau > 0\}$, where $\mathbf{\Delta}^{K-1 \times m}$ is the space of all $K$ by $m$ matrices such that each column $\omega^{(j)}$ of the matrix $\Omega$ is in $\mathbf{\Delta}^{K-1}$, $\mathcal{F}$ is the space of continuously differentiable functions on $[0,T]$, for $k = 1, \ldots, K$. We place independent Gaussian process priors on the extremal profiles $h_k(t)$ with a squared exponential covariance kernel $\tau^2C_k(s,t)$ as in equation \eqref{eq:squared_exponential_kernel} with potentially different amplitude and length-scale parameters $(\sigma_k, \phi_k)$ scaled by the noise variance $\tau^2$. We add a small positive number $\nu$ to the diagonals of $C_k$ for numerical stability. Similar to Section \ref{sec:frac_prob} we augment the parameter space $\Theta$ with $(\sigma_k, \phi_k), \, k = 1, \ldots, K$, to obtain $\Theta_* = \Theta \times \eta_1 \times, \ldots, \eta_K$, where $\eta_k = \{(\log \sigma_k, \log \phi_k): \sigma_k, \phi_k >0\}$. We write $\Pi_{h_k}$ as the prior on $h_k(\cdot)$. The prior on $\beta$ is $\Pi_\beta$ as in Section \ref{sec:frac_prob}. Similarly, we use the prior $\Pi_{\eta_k}$ for the individual covariance kernel parameters $\eta_k$, where each $\Pi_{\eta_k} \equiv \Pi_\eta$ is as defined in Section \ref{sec:frac_prob}.

We also need a prior distribution for $ \Omega \in \mathbf{\Delta}^{K-1 \times m}$. Since each column of $\Omega \in \mathbf{\Delta}^{K-1}$, a natural choice is a Dirichlet prior. However, this  leads to non-conjugate posterior updates. To circumvent this problem, we assign independent truncated Gaussian priors $\Pi_{\omega^{(j)}}$ for $\omega^{(j)}$, i.e. $\omega^{(j)} \sim \Gauss(0, \lambda^2 \mathrm{I}_{K})\ind_{\Delta^{K-1}}$. In Section \ref{sec:sample_gaussian_on_simplex} we describe how to sample from a $K$- dimensional Gaussian distribution supported on a $K-1$-dimensional simplex. We let $\Pi_{\omega^{(j)}}$ denote the truncated Gaussian prior on the matrix $\omega^{(j)}, j= 1, \ldots, m$. The prior for $\tau$ is $\Pi_\tau$ from Section \ref{sec:frac_prob}.

The vector $\mathbf{f}^{(j)} = \{f^{(j)}(t_1), \ldots, f^{(j)}(t_n)\}$ can be written as $\mathbf{f}^{(j)} =  \mathbf{h}\omega^{(j)}$, where $\mathbf{h} =(\mathbf{h}_1, \ldots, \mathbf{h}_{K})$ with $\mathbf{h}_k = \{h_k(t_1), \ldots, h_k(t_n)\}^\T$ and $h_k(0) = 0, \mbox{for } k = 1, \ldots, K$. 
Given $m$ binary time series $\mathbf{Z}^{(j)} = \{Z^{(j)}_1,\ldots, Z^{(j)}_R\}, \, j = 1, \ldots, m$, we have the following representation of model \eqref{eq:shared_fractional_probit},
\begin{align} \nonumber 
    P(Z^{(j)}_r \in E^{(j)}_r\mid & \, W^{(j)}_r, \mathbf{f}^{(j)}, \Omega, \beta, \tau)  = P(W_r^{(j)} \in E_{W_r^{(j)}} \mid \mathbf{f}^{(j)}, \beta, \tau),\\ \nonumber
    & W_r^{(j)}\mid \beta, \lambda, \Omega, \mathbf{f}^{(j)}, \tau \sim \mathrm{N}(A\mathbf{f}^{(j)}, \tau^2\Sigma_H), \, \, r = 1, \ldots, R\\\nonumber
    & \mathbf{f}^{(j)} = \mathbf{h}\omega^{(j)}, \,\, \mathbf{h}_k\mid \eta_k, \tau^2 \sim \Pi_{h_k},\, \eta_k \sim \Pi_{\eta_k},\, k = 1, \ldots, K \\\nonumber
    & \omega^{(j)}\mid \lambda \sim  \Pi_{\omega^{(j)}},\, j = 1, \ldots, m\\ \nonumber
    & \beta \sim \Pi_\beta,\,\, \tau \sim \Pi_\tau\\ \label{eq:shared_frac_prob_with_priors}
\end{align}
Algorithm \ref{algo:MCMC_one_species} can be extended to carry out posterior analysis for the hierarchy \eqref{eq:shared_frac_prob_with_priors}. Details of our algorithm are provided below.

Let $C_{h_k}$ denote the prior covariance matrix for $\mathbf{h}_k$ under prior $\Pi_{h_k}$. Let $\mathbf{g}_k = A\mathbf{h}_k$. Then the induced covariance matrix for $\mathbf{g}_k$ is $C_{g_k} = \tau^2 A C_{h_k} A^\T + \nu \mathrm{I}_n$. The details of the MCMC implementation are provided below in Algorithm \ref{algo:MCMC_multiple_species}.

\begin{algorithm}
Initialize $\rho = 0$, $\Omega = \Omega_0$, $\mathbf{g}_1, \ldots, \mathbf{g}_K = 0$, \,  and $\eta_k = (0, 0),\forall k$. Set $\Psi = \mathbf{g}\Omega$, where $\mathbf{g} = \{\mathbf{g}_1, \ldots, \mathbf{g}_K\}$ and let $ \Psi^{(j)}$ denote the $j$-th column of the matrix $\Psi$. For the $l$-th MCMC sample,
 \begin{itemize}[leftmargin=*]
      \item Update $W_r^{(j)}\mid - \overset{indp.}{\sim}\Gauss(\Psi^{(j)}, \tau^2\Sigma_H)\ind_{E_{W^{(j)}_r}}$, for $j =1, \ldots, m$ and $r = 1, \ldots, R$.
      
      \item Define $\overline{W} = (W^{(1)}, \ldots, W^{(m)})$, where $W^{(j)} = \frac{1}{R}\sum_{r=1}^R W_r^{(j)}$ and $\Psi_{-k} = \mathbf{g}_{-k} \Omega_{-k}$ \linebreak
      where $\mathbf{g}_{-k}$ is the matrix $\mathbf{g}$ without the $k$-th column and $\Omega_{-k}$ is the matrix $\Omega$ \linebreak
      without the $k$-th row. Set $W_{-k} = \overline{W} - \Psi_{-k}$ and $\delta_k = \Omega_k^\T \Omega_k$. Then $\mathbf{g}_k$ is updated \linebreak
      using $\mathbf{g}_k \mid - \sim \Gauss(\mu_k, \Phi_k)$, where $\Phi_k = \left(\frac{\delta_k R}{\tau^2} \Sigma_H^{-1} + \frac{1}{\tau^2}C_{\mathbf{g}_k}\right)^{-1}$ and 
      $\mu_k = \dfrac{R}{\tau^2}\Phi_k\Sigma_H^{-1}W_{-k}\Omega_k$  .
     
      \item Define $P = \frac{R}{\tau^2}\mathbf{g}\Sigma_H^{-1}$ and $Q = P\mathbf{g}^\T$. Update $\Omega^{(j)}\mid -  \sim \Gauss(u_j, V_j)\ind_{\mathbf{\Delta}^{K-1}}$ where $V_j = (Q + \lambda^2 \mathrm{I}_k)^{-1}$ and $u_j = V_j P\overline{W}^{(j)}$, $\overline{W}^{(j)}$ being the $j$-th column of the \linebreak
      matrix $\overline{W}$ from the first step. 
      
      \item Update $\beta \mid -$ using Metropolis-Hastings with proposal density $\Gauss(\beta_{l-1}, s_1^2)$.
      
      \item For any $k = 1, \ldots, K$,  $\eta_k$ is updated jointly via random walk Metropolis-Hastings.
      
      \item Define $S = \frac{1}{2}\sum_{j=1}^{m} \left[\mbox{trace}\left\lbrace (W_j - G_j)^\T \Sigma_H^{-1} (W_j - G_j)\right\rbrace \right] + \frac{1}{2} \sum_{k = 1}^{K} \mathbf{g}_k^\T C_{g_k}^{-1} \mathbf{g}_k$ \linebreak
      where $W_j = (W^{(j)}_1, \ldots, W^{(j)}_R)$ and $G_j$ is a matrix with all columns equal to \linebreak
      $\mathbf{g}\omega^{(j)}$. Update $\tau \mid - \sim \mbox{Inverse-Gamma}\left\lbrace\frac{n(Rm + K)}{2} + a_\tau, S + b_\tau \right\rbrace$. 
  \end{itemize}
 
 \caption{Gibbs sampling algorithm to fit model \eqref{eq:shared_frac_prob_with_priors}.}
 \label{algo:MCMC_multiple_species}
\end{algorithm}

The computational bottleneck of Algorithm \ref{algo:MCMC_multiple_species} is in updating latent variables $W_r^{(j)}$. In addition, the algorithm is potentially subject to label switching; this is only a problem if there is interest in the membership matrix $\Omega$ and the corresponding extremal profiles.  As in other contexts in which label switching occurs, post-processing methods can be applied to relabel the MCMC output before inferences; see, for example \cite{stephens2000dealing}. In our experiments we did not encounter label switching and hence did not implement such approaches. 

The Metropolis-Hastings update for $\eta_k$ in Algorithm \ref{algo:MCMC_multiple_species} is similar to Algorithm \ref{algo:MCMC_one_species} and we consider individual proposal variances for each class $k$, which are adapted on the fly to maintain an overall acceptance probability of $\sim$ 0.3. Algorithm \ref{algo:MCMC_multiple_species} performed similarly in terms of mixing assessed by the effective sample size; see Appendix of the main document. When applied to the Amazon bird vocalization data, with 10000 MCMC samples and a maximum lag of 30, the effective sample size for $H$ is 1582.61. For the elements of the membership matrix $\Omega$, 1240.91 is the average effective sample size and that of $h_k(\cdot)/\tau$ is 962.83. 

To select the number of extremal classes, we use the Deviance Information Criterion (DIC) of \cite{spiegelhalter2002bayesian} adapted to our latent variable setting. Let $D(\theta) = -2 \sum_{j=1}^m \sum_{r=1}^R \log P(W_r^{(j)} \in E_{W_r^{(j)}} \mid \theta)$ where $\theta = (\mathbf{f}^{(j)}, \beta, \tau)^\T$. Then we define the DIC as,
\begin{equation}\label{eq:DIC_def}
    \mbox{DIC} = 2\bar{D} - D(\bar{\theta}),
\end{equation}
where $\bar{\theta}$ is the posterior mean of $\theta$. We fit model \eqref{eq:shared_frac_prob_with_priors} with different choices of $K$ and choose the value for which the DIC is minimized.





\section{Simulation experiments for the hierarchical FRAP model}\label{sec:hier_frap_simulations}
We considered two cases of simulation experiments implementing model \eqref{eq:shared_fractional_probit}. For both cases, we consider $m = 20$ species, $n = 90$ time units and $R = 20$ replications. We set $K = 3$ for these two scenarios. The $m$ species are first assigned random labels $k \in \{1,\ldots, 3\}$. Given species $j$ is assigned label $1$, we set the membership vector $\omega^{(j)} = (\omega^{(j)}_1, \omega^{(j)}_2, \omega^{(j)}_3)^\T$, where $\omega_3^{(j)} = \{1 + \exp(X_1) + \exp(X_2)\}^{-1}$, $\omega_1^{(j)} = \exp(X_1) \omega_3^{(j)}$, $\omega_2^{(j)} = \exp(X_2)\omega_3^{(j)}$ and $(X_1, X_2)$ follows a bivariate Gaussian distribution with mean vector $(3, 0)^\T$ and covariance matrix $\mathrm{I}_2$. If the assigned label is 2, then we repeat the same steps as above but generate $(X_1, X_2)^\T$ from a bivariate Gaussian with mean vector $(0, 3)^\T$ with the same covariance matrix. Finally, if the assigned label is 3, then we generate $(X_1, X_2)^\T$ with mean vector $(0,0)^\T$. The   
membership vectors are then combined with extremal profiles $h_1(\cdot)$, $h_2(\cdot)$ and $h_3(\cdot)$ to compute the individual species profile $f^{(j)}(\cdot)$. The two choices of the extremal profiles and the resulting individual profiles are listed below:
\begin{itemize}
    \item Case 1: $f^{(j)}(t) = \omega^{(j)}_1 f_1(t) + \omega^{(j)}_2 f_2(t) + \omega^{(j)}_3 f_3(t)$.
    \item Case 2: $f^{(j)}(t) = \omega^{(j)}_1 f_3(t) + \omega^{(j)}_2 f_4(t) + \omega^{(j)}_3 f_5(t)$.
\end{itemize}
Here we use the same definitions of the functions $f_1(t), \ldots, f_5(t)$ as in Section \ref{sec:frac_prob_simulation}. We fixed $\tau = 0.1$ and $H = 0.75$. Having generated the individual profiles, $R$ binary series of length $n$ are then generated following the steps from Section \ref{sec:frac_prob_simulation} for each species $j = 1, \ldots, m$.

We implemented hierarchy \eqref{eq:shared_frac_prob_with_priors} with $K= 2, 3, 4$ and selected the value leading to the minimal DIC; for both cases this yielded the true value of $K=3$.  In the first case, the ReMSEs  of the posterior mean of the three estimated extremal profiles are $0.24$, $0.05$ and $0.16$, respectively, for $f_1(\cdot)$, $f_2(\cdot)$ and $f_3(\cdot)$. The 95\% credible interval for the Hurst coefficient for this case is $[0.74, 0.85]$. For Case 2 the ReMSE values are $0.29$, $0.09$ and $0.06$ for $f_3(\cdot)$, $f_4(\cdot)$ and $f_5(\cdot)$, respectively. The 95\% credible interval for the Hurst coefficient in this case is $[0.74, 0.83]$. In Figure \ref{fig:hier_frap_results} we show the accuracy of the estimate of the weight matrix via a heat map.

\begin{figure}
\centering
\scalebox{0.95}{
\begin{subfigure}{0.5\textwidth}
\includegraphics[width=0.95\linewidth, height=3.5cm]{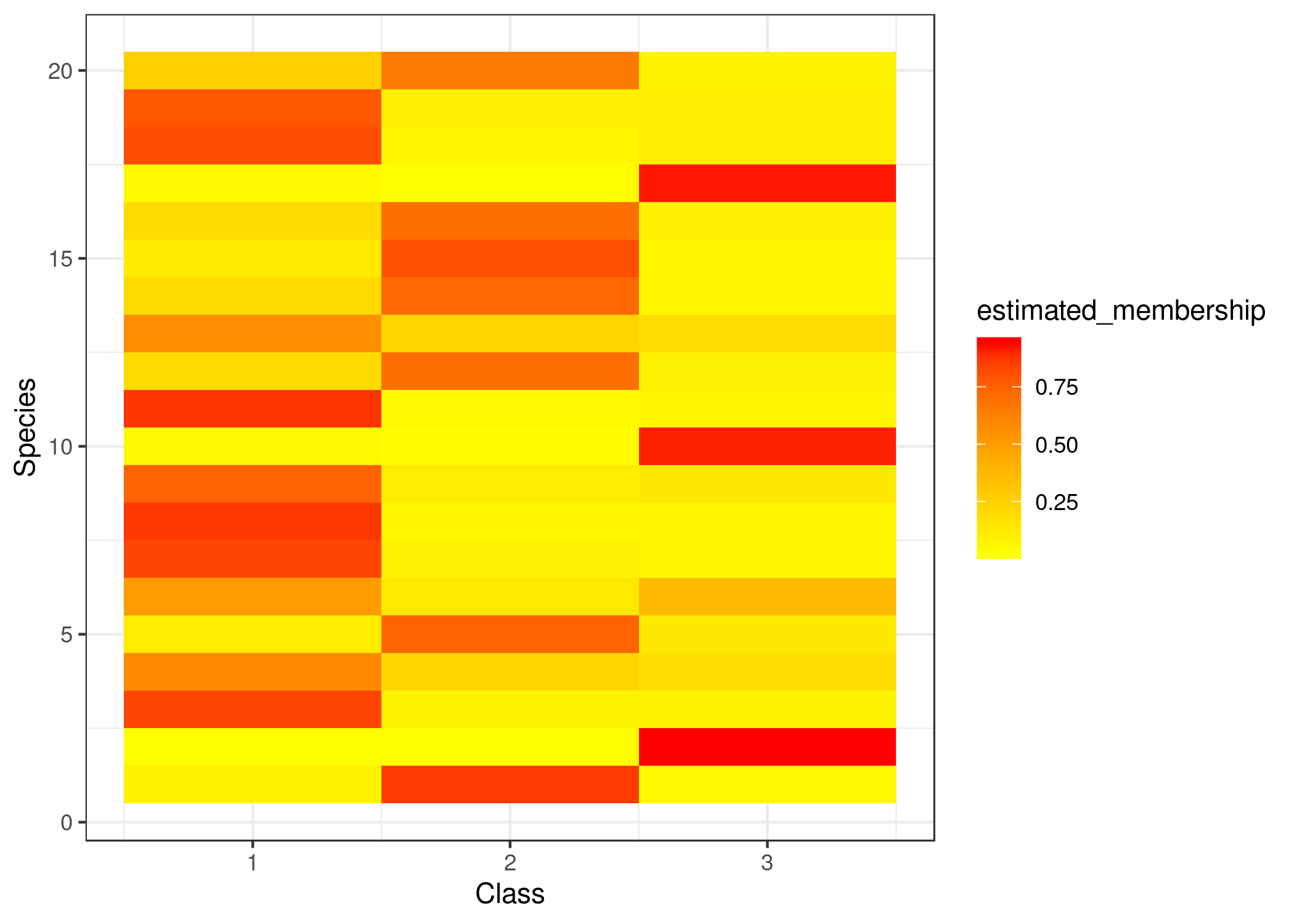} 
\caption{Estimated}
\label{fig:est_1}
\end{subfigure}
\begin{subfigure}{0.5\textwidth}
\includegraphics[width=0.85\linewidth, height=3.5cm]{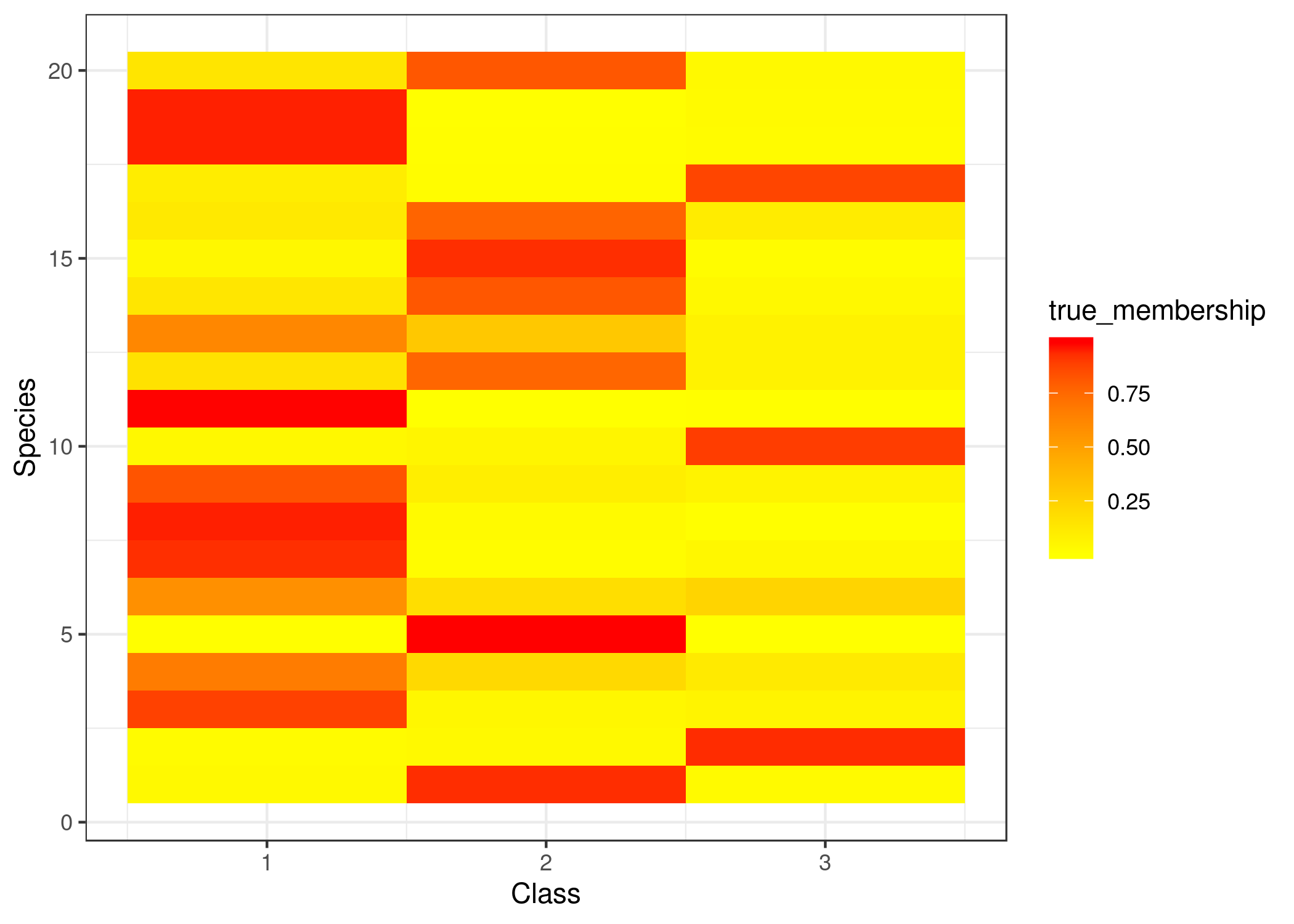}
\caption{True}
\label{fig:true_1}
\end{subfigure}
}
\scalebox{0.95}{ 
 \begin{subfigure}{0.5\textwidth}
\includegraphics[width=0.95\linewidth, height=3.5cm]{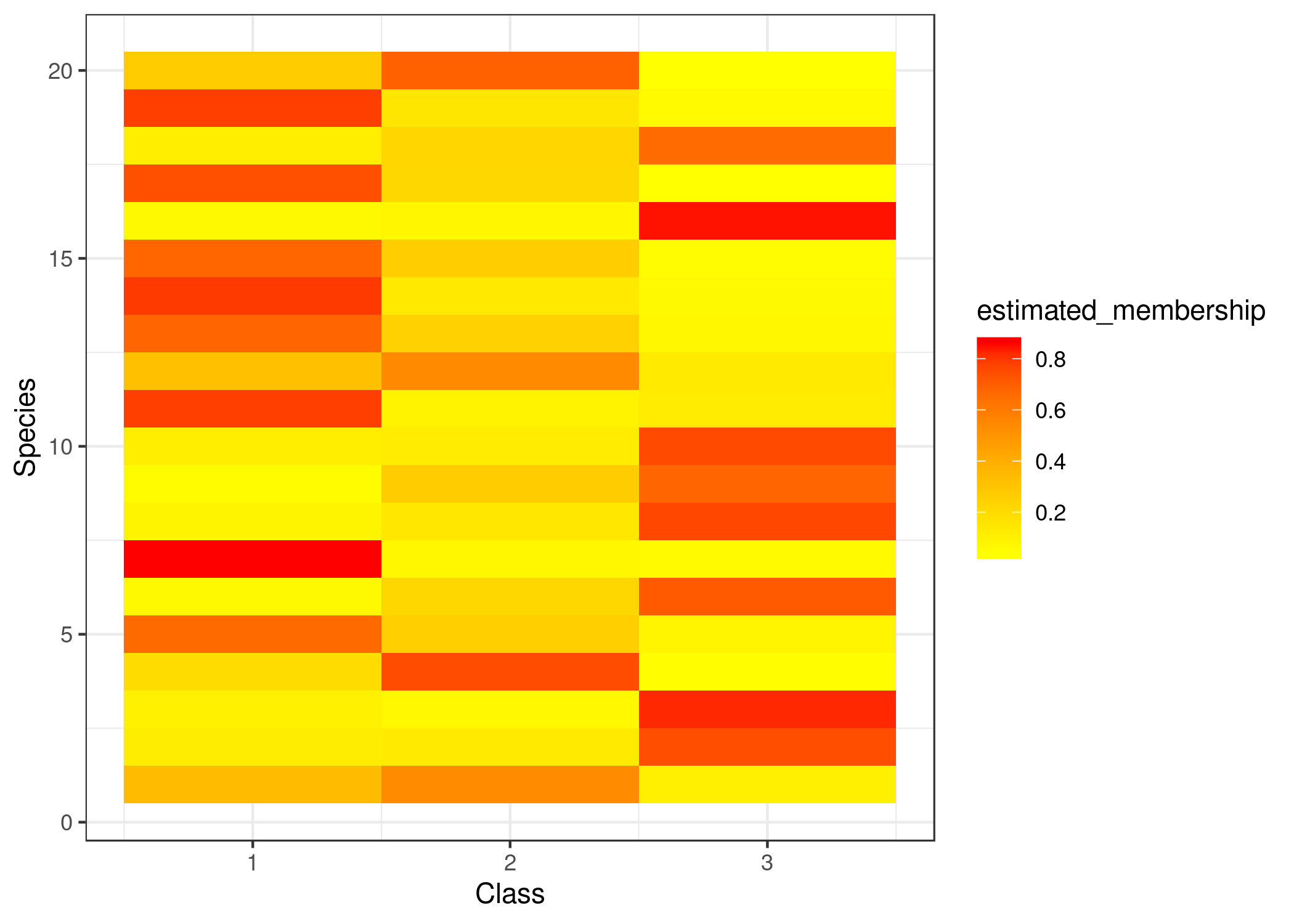} 
\caption{Estimated}
\label{fig:est_2}
\end{subfigure}
\begin{subfigure}{0.5\textwidth}
\includegraphics[width=0.9\linewidth, height=3.5cm]{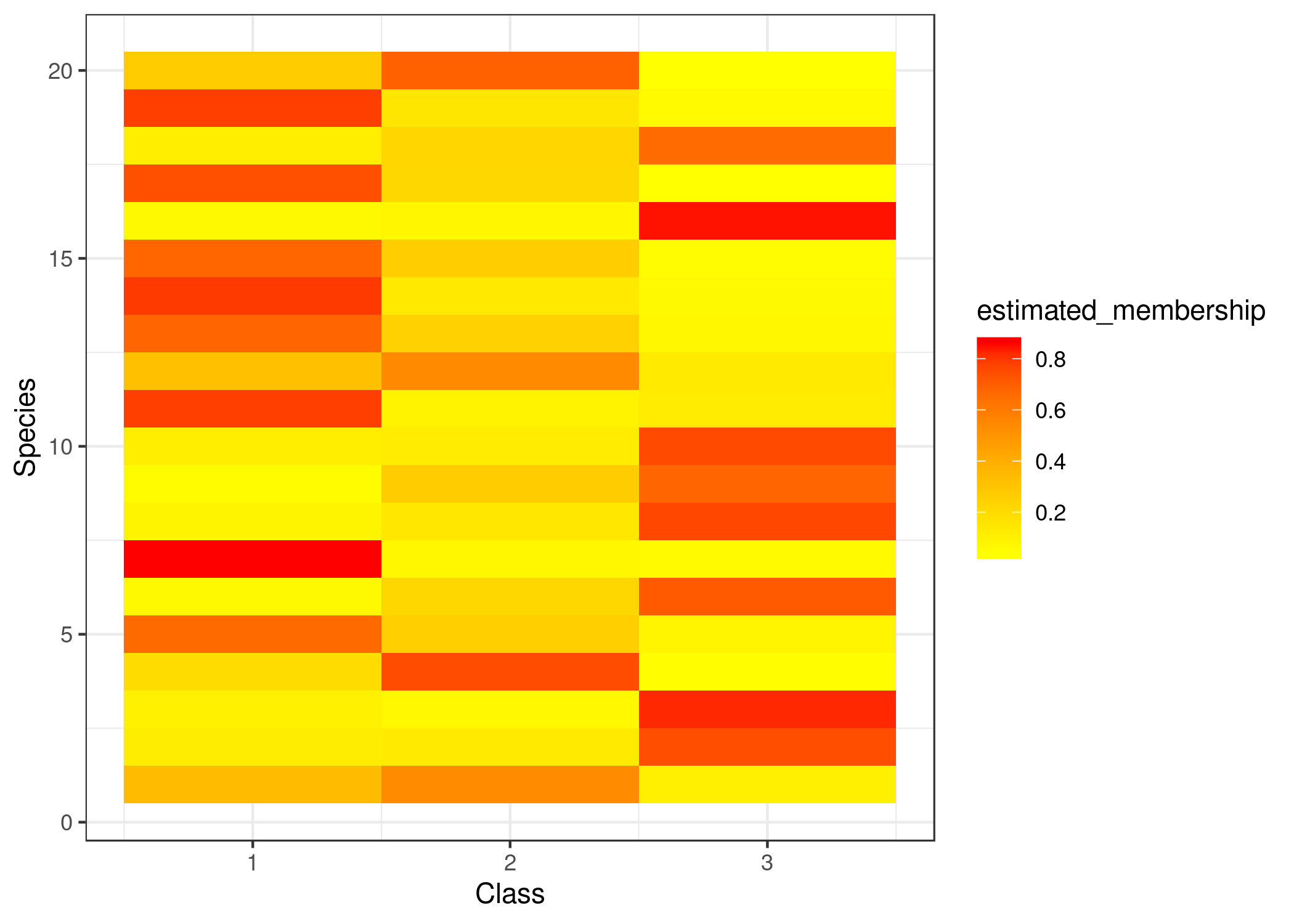}
\caption{True}
\label{fig:true_2}
\end{subfigure}
}
 \caption{Heat map of posterior mean of the membership matrix for the hierarchical FRAP model. Top row corresponds to Case 1 and bottom row corresponds to Case 2. Left panel shows the estimated membership matrix and the right panel shows the true memberships.}
\label{fig:hier_frap_results}
\end{figure}


\section{Application to Amazon bird vocalization data}
We applied the hierarchical FRAP model to the Amazon bird vocalization data with $K = 2, 3$. The DIC values for $K = 2, 3$ are respectively 11365.4 and 10925.95. The 95\% credible interval for the Hurst coefficient obtained for $K = 3$ is $[0.78, 0.84]$. We show the posterior mean of the membership matrix in Figure \ref{fig:hier_frap_bird_data}. 

\begin{figure}
    \centering
    \scalebox{0.8}{
    \includegraphics[width=0.8\textwidth]{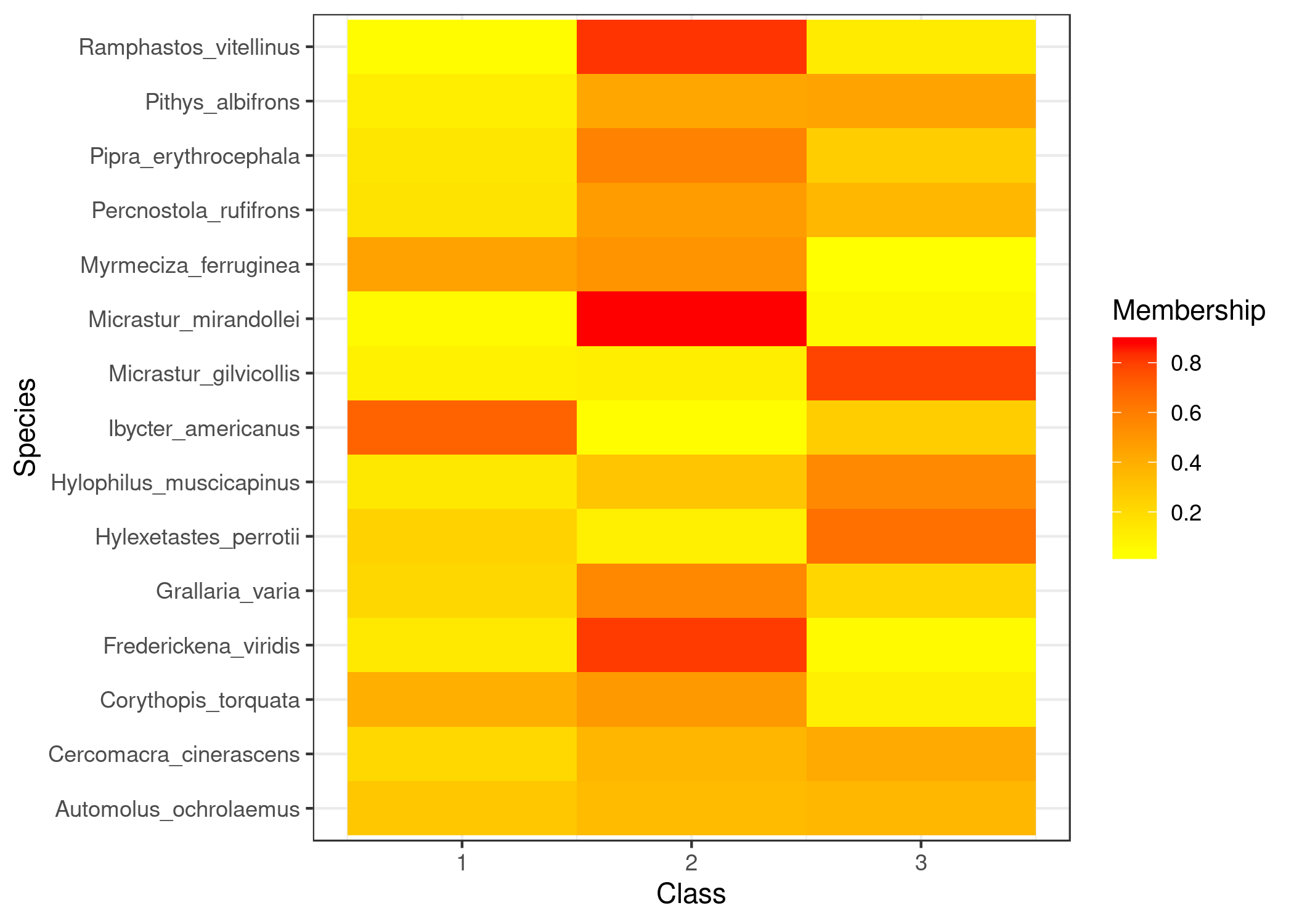}
    }
    \caption{Heat map of membership matrix obtained from the hierarchical FRAP model \eqref{eq:shared_trait_model} for the 15 bird species. }
    \label{fig:hier_frap_bird_data}
\end{figure}

The predominant class among the birds appears to be Class 2 in Figure \ref{fig:hier_frap_bird_data} where Frederickena viridis, Grallaria varia, Micrastur mirandollei, Percnostola rufifrons, Pipra erythrocephala,  Ramphastos vitellinus have maximum membership. This is evident from their spike in vocalization probabilities after the one and half hour mark from sunrise, see Figure \ref{fig:individual_trends}; such a trend is apparent for Cercomarca cinerascens also. Hylexetastes perrotii, Hylophilus muscicapinus, Micrastur gilvicollis show a high membership weight on Class 3 due to their higher vocalization probabilities at the onset of the recording. Ibycter americanus, however, identifies with Class 1. 

\section{Sampling a $K$ dimensional multivariate Gaussian truncated on $\mathbf{\Delta}^{K-1}$}\label{sec:sample_gaussian_on_simplex}
Let $X \sim \Gauss(\mu, \Sigma)$ where $\mu \in \Re^K$ and $\Sigma \in \Re^{K \times K}$. Suppose we want to sample $X$ restricted to the set $\Delta^{K-1}$, where $\Delta^{K-1}$ is the $K$ dimensional simplex. We first reparameterize $X$ as $X = (Y, 1- Y^\T \mathbf{1}_{K-1})$, where $\mathbf{1}_{K-1}$ is a $K -1$ dimensional vector of all 1's. Then $X$ can be rewritten as $X = \mathbf{J}Y + \alpha$, where $\mathbf{J} = (\mathrm{I}_{K-1}, -\mathbf{1}_{K-1})^\T$ and $\alpha = (\mathbf{0}_{K-1}^{\T}, 1)^{\T}$. 

Under this reparameterization, it is straightforward to observe that $Y \sim \Gauss_{K-1}(\mu_Y, \Sigma_Y)$, where $\Sigma_Y^{-1} = \mathbf{J}^\T \Sigma^{-1}\mathbf{J}$ and $\mu_Y = \Sigma_Y \mathbf{J}^\T \Sigma^{-1}(\mu - \alpha)$. Thus the problem of sampling $X$ on $\Delta^{K-1}$ can be reformulated as sampling $Y \sim \Gauss(\mu_Y, \Sigma_Y)$ subject to the constraint $DY + e \geq 0$, where $D = (\mathrm{I}_{K-1}, -\mathrm{I}_{K-1}, \\ -\mathbf{1}_{K-1})^\T$ and $e = (\mathbf{0}_{K-1}^\T, \mathbf{1}_{K-1}^\T, -1)^\T$. This is done via the exact Hamiltonian truncated Gaussian sampler proposed in \cite{pakman2014exact}.

\section{Proof of weak consistency}\label{sec:latent_variable_weak_consistency}
We first prove weak consistency assuming the latent variables are observed. Extension of the result to the setting when we only observe the binary indicators, that is model \eqref{eq:trend_and_noise}, is provided in the Appendix of the main document.

Consider the model $y_i = f(t_i) + u_i, \, i =0, 1,\ldots, n$, where the design points $t_0, \ldots t_n$ are fixed and belong to a compact interval $[0,T]$ and we assume that $f(t_0) = 0$. Without loss of generality we let $T = 1$. Suppose we observe the random variables $w_1, \ldots, w_n$ where $w_i = y_{i} - y_{i-1}, \, i= 1, \ldots, n$. Then the model for the observed random variables is $w_i =  A \mathbf{f} + \epsilon_i$ where the matrix $A \in \Re^{n \times n}$ with all elements zero except $A_{11} = 1$ and $A_{ii} = A_{i,i-1} = 1, \, i = 1, \ldots, n$. Here $\mathbf{f} = \{f(t_0), \ldots, f(t_n)\}^\T$. Suppose the error $\epsilon = (\epsilon_1, \ldots, \epsilon_n)^\T$ satisfies $\epsilon \sim \Gauss(0, \Sigma_H)$ where $0<H<1$. Given the data we want to recover the true parameters, say $(f_0(\cdot), H_0)$. Define $\mathbf{f}_0 = \{f_0(t_0), \ldots, f_0(t_n)\}^\T$. We additionally assume that $0<a <H < b< 1$ for some constants $0<a<b<1$ and $f_0$ is in the space of continuously differentiable functions on $[0,1]$. The maximum and minimum singular value of a matrix $P$ is written as $s_{\max}(P)$ and $s_{\min}(P)$. 
Let $p$, $p_0$ denote density of $w = (w_1, \ldots, w_n)^\T$ with respect to the Lebesgue measure under any generic $(f, H)$ and $(f_0, H_0)$ respectively. Since $H_0$ is bounded by assumption, so are $s_{\max}(\Sigma_{H_0})$ and $s_{\min}(\Sigma_{H_0})$.
For any fixed $\delta > 0$, we are interested in characterizing the set $\{p: \mathrm{KL}(p_0, p) < \delta\}$ where for two densities $q_1$ and $q_2$ $\mathrm{KL}(q_1, q_2) = \int \log (q_1/q_2) q_1$. For the densities $p_0$, $p$ we have,
\begin{equation}
\mathrm{KL}(p_0, p) = \frac{1}{2}\log \frac{|\Sigma_H|}{ |\Sigma_{H_0}|}+\frac{1}{2}\mathrm{tr}(\Sigma_H^{-1}\Sigma_{H_0} - \mathrm{I}_q) + \frac{1}{2} (A\mathbf{f} - A\mathbf{f_0})^\T\Sigma_H^{-1}(A\mathbf{f}-A \mathbf{f_0}).
\end{equation}
Unless otherwise specified, we shall write $\|x\|$ for the Euclidean norm of the vector $x$. For a matrix $P$, $\|P\|_2$ denotes the operator norm, that is, $\|P\|_2 = s_{\max}(P)$ . 
\begin{lemma}\label{lm:kl_setup}
Let $\Sigma_H, \Sigma_{H_0}$ be $n \times n$ covariance matrices with Hurst coefficient $H$ and $H_0$ respectively and $\delta \in (0,1)$. If $\| \Sigma_H - \Sigma_{H_0} \|_F \leq \delta$ and $\delta/s_{\mathrm{min}}(\Sigma_{H_0})<1/2$, then 
$$\mathrm{tr}(\Sigma_{H_0}\Sigma_H^{-1} - \mathrm{I}_q) - \log \mid \Sigma_{H_0} \Sigma_H^{-1} \mid \leq \frac{(K \log \rho) \delta^2}{s_{\mathrm{min}}^2(\Sigma_{H_0})},$$
where $K$ is some absolute positive constant and $\rho = 2 s_{\mathrm{max}}(\Sigma_{H_0})/s_{\mathrm{min}}(\Sigma_{H_0})$. Furthermore,
$$(A\mathbf{f} - A\mathbf{f}_0)^\T\Sigma_H^{-1}(A\mathbf{f}-A\mathbf{f}_0) \leq \{64/s_\mathrm{min}^2(\Sigma_{H_0})\}\| \mathbf{f} - \mathbf{f_0} \|^2$$
\end{lemma}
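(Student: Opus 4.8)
The plan is to treat both bounds as matrix--perturbation estimates driven by $E := \Sigma_H - \Sigma_{H_0}$, which satisfies $\|E\|_2 \le \|E\|_F \le \delta$. For the first inequality, observe that since $\mathrm{tr}(\Sigma_{H_0}\Sigma_H^{-1}) = \mathrm{tr}(\Sigma_H^{-1}\Sigma_{H_0})$ and $M := \Sigma_{H_0}\Sigma_H^{-1}$ is similar to the symmetric positive definite matrix $\tilde M := \Sigma_H^{-1/2}\Sigma_{H_0}\Sigma_H^{-1/2}$, the left-hand side equals $\sum_{i=1}^n g(\mu_i)$ with $g(\mu) := \mu - 1 - \log\mu$ and $\mu_1,\dots,\mu_n > 0$ the eigenvalues of $\tilde M$. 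The strategy is then (i) to confine the $\mu_i$ to a range determined by $\rho$, (ii) to bound $g$ pointwise on that range by a multiple of $(\mu-1)^2$, and (iii) to recognize $\sum_i(\mu_i-1)^2 = \|\tilde M - I\|_F^2$ and control this Frobenius norm by $\delta$.

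For (i), I would use Weyl's inequality together with $\delta/s_{\min}(\Sigma_{H_0}) < 1/2$ to get $s_{\min}(\Sigma_H) \ge s_{\min}(\Sigma_{H_0}) - \delta \ge \tfrac12 s_{\min}(\Sigma_{H_0})$ and $s_{\max}(\Sigma_H) \le s_{\max}(\Sigma_{H_0}) + \delta \le \tfrac32 s_{\max}(\Sigma_{H_0})$. The Rayleigh-quotient characterization $\mu_i \in [\, s_{\min}(\Sigma_{H_0})/s_{\max}(\Sigma_H),\, s_{\max}(\Sigma_{H_0})/s_{\min}(\Sigma_H)\,]$ then places every $\mu_i$ in an interval of the form $[\,4/(3\rho),\, \rho\,]$ with $\rho = 2 s_{\max}(\Sigma_{H_0})/s_{\min}(\Sigma_{H_0}) \ge 2$. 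For (ii), the key step---and the place where $\log\rho$ rather than a power of $\rho$ appears---is to split the regimes: for $\mu \ge 1$ the auxiliary function $h(\mu) = g(\mu) - \tfrac12(\mu-1)^2$ has $h'(\mu) = -(\sqrt\mu - 1/\sqrt\mu)^2 \le 0$ and $h(1)=0$, so $g(\mu)\le \tfrac12(\mu-1)^2$; for $\mu < 1$, the ratio $g(\mu)/(\mu-1)^2$ is continuous and grows like $|\log\mu|$ as $\mu\downarrow 0$, hence is bounded on $[\,4/(3\rho),1]$ by $C\log\rho$ for an absolute constant $C$ (using $\rho\ge 2$). This yields $g(\mu)\le C(\log\rho)(\mu-1)^2$ uniformly on the localized range.

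For (iii), I would write $\tilde M - I = -\Sigma_H^{-1/2}E\Sigma_H^{-1/2}$, so that by submultiplicativity of the operator norm against the Frobenius norm, $\|\tilde M - I\|_F \le \|\Sigma_H^{-1}\|_2\,\|E\|_F \le \delta/s_{\min}(\Sigma_H) \le 2\delta/s_{\min}(\Sigma_{H_0})$. Combining the three steps gives $\sum_i g(\mu_i) \le C(\log\rho)\|\tilde M - I\|_F^2 \le 4C(\log\rho)\delta^2/s_{\min}^2(\Sigma_{H_0})$, which is the claim with $K = 4C$.

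The second inequality is the easier of the two and follows from three crude steps with $v := \mathbf{f} - \mathbf{f}_0$: first, $v^\T A^\T\Sigma_H^{-1}Av \le \|\Sigma_H^{-1}\|_2\,\|Av\|^2$; second, $\|Av\|^2 \le \|A\|_2^2\|v\|^2$ with $\|A\|_2 \le 2$, since $A$ is bidiagonal with entries of modulus one so $\|A\|_2 \le \sqrt{\|A\|_1\|A\|_\infty}\le 2$; and third, $\|\Sigma_H^{-1}\|_2 = 1/s_{\min}(\Sigma_H) \le 2/s_{\min}(\Sigma_{H_0})$ from the same Weyl bound. These give $v^\T A^\T\Sigma_H^{-1}Av \le 8\|v\|^2/s_{\min}(\Sigma_{H_0})$, and because $\Sigma_{H_0}$ is an fGN correlation matrix with unit diagonal, $\mathrm{tr}(\Sigma_{H_0}) = n$ forces $s_{\min}(\Sigma_{H_0}) \le 1$, whence $1/s_{\min}(\Sigma_{H_0}) \le 1/s_{\min}^2(\Sigma_{H_0})$ and the stated bound $64/s_{\min}^2(\Sigma_{H_0})$ follows with room to spare. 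I expect step (ii) of the first inequality---extracting exactly a $\log\rho$ factor, which demands the asymmetric handling of $\mu\ge 1$ versus $\mu<1$---to be the main obstacle; the remaining steps are standard perturbation bookkeeping.
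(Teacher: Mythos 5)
Your proposal is correct, but it is substantially more self-contained than the paper's own proof, which disposes of the first inequality by citation: the paper simply invokes Lemma 1.3 from the supplementary document of Pati et al.\ (2014) for the trace/log-determinant bound, whereas you reprove it from scratch via the spectral decomposition of $\Sigma_H^{-1/2}\Sigma_{H_0}\Sigma_H^{-1/2}$, the pointwise bound $g(\mu) = \mu - 1 - \log\mu \leq C(\log\rho)(\mu-1)^2$ on the localized eigenvalue range, and the perturbation bound $\|\Sigma_H^{-1/2}E\Sigma_H^{-1/2}\|_F \leq \|\Sigma_H^{-1}\|_2\|E\|_F$ --- essentially reconstructing the cited lemma's argument, including the source of the $\log\rho$ factor. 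For the second inequality both you and the paper use the same skeleton (Rayleigh quotient, $\|Av\| \leq \|A\|_2\|v\|$, and the Weyl bound $s_{\min}(\Sigma_H) \geq s_{\min}(\Sigma_{H_0})/2$), but your bookkeeping is actually cleaner than the paper's: the paper asserts $\|A\|_2 = 4$ (the correct value is $\|A\|_2 \leq \sqrt{\|A\|_1\|A\|_\infty} \leq 2$, as you note) and applies what reads as $x^{\T}Mx \leq \|M\|_2^2\|x\|^2$ rather than $\|M\|_2\|x\|^2$, so that $16 \times 4 = 64$ emerges from two compensating overcounts; your route instead derives the sharper $8\|v\|^2/s_{\min}(\Sigma_{H_0})$ and then uses the genuinely relevant structural fact that fGN covariance matrices have unit diagonal, hence $s_{\min}(\Sigma_{H_0}) \leq 1$, to dominate this by $64\|v\|^2/s_{\min}^2(\Sigma_{H_0})$. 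This last step is a useful observation the paper leaves implicit (without it, a bound proportional to $1/s_{\min}$ does not formally imply one proportional to $1/s_{\min}^2$), so your version both verifies the stated constant and repairs the paper's loose intermediate claims.
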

\begin{proof}
For the first claim see Lemma 1.3 in the supplementary document of \cite{pati2014posterior}. To prove the second claim, we use the inequality $\| P x \| \leq \| P \|_2 \| x \|$ to get $\|(Af-Af_0)^\T\Sigma_H^{-1}(Af-Af_0) \| \leq \|A\mathbf{f}- A\mathbf{f_0} \|^2 \| \Sigma_H^{-1} \|_2^2$. When $\| \Sigma_H - \Sigma_{H_0} \|_F \leq \delta$, the lemma from \cite{pati2014posterior} also provides a lower bound of $s_\mathrm{min}(\Sigma_H)$ as $s_\mathrm{min}(\Sigma_{H_0})/2$. Since $\| \Sigma_{H_0}^{-1} \|_2 = 1/s_\mathrm{min}(\Sigma_{H_0})$ and $\|A\mathbf{f} - A\mathbf{f}_0\| \leq \|A\|_2 \|\mathbf{f} - \mathbf{f}_0\|$, the result follows immediately as $\|A\|_2 = 4$. 
\end{proof} 

The covariance matrix $\Sigma_H$ is parameterized by the Hurst coefficient $H \in (0,1)$ where the $(i,j)$-th element of $\Sigma_H$ is $\Sigma_{ij, H} = \frac{1}{2}\{|k+1|^{2H} -2|k|^{2H} - |k-1|^{2H}\}$ where $k = i-j$. We now prove that when $H$ and $H_0$ are close, $\Sigma_H$ and $\Sigma_{H_0}$ are also close in the Frobenius sense.

\begin{proposition}\label{lm:H_setup}
Consider the $n \times n$ covariance matrices $\Sigma_H$ and $\Sigma_{H_0}$ for $H \in (0,1)$. Fix $\delta > 0$. If $|H - H_0|<\delta/n$ then $\|\Sigma_h - \Sigma_{H_0}\|_F < L\delta$ for some $L > 0$.  
\end{proposition}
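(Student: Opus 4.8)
The plan is to exploit the Toeplitz structure of $\Sigma_H$ together with a mean value argument in $H$, reducing everything to a uniform bound on a single partial derivative. Writing $\rho_\epsilon(k;H) = \tfrac12(|k+1|^{2H} - 2|k|^{2H} + |k-1|^{2H})$ for the common value of all entries of $\Sigma_H$ on the $k$-th diagonal (see \eqref{eq:fGN_covariance}), and noting that the $k$-th and $(-k)$-th diagonals each contain $n-|k|$ entries, I would first write
\begin{equation*}
\|\Sigma_H - \Sigma_{H_0}\|_F^2 = \sum_{k=-(n-1)}^{n-1}(n-|k|)\bigl(\rho_\epsilon(k;H) - \rho_\epsilon(k;H_0)\bigr)^2 = 2\sum_{k=1}^{n-1}(n-k)\bigl(\rho_\epsilon(k;H)-\rho_\epsilon(k;H_0)\bigr)^2,
\end{equation*}
where the $k=0$ term drops out because $\rho_\epsilon(0;H) = 1$ for every $H$, and the sum is symmetric in $k$. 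For each fixed $k\ge 1$ the map $H\mapsto \rho_\epsilon(k;H)$ is smooth, so the mean value theorem gives $\rho_\epsilon(k;H)-\rho_\epsilon(k;H_0) = (H-H_0)\,\partial_H\rho_\epsilon(k;\tilde H)$ for some $\tilde H$ between $H$ and $H_0$.

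The crux is to bound $\partial_H\rho_\epsilon(k;\tilde H)$ uniformly in $k$. The naive estimate, differentiating term by term, produces factors such as $(k+1)^{2H}\log(k+1)$ that grow with $k$ and are far too weak. Instead I would recognize the derivative as a discrete second difference: with $\phi_H(x) = x^{2H}\log x$ (and $\phi_H(0)=\phi_H(1)=0$), differentiating under the second difference yields
\begin{equation*}
\partial_H\rho_\epsilon(k;H) = \phi_H(k+1) - 2\phi_H(k) + \phi_H(k-1).
\end{equation*}
For $k\ge 2$ the function $\phi_H$ is $C^2$ on $[k-1,k+1]\subset[1,\infty)$, so this second difference equals $\phi_H''(\xi)$ for some $\xi\in(k-1,k+1)$. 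A direct computation gives $\phi_H''(x) = x^{2H-2}\{2H(2H-1)\log x + 4H-1\}$, and since $\xi\ge 1$ and $2H-2<0$ one has $\xi^{2H-2}\le 1$ while $\xi^{2H-2}\log\xi$ is maximized at $\xi = e^{1/(2-2H)}$ with value $1/\{(2-2H)e\}$; hence $|\phi_H''(\xi)|$ is bounded by a constant $M$ depending only on the bounds $a<H<b$ that confine the Hurst coefficients in the consistency theorem. The remaining case $k=1$ is handled directly: $\rho_\epsilon(1;H)=2^{2H-1}-1$, so $\partial_H\rho_\epsilon(1;H)=2^{2H}\log 2$ is likewise bounded. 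Thus $|\partial_H\rho_\epsilon(k;\tilde H)|\le M$ for all $k\ge 1$, uniformly.

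Finally I would combine the two estimates. Using $\sum_{k=1}^{n-1}(n-k)=n(n-1)/2\le n^2/2$,
\begin{equation*}
\|\Sigma_H - \Sigma_{H_0}\|_F^2 \le 2M^2(H-H_0)^2\sum_{k=1}^{n-1}(n-k) \le M^2 n^2 (H-H_0)^2,
\end{equation*}
so $\|\Sigma_H-\Sigma_{H_0}\|_F \le M\,n\,|H-H_0|$. Under the hypothesis $|H-H_0|<\delta/n$ this gives $\|\Sigma_H-\Sigma_{H_0}\|_F < M\delta$, and the claim follows with $L=M$. I expect the only genuine obstacle to be the uniform-in-$k$ derivative bound: everything hinges on replacing the entrywise differentiation with the second-difference representation, which converts the growing factors $x^{2H}$ into the decaying factor $x^{2H-2}$ and so yields a bound independent of the matrix size $n$; the boundary behavior as $H\to 1$ must be controlled by the restriction $H<b<1$.
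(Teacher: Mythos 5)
Your proof is correct, and although it shares the paper's overall skeleton---mean value theorem in $H$ applied entrywise, then summation over the $O(n^2)$ entries to produce the factor $n^2$ that the hypothesis $|H-H_0|<\delta/n$ absorbs---it differs in the one step that carries the real mathematical weight, and differs for the better. The paper disposes of the derivative bound in a single line: the function $g(x)=a_1^x+a_2^x+a_3^x$ with fixed constants $a_1,a_2,a_3$ has bounded derivative on $(0,1)$, hence a Lipschitz constant $L$. But the constants in question are $a_1=(k+1)^2$, $a_2=k^2$, $a_3=(k-1)^2$, so that termwise bound grows like $k^{2H}\log k$ and is not uniform over the diagonals $k=0,\ldots,n-1$; taken literally, the paper's $L$ depends on $n$, and the concluding inequality $\|\Sigma_H-\Sigma_{H_0}\|_F^2\le n^2L^2(H-H_0)^2$ with an $n$-free $L$ does not follow as written. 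Your second-difference representation $\partial_H\rho_\epsilon(k;H)=\phi_H(k+1)-2\phi_H(k)+\phi_H(k-1)=\phi_H''(\xi)$ for $\xi\in(k-1,k+1)$ exploits exactly the cancellation that the paper's one-liner ignores, converting the growing factor $x^{2H}\log x$ into the decaying factor $x^{2H-2}\{2H(2H-1)\log x+4H-1\}$ and yielding a bound uniform in $k$ (with the $k=1$ diagonal checked by hand, as is necessary since $\phi_H$ fails to be $C^2$ at the origin). The price is that your constant requires $H\le b<1$---the bound $1/\{(2-2H)e\}$ blows up as $H\to 1$---but that restriction is already imposed in Theorem \ref{theorem:weak_consistency}, where the proposition is used only for $H$ in a small neighborhood of $H_0\in(a,b)$. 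So your argument is not merely an alternative route: it supplies the uniformity-in-$k$ justification that the paper's proof leaves as a genuine gap.
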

\begin{proof}
The function $g(x) = a_1^x + a_2^x + a_3^x$ for fixed constants $a_1, a_2, a_3$ has bounded derivatives on $(0,1)$. Hence, by the mean value theorem $|g(x) - g(y)| \leq L |x-y|$ for some positive constant $L$. When $|H - H_0|<\delta$, we have, $\|\Sigma_H - \Sigma_{H_0}\|_F^2 = \sum_{i=1}^n \sum_{j=1}^n (\Sigma_{ij, H} - \Sigma_{ij,H_0})^2 \leq n^2L^2 (H - H_0)^2 \leq L^2\delta^2$.  
\end{proof}
In view of Lemma \ref{lm:kl_setup} and Proposition \ref{lm:H_setup} the set $\{\mathrm{KL}(p_0, p) < \epsilon\} \supset \mathcal{B} = \{(\mathbf{f}, H): \|\mathbf{f} - \mathbf{f}_0\|< \delta_1,\, |H - H_0|< \delta_2 \} $ for suitably chosen $\delta_1$ and $\delta_2$. Recall the joint prior $\Pi \equiv \Pi_f \times \Pi_\beta$ from the main document where $\beta = \log(H/(1-H))$. We then have that the prior probability $\Pi\{p: \mathrm{KL}(p,p_0) < \epsilon\} \geq \Pi_f\{f:\|\mathbf{f} - \mathbf{f}_0\|< \delta_1 \} \Pi_\beta \{\beta: |H - H_0|<\delta_2\}$. We trivially have that $$\Pi_\beta \{\beta: |H - H_0|<\delta_2\} > 0$$ because of the full support of the univariate Gaussian distribution on the real line. Combining the above fact with large support property of Gaussian process priors with squared exponential kernels on $\mathcal{F}$, the space of continuously differentiable functions, \citep[Theorem 4.2 - 4.4]{tokdar2007posterior} we have that,
$$\Pi\{p: \mathrm{KL}(p,p_0) < \epsilon\} \geq \Pi_f\{f:\|\mathbf{f} - \mathbf{f}_0\|< \delta_1 \} \Pi_\eta \{\eta: |H - H_0|<\delta_2\} > 0.$$
This proves that for any weak neighborhood $U$ of $p_0$, $\Pi(U^c\mid w_1, \ldots, w_n) \to 0$ in $P_0-$probability, where $P_0$ is the induced measure by $p_0$ \citep{ghosal2017fundamentals}.

\section{Simulation results and plots from main document}
Here we summarize the simulation results from Section 4 of the main document for $\tau = 0.05, 0.15$. Unlike the main document, here we report the error in estimating the marginal probability function. Recall that when the latent process $y(t)$ is formulated as $y(t) = f(t) + B_H(t)$, then the marginal probability of observing an event according to the proposed FRAP model during any arbitrary interval $(t_1, t_2]$ is $\Phi[\{f(t_2) - f(t_1)\}/\tau]$; for this interval we write this quantity as $m(t_1, t_2)$. We estimate this using $\hat{m}(t_1, t_2) = \Phi\{\tilde{f}(t_2) - \tilde{f}(t_1)\}$ where $\tilde{f}(\cdot)$ is the posterior mean of $\underline{f}(\cdot) = f(\cdot)/\tau$. Finally, the cumulative error is computed by summing squares of errors over all intervals and then taking the average, i.e. $e(m, \hat{m}) = n^{-1}\sum_{i=1}^{n} \{m(t_{i-1}, t_{i}) - \hat{m}(t_{i-1}, t_{i})\}$ where $t_0 = 0$. These errors (MSE) averaged over 30 independent replicates together with estimates of the Hurst coefficient $H$ is given in the following table.

\begin{center}
\begin{table}[!ht]
\caption{Mean square error (MSE) in estimating the marginal probability function for different choices of the latent trend function $f(t)$ for the model \eqref{eq:fractional_probit} under hierarchy \eqref{eq:frac_prob_with_priors}. For each $f(t)$ three values of the Hurst exponent are considered: $\{0.5, 0.75, 0.9\}$ together with $\{10, 25, 50\}$ replications. The results reported are averages of 30 independent simulation experiments for each combination.}
\huge
\centering
\scalebox{0.39}{
\begin{tabular}{ccccccccccccc}\toprule
&&& \multicolumn{2}{c}{$f_1(t)$} & \multicolumn{2}{c}{$f_2(t)$} & \multicolumn{2}{c}{$f_3(t)$} & \multicolumn{2}{c}{$f_4(t)$} & \multicolumn{2}{c}{$f_5(t)$} \\
\cmidrule{1-13}
$\tau $& Hurst exponent ($H$) & Replications ($R$)& MSE & $\hat{H}$ & MSE & $\hat{H}$ & MSE & $\hat{H}$ & MSE & $\hat{H}$ & MSE & $\hat{H}$\\
\cmidrule{1-13}
\multirow{9}{*}{0.05}&\multirow{3}{*}{0.5} & 10 & 0.006 & 0.38 & 0.007 & 0.43 & 0.003 & 0.44 & 0.001 & 0.47 & 0.002 & 0.49  \\
 && 25 & 0.0009 & 0.41 & 0.001 & 0.49 & 0.0006 & 0.52 & 0.001 & 0.49 & 0.0008 & 0.52\\
 && 50 & 0.0006 & 0.48 & 0.001 & 0.49 & 0.0003 & 0.51 & 0.0009 & 0.50 & 0.0007 & 0.50\\
 \cmidrule{1-13}
 &\multirow{3}{*}{0.75} & 10 & 0.004 & 0.79 & 0.003 & 0.72& 0.009 & 0.78 & 0.005 & 0.76 & 0.004 & 0.74\\
  && 25 & 0.001 & 0.79 & 0.003 & 0.74 & 0.005 & 0.76 & 0.004 & 0.76 & 0.003 & 0.76\\
 && 50 & 0.0007 & 0.75 & 0.001 & 0.76  & 0.0008 & 0.75 & 0.002 & 0.75 & 0.001 & 0.74\\
 \cmidrule{1-13}
& \multirow{3}{*}{0.9} & 10 & 0.005 & 0.91 & 0.007 & 0.86 & 0.009 & 0.91 & 0.008 & 0.90 & 0.01 & 0.87\\
 & & 25 & 0.0009 & 0.88 & 0.005 & 0.92 & 0.006 & 0.89 & 0.006 & 0.88 & 0.008 & 0.92\\
 && 50 & 0.0007 & 0.89 & 0.002 & 0.89 & 0.002 & 0.88 & 0.003 & 0.89 & 0.002 & 0.88\\
 \cmidrule{1-13}
 \multirow{9}{*}{0.15}&\multirow{3}{*}{0.5} & 10 & 0.008 & 0.41 & 0.008 & 0.44 & 0.008 & 0.55 & 0.005 & 0.48 & 0.006 & 0.48  \\
 && 25 & 0.006 & 0.45 & 0.008 & 0.48 & 0.006 & 0.51 & 0.003 & 0.48 & 0.004 & 0.51\\
 && 50 & 0.001 & 0.49 & 0.003 & 0.50 & 0.004 & 0.51 & 0.0009 & 0.51 & 0.0009 & 0.48\\
 \cmidrule{1-13}
 &\multirow{3}{*}{0.75} & 10 & 0.008 & 0.77 & 0.007 & 0.73& 0.011 & 0.78 & 0.007 & 0.78 & 0.009 & 0.78\\
  && 25 & 0.001 & 0.79 & 0.006 & 0.75 & 0.008 & 0.77 & 0.004 & 0.73 & 0.008 & 0.75\\
 && 50 & 0.0007 & 0.75 & 0.003 & 0.75  & 0.006 & 0.74 & 0.003 & 0.74 & 0.004 & 0.74\\
 \cmidrule{1-13}
& \multirow{3}{*}{0.9} & 10 & 0.009 & 0.92 & 0.001 & 0.86 & 0.013 & 0.92 & 0.010 & 0.91 & 0.012 & 0.89\\
 & & 25 & 0.0009 & 0.88 & 0.008 & 0.90 & 0.009 & 0.88 & 0.008 & 0.87 & 0.009 & 0.91\\
 && 50 & 0.0007 & 0.89 & 0.006 & 0.88 & 0.007 & 0.90 & 0.005 & 0.91 & 0.007 & 0.92\\
 \hline
 \end{tabular}
}
\label{tab:mse}
\end{table}
\end{center}
\begin{figure}
\centering
\begin{subfigure}{0.45\textwidth}
\centering
\includegraphics[width=0.95\linewidth, height=4.5cm]{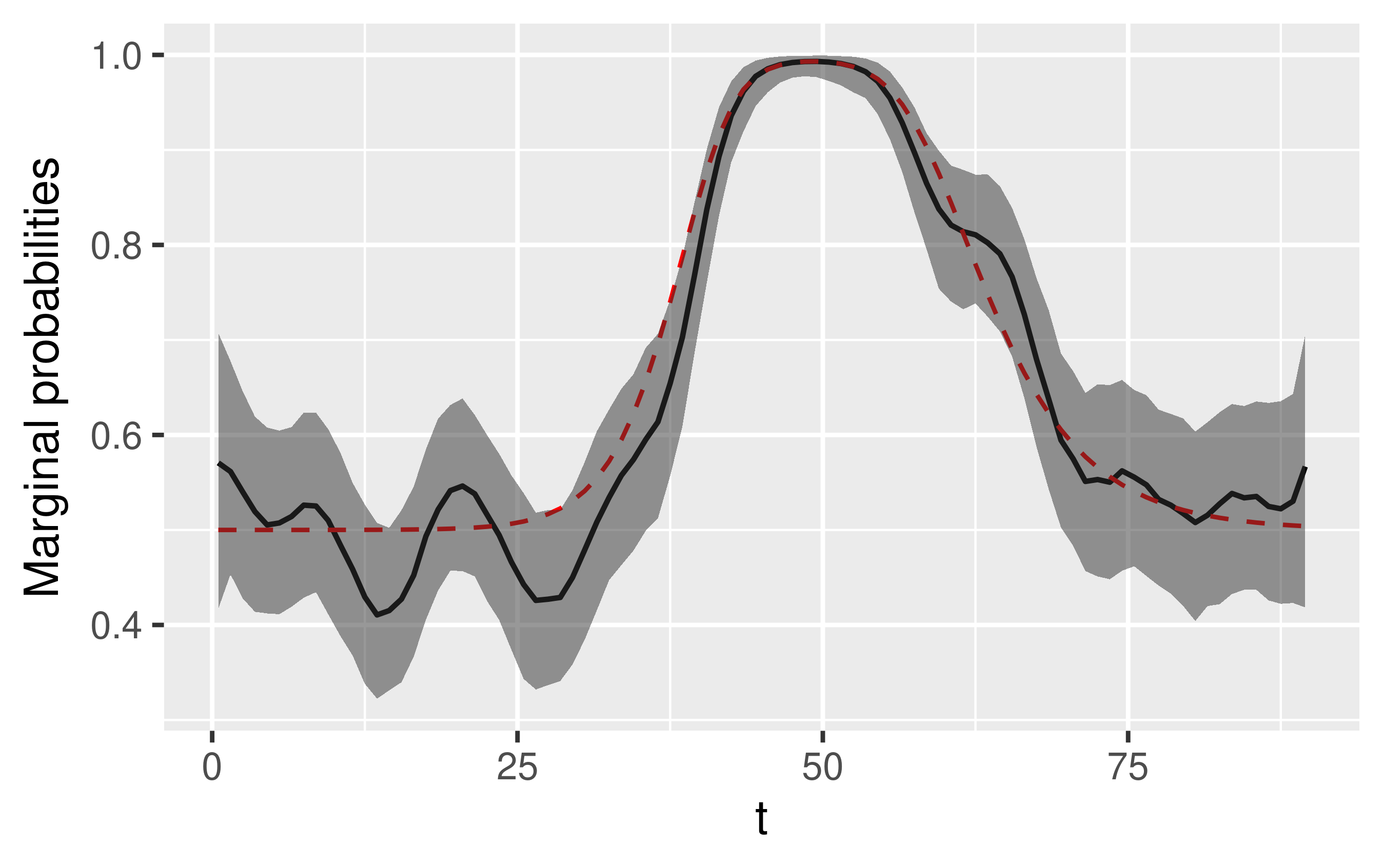}
\caption{$f_2(t)$}
\end{subfigure}
\begin{subfigure}{0.45\textwidth}
\centering
\includegraphics[width=0.95\linewidth, height=4.5cm]{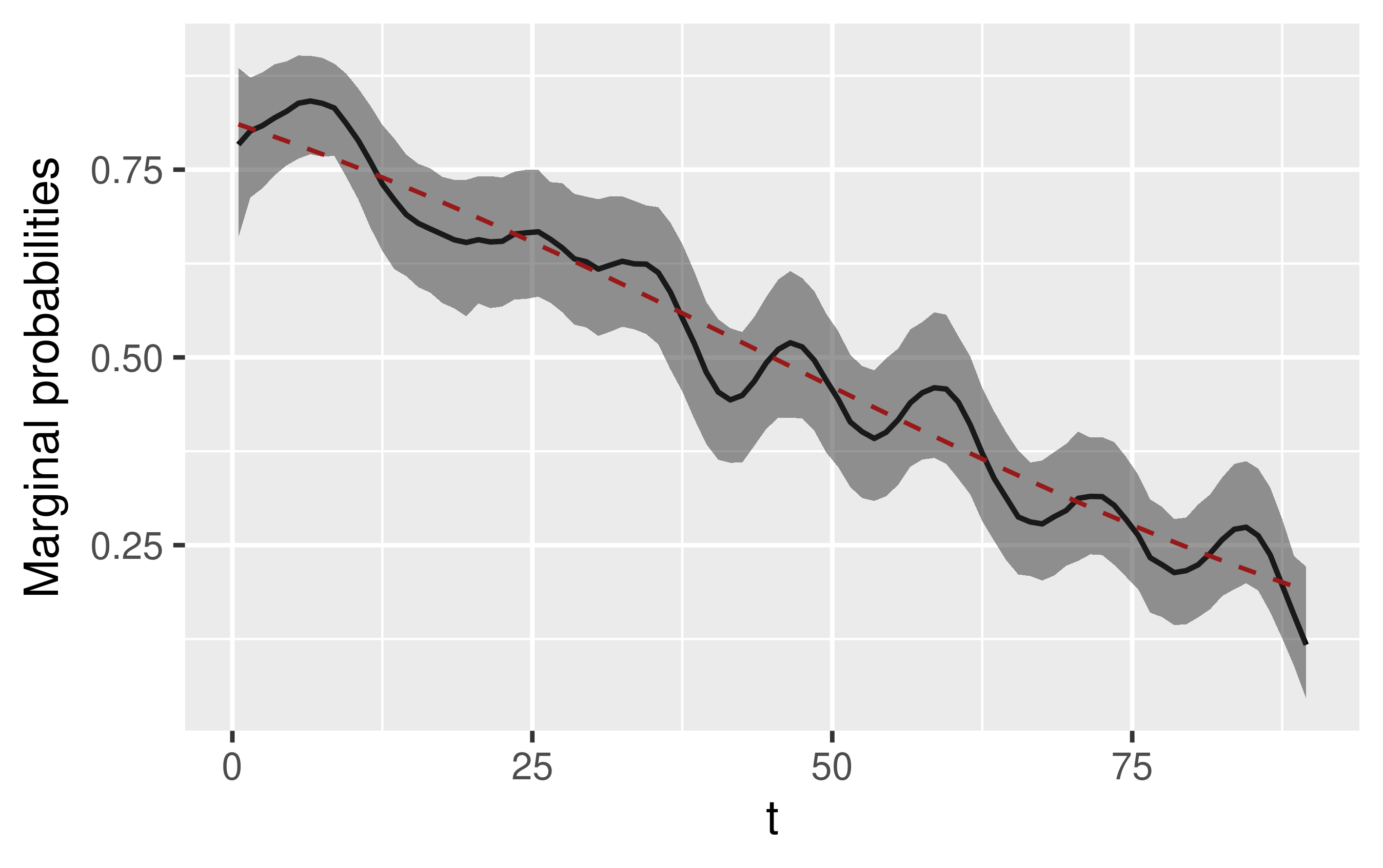}
\caption{$f_3(t)$}
\end{subfigure}
\begin{subfigure}{0.45\textwidth}
\centering
\includegraphics[width=0.95\linewidth, height=4.5cm]{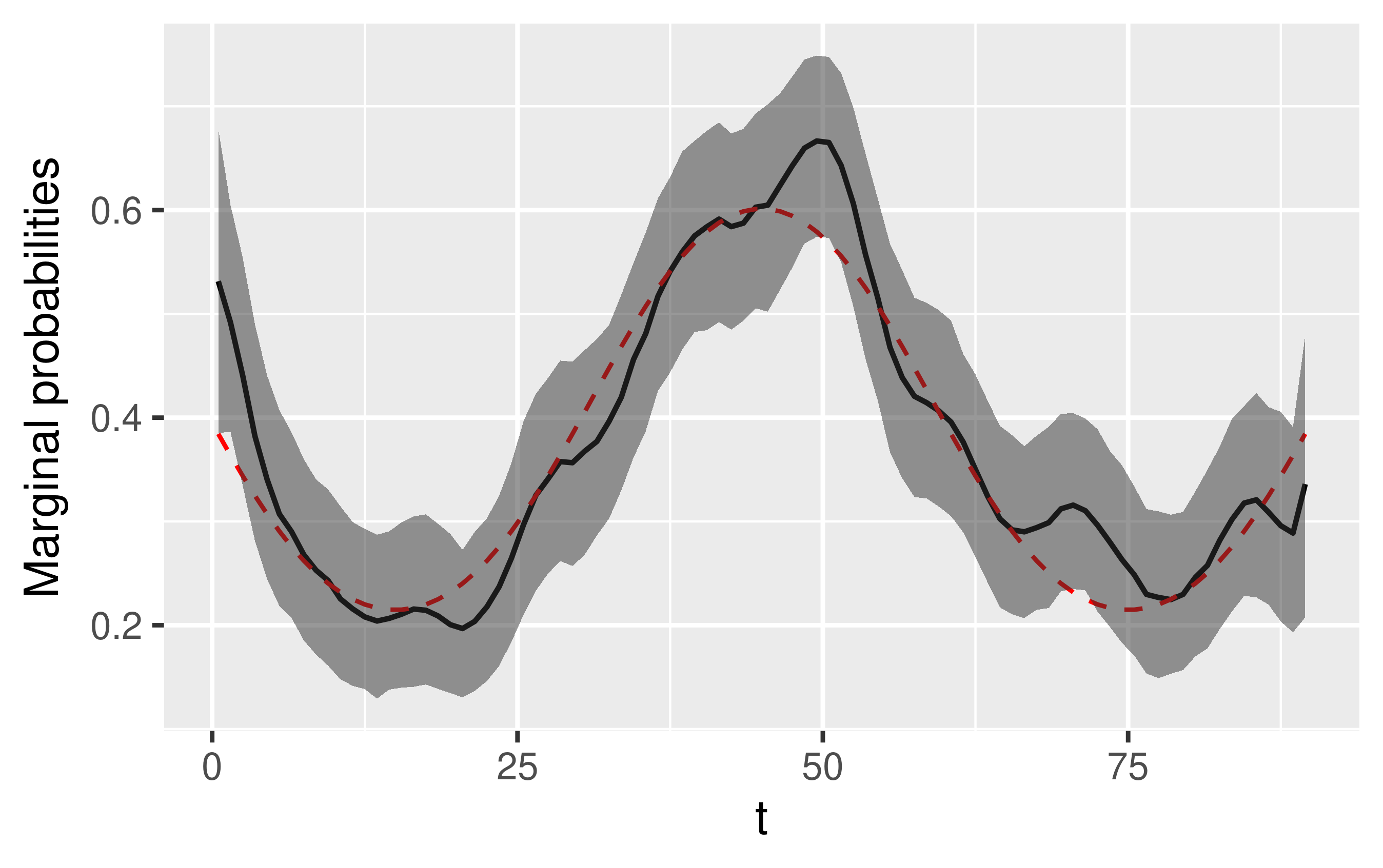}
\caption{$f_4(t)$}
\end{subfigure}
\begin{subfigure}{0.45\textwidth}
\centering
\includegraphics[width=0.95\linewidth, height=4.5cm]{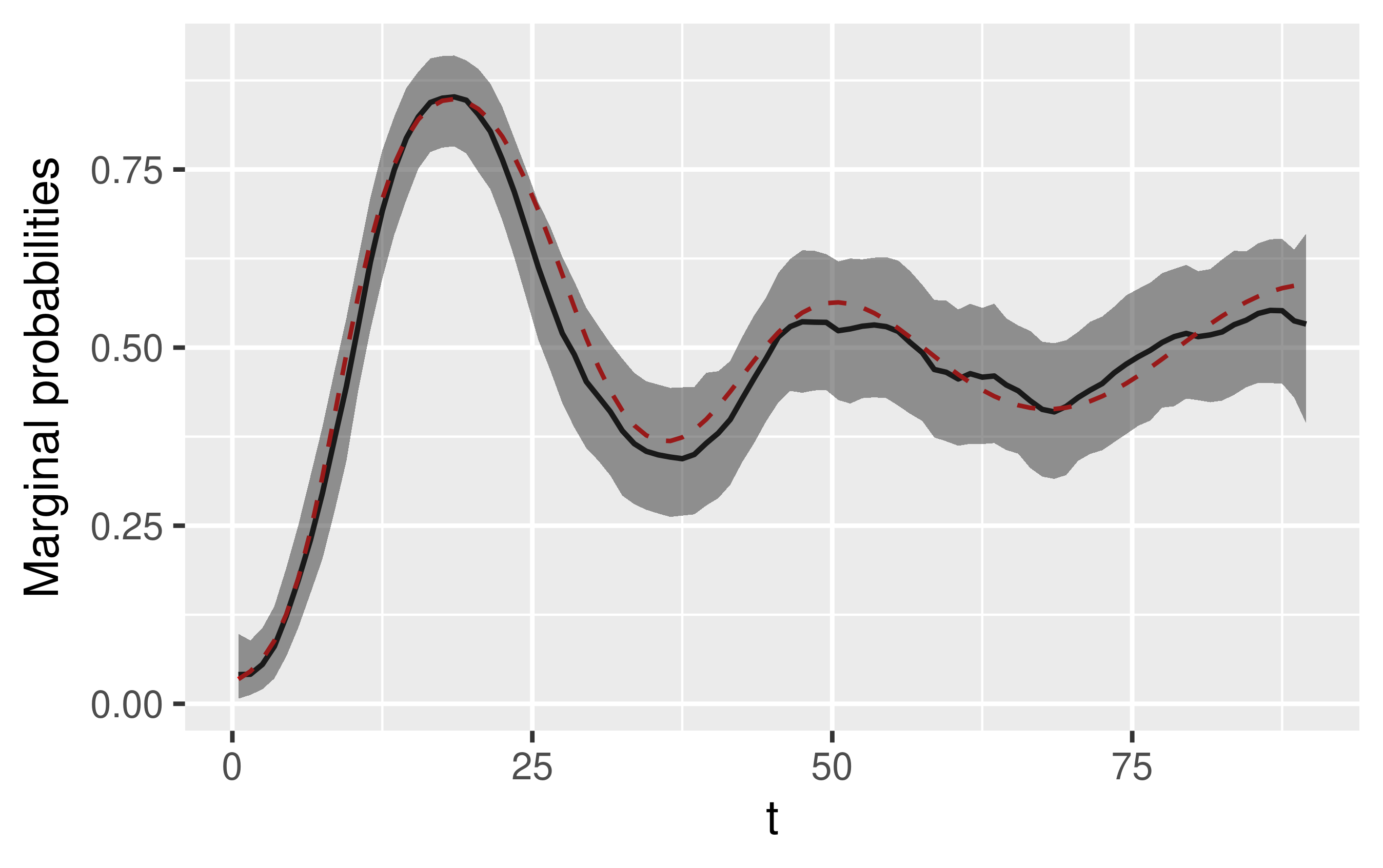}
\caption{$f_5(t)$}
\end{subfigure}

\caption{Figures (a)-(d) show the posterior mean and 95\% credible bands for marginal probabilities in one minute intervals for $f(t) = f_2(t), f_3(t), f_4(t), f_5(t)$ in Section 4 of main document. The values of the Hurst coefficient and the number of replications were $H = 0.75$ and $R = 50$, respectively. Red dashed and black solid lines correspond to the true values and the posterior mean respectively. Gray shaded regions are 95\% credible bands.}
\end{figure}

\end{document}